\newcommand{\F}{{\mathbb{F}}}
\newcommand{\SL}{\operatorname{SL}}
\newcommand {\ignore}[1]  {}
\newcommand{\EE}{{\mathbb{E}}}
\newcommand{\diam}{{\rm diam}}
\newcommand{\eps}{\varepsilon}
\newcommand{\LL}{{\mathcal L}}
\newcommand{\df}{{\, \stackrel{\mathrm{def}}{=}\, }}
\newcommand{\vre}{\varepsilon}
\newcommand{\PSP}{P^{\mathrm{SP}}_{n,k}}
\newcommand{\R}{{\mathbb{R}}}
\newcommand{\TT}{{\mathbb{T}}}
\newcommand{\Z}{{\mathbb{Z}}}
\newcommand{\BB}{{\mathcal{B}}}
\newcommand{\SSS}{{\mathcal{S}}}
\newcommand{\AAA}{{\mathcal{A}}}
\newcommand{\Vol}{{\mathrm{vol}}}
\newcommand{\crog}{{c_{\mathrm{Rog}}}}
\newcommand{\A}{{\mathcal{A}}}
\newcommand{\CC}{{\mathcal{C}}}
\newcommand{\Gr}{{\mathrm{Gr}}}
\newcommand{\Unif}{\mathrm{Uniform}}
\newcommand{\Ind}{\mathds{1}}
\newcommand{\KK}{{\mathcal{K}}}
\newcommand{\mVLp}{{\mathcal{V}^{(p)}_L}}
\newcommand{\mVL}{{\mathcal{V}_L}}
\newcommand{\mBp}{{\mathcal{B}^{(p)}}}
\newcommand{\mBeuc}{{\mathcal{B}}}
\newcommand{\Vp}{{V_{p,n}}}
\newcommand{\GPl}{{G^{(p)}_L}}
\newcommand{\sinc}{{\mathrm{sinc}}}
\newcommand{\rpack}{{r_{\mathrm{pack}}}}
\newcommand{\rcov}{{r_{\mathrm{cov}}}}
\newcommand{\reff}{{r_{\mathrm{eff}}}}
\newcommand{\VC}{\mathcal{V}_{\mathcal{C}}}
\newcommand{\Beq}{\bar{\BB}^{n,k}}
\newcommand{\VcdfL}{g_L}
\newcommand{\VcdfLB}{\underline{g}}
\newcommand{\LBfo}{\underline{g}_{\mathrm{Jensen}}}
\newcommand{\LBco}{\underline{g}_{\mathrm{covering}}}
\newcommand{\VcdfB}{g_{\mathcal{B}}}
\newcommand{\Pesig}{P_{e,\sigma^2}}
\newtheorem{thm}{Theorem}[section]
\newtheorem{lem}[thm]{Lemma}
\newtheorem{prop}[thm]{Proposition}
\newtheorem{cor}[thm]{Corollary}
\newtheorem{remark}[thm]{Remark}
\newif\ifdraft\drafttrue
\title{The Voronoi Spherical CDF for Lattices and Linear Codes: New Bounds for Quantization and Coding}
\author{Or Ordentlich 
\thanks{O. Ordentlich is with the 
Hebrew University of Jerusalem, Israel (\texttt{or.ordentlich@mail.huji.ac.il}). This work was supported by ISF 1641/21 and ISF 2878/25. }
}
\begin{document}
	\date{\today}
	\maketitle
	
\begin{abstract}
For a lattice/linear code, we define the Voronoi spherical cumulative density function (CDF) as the CDF of the $\ell_2$-norm/Hamming weight of a random vector uniformly distributed over the Voronoi cell. Using the first moment method together with a simple application of Jensen's inequality, we develop lower bounds on the expected Voronoi spherical CDF of a random lattice/linear code. Our bounds are valid for any finite dimension and are quite close to a  ball-based lower bound. They immediately translate to new non-asymptotic upper bounds on the normalized second moment and the error probability of a random lattice over the additive white Gaussian noise channel, as well as new non-asymptotic upper bounds on the Hamming distortion and the error probability of a random linear code over the binary symmetric channel. In particular, we show that for most lattices in $\mathbb{R}^n$ the second moment is  greater than that of a Euclidean ball with the same covolume only by a  $\left(1+O(\frac{1}{n})\right)$ multiplicative factor. Similarly, for most linear codes in $\mathbb{F}_2^n$ the expected Hamming distortion is greater than that of a corresponding Hamming ball only by an additive universal constant.
\end{abstract}
	
\section{Introduction and main results}

This paper studies two fundamental quantities associated with lattices in $\R^n$, as well as their counterparts for linear codes in $\F_2^n$. In particular, for lattices we study the normalized second moment (NSM) and the resilience to Gaussian iid noise. For linear codes, we study the analogous quantities: the Hamming distortion (expected Hamming distance of a uniform point on $\F_2^n$ to the code), and the resilience to Bernoulli iid noise. Those quantities are instrumental for characterizing the performance of the lattice/linear code as a quantizer, as well as its usefulness for reliable transmission of information over an additive white Gaussian noise (AWGN) channel  for the lattice case, and over a binary symmetric channel (BSC) for the linear code case.

\medskip

We now briefly describe our main results:
\begin{itemize}
\item The normalized second moment (NSM) of a unit covolume lattice $L\subset \R^n$ is defined as $G_L=\frac{1}{n}\EE\|U_L\|_2^2$, where $U_L$ is uniformly distributed over the Voronoi cell of $L$ (when $\mathrm{covol}(L)\neq 1$, one further normalizes by $(\mathrm{covol}(L))^{2/n}$).
The NSM is trivially lower bounded by the second moment of the uniform distribution on a unit-volume Euclidean ball. We show that for a random lattice in $\R^n$ drawn from the natural distribution on the space of lattices (the Haar-Siegel probability distribution $\mu_n$, to be defined in the sequel), the expected NSM exceeds this lower bound only by a factor of $1+O(\frac{1}{n})$. The exact statement is in Theorem~\ref{thm:excpectedN2Mbound} and its Corollary~\ref{cor:Markov3}, that provide explicit expressions for any $n$ and an extension for the $p$th moment of a lattice is proved in Theorem~\ref{thm:excpectedNpMbound}. In particular, Theorem~\ref{thm:excpectedN2Mbound} shows the existence of a lattice with NSM close to that of the ball by a factor $1+O(\frac{1}{n})$ (this is actually true for almost all lattices, Lemma~\ref{lem:Markov2}).
Prior to this work, the tightest asymptotic upper bound on the NSM of the ``best'' lattice in $\R^n$  was $1+O(\frac{\log n}{n})$ larger than that of the ball~\cite{ZamirFeder2002}. This bound relied on upper bounding the covering density of the best lattice. Since the covering density of any lattice in $\R^n$ is known to be $\Omega(n)$~\cite{CoxeterFewRogers}, this technique is inherently limited, and cannot yield bounds with factor smaller than $1+O(\frac{\log n}{n})$ of the ball's second moment. We note that not only does Theorem~\ref{thm:excpectedN2Mbound} improve the optimal asymptotic scaling of the NSM, but it improves upon the best known upper bounds even for moderate values of $n$. In particular, for $36\leq n\leq 48$ it attains tighter upper bounds on the NSM than the best known ones, as reported in~\cite{agrell2023best}.

Furthermore, a canonical upper bound on the NSM of the ``best'' infinite constellation with a given point density was derived by Zador~\cite{zador1982asymptotic}. To date, it was not known whether or not this bound can be attained by lattices. In Theorem~\ref{thm:NSMzador} we show that for large $n$ there are lattices attaining Zador's upper bound (to within a $e^{-\Omega(n)}$ additive term). Furthermore, in Section~\ref{subsec:CSZador} we show that both Zador's upper bound for non-lattice quantizers and our new upper bound from Theorem~\ref{thm:NSMzador} for lattice quantizers approach Conway and Sloane's conjectured lower bound on the NSM  of any quantizer~\cite{conway1985lower} at rate $O\left( \frac{\log^2 n}{n^2}\right)$. 
This can be seen as an evidence that for large $n$ lattice quantizers are as good as any other quantizer, as was essentially postulated in Gersho's conjecture~\cite{gersho79}.
\item The error probability $P_{e,\sigma^2}(L)$ of a lattice $L$ at noise level $\sigma^2$ is defined as the probability that a Gaussian iid noise with zero mean and variance $\sigma^2$ falls outside of the Voronoi cell of $L$. In Theorem~\ref{thm:latticeSPUB2} we prove a novel upper bound on the expected error probability of a random lattice (drawn from $\mu_n$). This bound is numerically shown to be similar to the best known previous upper bound due to~\cite{poltyrev94AWGN,izf12}, but we were not able to compare the two bounds analytically.
\item For a linear code $\CC\subset \F_2^n$ of dimension $0\leq k\leq n$, the Voronoi region $\mathcal{V}_\CC$ is the set of all points in $\F_2^n$ that are closer under Hamming distance to $0$ than to any other codeword in $\CC$. See Section~\ref{sec:linearcodes} for the precise definition.
The Hamming distortion $D_\CC$ is defined as the expected Hamming weight of a vector uniformly distributed over the Voronoi cell. This is also the expected Hamming distortion when quantizing a uniform random vector on $\F_2^n$ to its nearest point in $\CC$. The Hamming distortion of any such $\CC$ is trivially lower bounded by the expected Hamming weight of the uniform distribution on a quasi Hamming ball of size $2^{n-k}$ (see Lemma~\ref{lem:sphereDistortion}). In Theorem~\ref{thm:DcConstantGap} we show that for a random linear code of dimension $k$, the expected Hamming distortion exceeds this lower bound only by an additive universal constant (independent of $n$). To the best of our knowledge, such a tight characterization was not known for linear codes prior to this work.
\item The error probability $P_e(\CC,p)$ of a linear code $\mathcal{C}\subset\F_2^n$ of dimension $0\leq k\leq n$ at noise level $p$ is defined as the probability that a $\mathrm{Bernoulli}(p)$ iid noise falls outside of the Voronoi cell. Equivalently, this is the block error probability of the $[n,k]$ linear code $\CC$ over the BSC with crossover probability $p$. In Theorem~\ref{thm:PeRCUB} we prove a new upper bound on the expected error probability of a random linear code. Near capacity, this bound is numerically seen to improve upon the best known finite-blocklength error probability upper bounds for the BSC due to Poltyrev~\cite{poltyrev94} and Polyanskiy-Poor-Verd\'{u}~\cite{ppv10}. 
\end{itemize}

\subsection{Technical innovation.} We give a high-level overview of the main ideas used in the proofs for the results on lattices. The same ideas are used for the analysis of linear codes.

The Voronoi cell $\mVL$ of a lattice $L\subset\R^n$ is the set of all points in $\R^n$ that are closer to $0$ than to any other lattice point in $L$. Many of the most important figures of merit of $L$ are defined through its Voronoi cell. For instance, the packing radius is the radius of the largest ball contained in $\mVL$, the covering radius is the maximum norm of a point in $\mVL$, the second moment of the lattice is the expected energy of a random vector uniformly distributed on $\mVL$, and the lattice error probability is the probability that iid Gaussian noise falls outside of $\mVL$. Thus, studying the geometry of $\mVL$ is key to analyzing the above quantities. However, the Voronoi cell is a polytope dictated by an exponential number of lattice points, and therefore characterizing it exactly becomes intractable as $n$ increases. 

Our first observation is that many of the important lattice figures of merit, including the four mentioned above, are rotation invariant, and therefore only depend on $\mVL$ through the function
\begin{align*}
\VcdfL(r)=\frac{\left|r\mBeuc\cap \mVL\right|}{|\mVL|}, ~~~r>0,
\end{align*}
which we refer to as the Voronoi spherical cumulative density function (CDF) of the lattice. Here, $\mBeuc$ is a unit radius Euclidean ball, and $|\cdot|$ denotes the volume of a set in $\R^n$ (for a discrete set $\A$, we abuse notation and use $|\A|$ to denote the number of points in $\A$). Note that if $U_L\sim\Unif(\mVL)$, then $\VcdfL(r)=\Pr(\|U_L\|_2\leq r)$ is the CDF of its $\ell_2$-norm, justifying the name Voronoi spherical CDF.
Under the criteria mentioned above, a lattice is considered ``good'' if its Voronoi cell is ``close'' to a Euclidean ball with the same volume. In terms of the Voronoi spherical CDF, this corresponds to having large $\VcdfL(r)$ for all $r>0$. We therefore seek lower bounds on $\VcdfL(r)$, which in turn, translate to upper bounds on the NSM and error probability of the lattice $L$. 

Assume without loss of generality that $L$ has unit covolume, so that $|\mVL|=1$. Define the projection of $x\in\R^n$ to $\mVL$ as $\pi_L(x)=x-Q_L(x)$, where $Q_L(x)$ maps $x$ to its nearest neighbor in the lattice $L$. The projection of the ball $r\mBeuc$ to $\mVL$ is defined as $\pi_L(r\mBeuc)=\{\pi_L(x)~:~x\in r\mBeuc\}$. Since $\|\pi_L(x)\|\leq \|x\|$, we have that $r\mBeuc\cap \mVL=\pi_L(r\mBeuc)$. Thus, computation of $\VcdfL(r)=\left|r\mBeuc\cap \mVL \right|$ reduces to computation of $|\pi_L(r\mBeuc)|$.

In~\cite[Proof of Proposition 3.6]{ORW23}, it was observed that
\begin{align*}
|\pi_L(r\mBeuc)|&=\int_{x\in r\mBeuc}\frac{1}{|(x+L)\cap r\mBeuc|}dx\\
&=\int_{x\in r\mBeuc}\frac{1}{1+|(L\setminus\{0\})\cap (r\mBeuc-x)|}dx.
\end{align*}    
Evaluation of the expression above for a particular lattice $L$ still seems challenging. However, for a random lattice $L$, we can use the convexity of $t\mapsto\frac{1}{1+t}$ together with Jensen's inequality and obtain
\begin{align*}
\EE_L[\VcdfL(r)]\geq \int_{x\in r\mBeuc}\frac{1}{1+\EE\left[|(L\setminus\{0\})\cap (r\mBeuc-x)|\right]}dx.    
\end{align*}
By Siegel's summation formula (Minkowski-Hlawka Theorem)~\cite{SiegelFormula}, we have that if $L$ is drawn from the natural probability distribution $\mu_n$ on the space of unit covolume lattices in $\R^n$, then
\begin{align*}
\EE\left[|(L\setminus\{0\})\cap (r\mBeuc-x)|\right]=|r\mBeuc|,~~~\forall x\in r\mBeuc.
\end{align*}
Consequently,
\begin{align*}
\EE_{\mu_n}[\VcdfL(r)]\geq\frac{|r\mBeuc|}{1+|r\mBeuc|}.    
\end{align*}
Note that the ``best'' (highest) Voronoi spherical CDF we could hope for is $\min\{|r\mBeuc|,1\}$, corresponding a perfect unit-volume ball. Our lower bound on $\EE_{\mu_n}[\VcdfL(r)]$ is quite close to this utopian behavior, and consequently, using it for controlling the NSM and error probability of a random lattice yields tight bounds. See Figure~\ref{fig:VorCDF} for an illustration.

A remarkable feature of our analysis is that it relies solely on the first moment method, completely circumventing the need to deal with the intricate statistical dependencies between $k$-tuples of points ($k>2$) of a random lattice.

\subsection{Related work}

Our analysis of the NSM of a random lattice, as well as its error probability, involves studying the distribution of the distance between $X\sim\Unif([-a,a)^n)$ for large $a>0$, and a lattice $L\subset\R^n$, as characterized by the Voronoi spherical CDF $\VcdfL(r)$. Hamprecht and Agrell~\cite{hamprecht2003exploring} have previously considered the complementary function $1-\VcdfL(r)$ and called it the \emph{radial function}. They have also observed that it completely characterizes the packing and covering radii of a lattice, as well as its second moment. They further evaluated and plotted this function for $\Z^n,A_2,A^*_4,D_4,A^*_8,E_8,A^*_{16}$ and $\Lambda_{16}$. For non-lattice point configurations $L\subset\R^n$, Torquato and co-authors~\cite{torquato1990nearest,torquato2010reformulation} studied the distribution of the random variable $V=\min_{y\in L}\|X-y||$. In particular, they defined the \emph{void exclusion probability} $E_V(r)=1-\Pr(V\leq r)$. When $L$ is a lattice, this specializes to the radial function $1-\VcdfL(r)$. When the constellation $L\subset \R^n$ is generated via a random process called \emph{ghost random sequential addition (RSA)}, which does not result in a lattice, Torquato used an inclusion-exclusion argument, together with the statistical properties for ghost RSA~\cite{torquato2006exactly}, to derive an upper bound on $\EE[E_V(r)]$ as a function of the point density. This bound immediately translates to an upper bound on normalized second moment of ghost RSA processes. We  also note that for the lattice case, a random variable related to $V$ was studied in~\cite{haviv2009note}, where the distance to the lattice was normalized by the covering radius, such that the normalized value is in $[0,1]$.

\medskip

The NSM of a lattice $L\subset\R^n$ is a well-studied object, as it characterizes its performance as a quantizer under mean squared error distortion (when entropy coding is further applied for describing the obtained lattice point in bits). This is true in the limit of high resolution quantization for any continuous source satisfying mild regularity conditions, and if the source is Gaussian iid, this is true for any distortion level~\cite{ramibook}. Consequently, characterizing/approximating the optimal NSM $G_n$ at any dimension $n$ is a topic that received considerable attention~\cite{fejes1959representation,cs82vor,barnes1983optimal,conway1984voronoi,conway1985lower,ConwaySloane,ZamirFeder2002,agrell1998optimization,lyu2022better,agrell2023best,ling2024rejection,agrell2024glued,pook2024parametric,agrell2025optimization}. It is well known that the NSM is lower bounded by the second moment of a uniform distribution on a unit volume Euclidean ball. Using the trivial fact that the second moment of a lattice is upper bounded by its squared (normalized) covering radius,\footnote{The second moment of any lattice is also at least $1/3$ of its squared (normalized) covering radius, and this bound is attained with equality for $L=\Z^n$. This was conjectured by~\cite{haviv2009note} and proved in~\cite[Lemme 3.2]{autissier2013lemme} and in~\cite{magazinov2020proof}.} and the well-known fact that there exist lattices in $\R^n$ whose covering radius is only $1+O(\frac{\log n}{n})$ greater than the radius of the corresponding effective ball~\cite{Rogers_again}, it was deduced in\footnote{In fact, this result is attributed to Poltyrev in~\cite{ZamirFeder2002}}~\cite{ZamirFeder2002} that $G_n$ approaches the ball lower bound at rate $1+O(\frac{\log n}{n})$. To the best of the author's knowledge, this was the best known asymptotic upper bound on $G_n$ prior to this work. The bounds presented in this work not only improve the asymptotic convergence rate of~\cite{ZamirFeder2002} from $1+O(\frac{\log n}{n})$ to $1+O(\frac{1}{n})$, but are effective already in relatively small dimensions, and improve upon the best known NSM even for $36\leq n\leq 48$.  
The optimal normalized second moment $G_n$ is also related to the quantity $b_{2,n}$ defined by Zador~\cite{zador1982asymptotic}, which quantifies the ``quantization efficiency''~\cite{gersho79,ramibook}. The quantity $b_{2,n}$ essentially measures the smallest mean squared error (MSE) distortion of any quantizer in $\R^n$ with unit point density for a source $X\sim\Unif([-a,a)^n)$ in the limit $a\to\infty$. In particular, $G_n\geq b_{2,n}$. Zador have proved an upper bound on $b_{2,n}$ that essentially follows from drawing the quantizer reconstruction points according to a Poisson point process in $\R^n$ with unit intensity~\cite[Lemma 5]{zador1982asymptotic} (see also~\cite{torquato2010reformulation} for an analysis of the quantization efficiency for the ghost RSA point process). The resulting quantizer is not a lattice, and to date it was not known if lattices can achieve this upper bound. Our Theorem~\ref{thm:NSMzador} shows that for large $n$ there are indeed lattices that achieve Zador's bound (up to an $e^{-\Omega(n)}$ additive term). In fact, a well-known conjecture due to Gersho~\cite{gersho79} postulates that the quantization cells of the optimal quantizer are all congruent to some polytope (as is of course the case for lattice quantizers), and our Theorem~\ref{thm:NSMzador} is perhaps another evidence for the validity of this conjecture. 

\medskip

The application of lattices as codebooks for reliable communication over the AWGN channel has a rich history, dating back to Blake~\cite{blake1971leech}, de Buda~\cite{deBuda75}, Conway and Sloane~\cite{cs83VorCodes} and continuing with~\cite{forney1988cosetI,forney1988cosetII,Forney89II,poltyrev94AWGN,loeliger97,urbanke1998lattice,linder2002corrected,erez2004achieving,urietal,swannack2005,LMK06,izf12,ordentlich2016simple}, as well as many other works. See also, e.g.,~\cite{zamir2002nested,kochman2009joint,nazer2011compute,ordentlich2014approximate} and~\cite[Chapter 12]{ramibook} for applications of lattices for multi-user problems. In the classic communication setup, a rate-$R$ power-constrained codebook with $2^{nR}$ vectors is constructed by taking the intersection of $L\subset\R^n$ and a shaping region $\SSS\subset \R^n$ (which is ideally a Euclidean ball, or Euclidean ball-like) chosen such as to enforce the power constraint. In order to single-out the geometry of $L$, ignoring effects of shaping, Poltyrev~\cite{poltyrev94AWGN} studied the tradeoff between the point density of an infinite constellation and its error probability (in a properly defined sense). When the infinite constellation is a lattice, this corresponds to the question: what is the smallest probability, among all lattices with unit covolume, that an iid $\mathcal{N}(0,\sigma^2)$ Gaussian noise leaves the Voronoi cell? Poltyrev derived upper bounds on the expected error probability of a random lattice~\cite{poltyrev94AWGN} (see also~\cite{loeliger97}), and later an equivalent bound was derived in simpler form in~\cite{izf12}. It was shown in~\cite{izf12} that for $\sigma^2$ smaller than, but close to, the normalized squared radius of a ball with unit volume, the error probability is greater than that of a ball only by a constant. See~\cite[Chapter 13]{ramibook}, as well as~\cite{IZ12}, for a comprehensive analysis of the error probability of a random lattice. In Theorem~\ref{thm:latticeSPUB2} we prove a different upper bound on the expected error probability of a random lattice. This bound is close to the bound given in~\cite{izf12}, but it seems (numerically) that one bound does not dominate the other in general. We also note in passing that for a non-lattice unit density infinite constellation, tighter upper bounds on the error probability are derived in~\cite{anantharam2015capacity}.

\medskip

The Hamming distortion of linear codes in $\F_2^n$ was first studied by Goblick~\cite{goblick1963coding}, who proved that such codes can asymptotically attain Shannon's rate-distortion function for a symmetric binary source under Hamming distortion. In fact, Goblick's proof relied on showing that for every $0\leq r\leq n$ there exists of a linear code $\CC\subset\F_2^n$ for which the Voronoi spherical CDF $Q_\CC(r)$, as defined in Section~\ref{sec:linearcodes}, is large. This follows by choosing the generators of the code sequentially, and ensuring that every new generator that is added sufficiently increases the number of points in $\F_2^n$ that are $r$-covered. This type of argument is also used for showing the existence of linear codes with small covering radius~\cite{cohen1997covering}. Our bounding technique, on the other hand, enables to lower bound $\EE_\CC[Q_{\CC}(r)]$ \emph{simultaneously} for all $0\leq r\leq n$, where the expectation is with respect to the natural random linear code ensemble. It does not involve a sequential selection of the generators, and results in a short and simple derivation of an upper bound on the expected Hamming distortion of a linear code, that is greater than the ball lower bound only by an additive constant. To the best of our knowledge, such a constant gap result was not previously known for \emph{linear codes}. We also note that the problem of designing practical quantizers based on linear codes whose expected Hamming distortion approaches that of the corresponding ball at rate $1+o(1)$ was studied quite intensively. See e.g.~\cite{viterbi1974trellis,matsunaga2003coding,wainwright2010lossy,korada2010polar,mazumdar2015local}. 

\medskip

Bounding the error probability of a code with $M=2^k$ codewords for transmission over $n$ uses of a binary symmetric channel with crossover probability $p$ ($\mathrm{BSC}(p)$), and in particular when the code is linear, is one of the most classic topics in coding theory, see e.g.,~\cite{gallager1968information,poltyrev94,barg2002random,ppv10} for a very partial list of references. While the vast majority of known bounds are based on some variation of the union bound (perhaps, after discarding rare events) see~\cite[Chapter 1]{sason2006performance}, the new upper bound we derive in Theorem~\ref{thm:PeRCUB} avoids the union bound, as well as the weight distribution of the code, altogether. Instead, it bounds the error probability directly through the geometry as the Voronoi region, as captured by the Voronoi spherical CDF. As we only analyze the expected error probability of a random code from the natural ensemble, and do not explore expurgation, our bound is most useful for rates close to capacity, where the sphere-packing lower bound is exponentially tight. In this regime, the best known bound prior to this work is that of Poltyrev~\cite{poltyrev94}, which is also equivalent to the RCU bound of~\cite{ppv10}. Near capacity our new bound is numerically shown to be tighter than those bounds.

\subsection{Paper structure} All results regarding lattices are developed in Section~\ref{sec:lattices}, whereas all results regarding linear codes are developed in Section~\ref{sec:linearcodes}. The two sections are self-contained, though the techniques and ideas used in the two sections are analogous. Thus, a reader interested only in lattices but not in linear codes (or vice versa), may read only Section~\ref{sec:lattices} (or only Section~\ref{sec:linearcodes}).

\subsection{Acknowledgments}
The author is grateful to Erik Agrell, Bruce Allen, Uri Erez, Bo'az Klartag, Yury Polyanskiy, Oded Regev and Barak Weiss for many valuable discussions that improved this paper. A special thanks is due to Erik Agrell for his valuable observation that our new upper bound in Theorem~\ref{thm:NSMzador} is close to the conjectured lower bound of Conway and Sloane, which led to the analysis in Subsection~\ref{subsec:CSZador}.
We also thank the anonymous reviewers and associate editor for many helpful suggestions.

\section{Lattices}
\label{sec:lattices}
For a unit covolume lattice $L\subset \mathbb{R}^n$, we define the Voronoi region as
\begin{align}
\mVL\df\left\{x\in \R^n~:~\|x\|_2\leq \|x-y\|_2,~\forall y\in L\setminus\{0\} \right\},   
\end{align}
where $\|x\|_2=\left(\sum_{i=1}^n x_i^2\right)^{1/2}$ is the $\ell_2$ norm, and ties are broken in a systematic manner, such that $\mVL$ is a fundamental cell of $L$. Let $U_L\sim\Unif(\mVL)$. The Voronoi spherical cumulative density function (CDF) of $L$ is defined as
\begin{align}
\VcdfL(r)\df \Pr(\|U_L\|_2\leq r),~~~0\leq r< \infty.    
\end{align}
Many important properties of the lattice $L$ are encoded in its Voronoi spherical CDF. Before specifying how $\VcdfL(r)$ encodes these properties, we will need some definitions.

Recall that we denote the volume of a measurable set $\AAA\subset\R^n$ by $|\AAA|\df\Vol(\AAA)$. The unit-radius closed Euclidean ball in $\R^n$ is denoted
\begin{align}
\mBeuc\df\left\{x\in \R^n~:~\|x\|_2\leq 1 \right\},    
\end{align}
and its volume is
\begin{align}
V_n\df |\mBeuc|=\frac{\pi^{n/2}}{\Gamma(1+\frac{n}{2})},
\label{eq:Vndef}
\end{align}
where $\Gamma(\cdot)$ is the Gamma function. Let
\begin{align}
\reff\df V_n^{-\frac{1}{n}},    
\end{align}
be such that $|\reff \mBeuc|=1$.
Note that
\begin{align}
\VcdfL(r)=\frac{1}{|\mVL|}\left| r\mBeuc\cap \mVL\right|=\frac{\left| r\mBeuc\cap \mVL\right|}{V_n r^n}\cdot\left(\frac{r}{\reff}\right)^n,   
\label{eq:gasIntersecVol}
\end{align}
where the last equality follows since $L$ has unit covolume.

\begin{prop}
\label{prop:fundviacdf}
Let $L\subset \R^n$ be a unit covolume lattice in $\R^n$. Then
\begin{enumerate}
    \item The packing radius of $L$ is
    \begin{align}
     \rpack(L)=\sup\left\{r>0~:~\VcdfL(r)=\left(\frac{r}{\reff} \right)^n \right\}  
    \end{align}
    \item The covering radius of $L$ is
    \begin{align}
     \rcov(L)=\inf\left\{r>0~:~\VcdfL(r)= 1 \right\}  
    \end{align}
    \item The normalized second moment (NSM) of $L$ is
    \begin{align}
     G_L\df \frac{\EE\|U_L\|_2^2}{n}=\frac{1}{n}\int_{0}^{\infty}1-\VcdfL(\sqrt{r}) dr.
    \end{align}
    \label{fund:NSM}
    \item Let $Z\sim\mathcal{N}(0,I_n)$, The $\sigma^2$-Gaussian error probability of $L$ is 
    \begin{align}
     \Pesig(L)\df\Pr(\sigma Z\notin \mVL)=\EE\left[1- \frac{\VcdfL(\sqrt{\sigma^2 W})}{\left(\frac{\sigma^2 W}{\reff^2} \right)^{\frac{n}{2}}} \right],   
    \end{align}
    where $W\sim\chi^2_{n+2}$ is a chi-squared random variable with $n+2$ degrees of freedom.
    \label{fund:GaussPe}
\end{enumerate}
\end{prop}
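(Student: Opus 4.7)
The plan is to treat each item by unfolding definitions, with the first three parts essentially one-line arguments and the last part requiring a short computation via the co-area formula and integration by parts.

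For (1), the packing radius $\rpack(L)$ is by definition the supremum of $r$ such that $r\mBeuc \subseteq \mVL$. This inclusion holds iff $r\mBeuc \cap \mVL = r\mBeuc$, and since $|\mVL|=1$, it rewrites as $\VcdfL(r) = |r\mBeuc| = V_n r^n = (r/\reff)^n$. For $r > \rpack(L)$ the ball pokes out of $\mVL$, so $\VcdfL(r) < (r/\reff)^n$. For (2), the covering radius is the infimum of $r$ with $\mVL \subseteq r\mBeuc$, equivalently $r\mBeuc \cap \mVL = \mVL$, i.e., $\VcdfL(r) = 1$. For (3), apply the layer-cake identity $\EE[Y] = \int_0^\infty \Pr(Y > t)\, dt$ to $Y = \|U_L\|_2^2$; since $\Pr(\|U_L\|_2^2 > t) = 1 - \VcdfL(\sqrt{t})$, this gives the claimed formula after dividing by $n$.

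For (4), I would start from
\[
\Pr(\sigma Z \in \mVL) = \int_{\mVL} \frac{1}{(2\pi\sigma^2)^{n/2}} e^{-\|x\|_2^2/(2\sigma^2)}\, dx,
\]
and exploit that the integrand is spherically symmetric. Using $\VcdfL(r) = |r\mBeuc \cap \mVL|$ and the co-area formula, this becomes the Stieltjes integral $\int_0^\infty (2\pi\sigma^2)^{-n/2} e^{-r^2/(2\sigma^2)}\, d\VcdfL(r)$, since the Stieltjes measure $d\VcdfL(r)$ records the spherical slice of $\mVL$ at radius $r$. Integration by parts then produces
\[
\Pr(\sigma Z \in \mVL) = \int_0^\infty \VcdfL(r)\, \frac{r}{(2\pi)^{n/2}\sigma^{n+2}} e^{-r^2/(2\sigma^2)}\, dr,
\]
the boundary terms vanishing since $\VcdfL(0)=0$ and the Gaussian decays at infinity. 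The last move is to recognize this integrand as $\VcdfL(r)/(r/\reff)^n$ multiplied by the density of $R = \sigma\sqrt{W}$ with $W \sim \chi^2_{n+2}$; the identification uses $\reff^n = 1/V_n = \Gamma(n/2+1)/\pi^{n/2}$. Writing the resulting integral as the corresponding expectation and subtracting from $1$ delivers $\Pesig(L)$ in the stated form. The main (modest) obstacle is this last bookkeeping step; in particular, the shift from $n$ to $n+2$ degrees of freedom is exactly what absorbs the extra factor of $r$ produced by integration by parts, which is why the statement features $\chi^2_{n+2}$ rather than the more naive $\chi^2_n$ one might have expected from $\|\sigma Z\|_2^2$.
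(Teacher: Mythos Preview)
Your proposal is correct. Items (1)--(3) are handled exactly as in the paper: unwind the definitions via $\VcdfL(r)=|r\mBeuc\cap\mVL|$ and, for (3), apply the layer-cake formula to $\|U_L\|_2^2$.

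For item (4), your argument and the paper's reach the same destination by slightly different routes. The paper (in its Proposition~\ref{prop:gaussianmeasureasEE}) writes $\mu_{\sigma^2}(\mVL)=C_n\,\EE_{U_L}\bigl[e^{-\|U_L\|_2^2/(2\sigma^2)}\bigr]$, applies the layer-cake formula to the $[0,1]$-valued random variable $e^{-\|U_L\|_2^2/(2\sigma^2)}$, and then performs the change of variables $r=e^{-w/2}$ to expose the $\chi^2_{n+2}$ density. You instead pass to the Stieltjes integral $\int_0^\infty (2\pi\sigma^2)^{-n/2}e^{-r^2/(2\sigma^2)}\,d\VcdfL(r)$ via co-area and integrate by parts in $r$. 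The two manipulations are equivalent (layer-cake is integration by parts in disguise), and in both cases the key observation is that the extra factor of $r$ produced converts the $\chi_n$-type radial density into a $\chi_{n+2}$-type one, using $\Gamma(1+n/2)=\Gamma((n+2)/2)$ and $\reff^{-n}=V_n=\pi^{n/2}/\Gamma(1+n/2)$. Your presentation is arguably a bit more direct for this specific statement; the paper's formulation has the advantage of being phrased for an arbitrary measurable set $\KK$, which it reuses elsewhere.
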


The first two items in the proposition above are trivial, whereas the third item has appeared in different form in~\cite[eq. 75]{torquato2010reformulation} (see also~\cite[Section III]{allen2022performance}) for analyzing performance of non-lattice infinite constellations, and re-deriving Zador's formula. For completeness, we bring the proof of Proposition~\ref{prop:fundviacdf} in Appendix~\ref{app:contCDFproperties}. The fourth item is a consequence of the following more general proposition, whose proof is also brought in Appendix~\ref{app:contCDFproperties}.

\begin{prop}
 Let $Z\sim\mathcal{N}(0,I_n)$, let $\KK\subset \R^n$ be a measurable subset of $\R^n$, and define
\begin{align}
\mu_{\sigma^2}(\KK)\df \Pr(\sigma Z\in\KK).    
\end{align}
For $U_{\KK}\sim\Unif(\KK)$, define
\begin{align}
g_\KK(r)=\Pr(\|U_\KK\|_2\leq r)=\frac{|\KK\cap r \mBeuc|}{|\KK|}.    \label{eq:genCDF_LC}
\end{align}
 Then
 \begin{align}
\mu_{\sigma^2}(\KK)=\EE\left[ \frac{g_\KK\left(\sqrt{\sigma^2 W}\right)}{\left(\frac{\sigma^2 W}{\reff(\KK)^2} \right)^{\frac{n}{2}}} \right],
 \end{align}
 where $W\sim\chi^2_{n+2}$ is a chi-squared random variable with $n+2$ degrees of freedom, and $\reff(\KK)$ is such that $|\reff(\KK)\cdot \mBeuc|=|\KK|$. 
 \label{prop:gaussianmeasureasEE}
\end{prop}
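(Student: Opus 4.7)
\smallskip
\noindent\textbf{Proof sketch.} The plan is to reduce both sides of the identity to the same one-dimensional integral in a radial variable, using a layer-cake (tail) decomposition of the Gaussian density and then matching constants.

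Starting from $\mu_{\sigma^2}(\KK) = (2\pi\sigma^2)^{-n/2}\int_\KK e^{-\|x\|_2^2/(2\sigma^2)}\,dx$, I would first write the Gaussian weight as a tail via the elementary identity $e^{-a} = \int_a^\infty e^{-u}\,du$ with $a = \|x\|_2^2/(2\sigma^2)$, and then apply Fubini's theorem (the integrand is nonnegative, so this is automatic). The indicator $\{u > \|x\|_2^2/(2\sigma^2)\}$ translates to $\{x \in \sigma\sqrt{2u}\,\mBeuc\}$, and after swapping the order of integration and substituting $w=2u$ one obtains
\begin{align*}
\int_\KK e^{-\|x\|_2^2/(2\sigma^2)}\,dx \;=\; \int_0^\infty e^{-u}\,F(\sigma\sqrt{2u})\,du \;=\; \tfrac{1}{2}\int_0^\infty e^{-w/2}\,F(\sigma\sqrt{w})\,dw,
\end{align*}
where $F(r) \df |\KK\cap r\mBeuc| = |\KK|\,g_\KK(r)$. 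This yields the representation
$\mu_{\sigma^2}(\KK) = \frac{|\KK|}{2(2\pi\sigma^2)^{n/2}}\int_0^\infty e^{-w/2}\,g_\KK(\sigma\sqrt{w})\,dw$.

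Next, I would expand the right-hand side of the proposition by inserting the $\chi^2_{n+2}$ density $f_W(w) = w^{n/2}e^{-w/2}/\bigl(2^{(n+2)/2}\Gamma((n+2)/2)\bigr)$. The factor $w^{n/2}$ in $f_W$ cancels precisely against the $(\sigma^2 w/\reff(\KK)^2)^{-n/2}$ factor appearing in the integrand, leaving
\begin{align*}
\EE\!\left[\frac{g_\KK(\sqrt{\sigma^2 W})}{(\sigma^2 W/\reff(\KK)^2)^{n/2}}\right]
\;=\; \frac{\reff(\KK)^n}{\sigma^n\, 2^{(n+2)/2}\Gamma((n+2)/2)}\int_0^\infty e^{-w/2}\, g_\KK(\sigma\sqrt{w})\,dw.
\end{align*}
Substituting $|\KK| = \reff(\KK)^n V_n$ and the identity $V_n\Gamma((n+2)/2) = \pi^{n/2}$ coming from~\eqref{eq:Vndef}, the leading constant collapses to $|\KK|/\bigl(2(2\pi\sigma^2)^{n/2}\bigr)$, which matches the expression obtained in the previous step.

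The only nontrivial part of the argument is the constant bookkeeping, namely checking $V_n\Gamma((n+2)/2)=\pi^{n/2}$ and $\pi^{n/2}\sigma^n\cdot 2^{(n+2)/2}=2(2\pi\sigma^2)^{n/2}$. Conceptually, the appearance of $n+2$ (rather than $n$) degrees of freedom is engineered so that the volume-normalizing weight $(\sigma^2 W/\reff(\KK)^2)^{-n/2}$ exactly converts an $\chi^2_{n+2}$ expectation back into the $\chi^2_2$-type (i.e., exponential) expectation that the layer-cake representation of the Gaussian density naturally produces; this is the only mildly delicate step, but it is essentially forced by dimensional analysis.
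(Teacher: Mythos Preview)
Your proof is correct and follows essentially the same approach as the paper's: both arguments apply a layer-cake/tail decomposition to the Gaussian density, swap the order of integration to produce the spherical CDF $g_\KK$, and then recognize the resulting kernel as the $\chi^2_{n+2}$ density after matching constants. The only cosmetic difference is that the paper first rewrites the Gaussian integral as $|\KK|\cdot\EE_U[e^{-\|U\|_2^2/(2\sigma^2)}]$ and applies the tail formula to this bounded random variable (with the change of variables $r=e^{-w/2}$), whereas you apply the identity $e^{-a}=\int_a^\infty e^{-u}\,du$ directly inside the spatial integral; these are equivalent manipulations.
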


\subsection{Estimates for the expected Voronoi spherical CDF}

We say that a CDF $F(r)$ majorizes a CDF $G(r)$, and denote $G\preceq F$, if $G(r)\leq F(r)$ for all $r>0$. 
All four lattice properties above are ``improved'' if the Voronoi spherical CDF $\VcdfL$ is replaced by a majorizing CDF. In particular, from~\eqref{eq:gasIntersecVol} it is clear that for any unit covolume lattice $\VcdfL \preceq \VcdfB$ where
\begin{align}
\VcdfB(r)=\min\left\{\left(\frac{r}{\reff}\right)^n,1 \right\}
\label{eq:gBall}
\end{align}
is the CDF function for the norm of a random variable uniformly distributed over the Euclidean ball of radius $\reff$ (such that its volume is the same as that of $\mVL$). Thus, the Euclidean ball provides ``in-existence'' bounds for the $4$ quantities above. Our goal in this paper is to develop new ``existence'' bounds for $G_L$ and $\Pesig$. To that end, we find a CDF $\VcdfLB(r)$ for which
\begin{align}
\mathbb{E}[\VcdfL]\succeq \VcdfLB,    
\end{align}
where the expectation is with respect to some probability measure on $\LL_n$, the space of lattices in $\R^n$ with covolume $1$. If $\VcdfB(r)-\VcdfLB(r)$ is small for all $r>0$, we will get useful bounds for the NSM and the Gaussian error probability of a ``typical'' lattice.

\medskip

Let $\TT_L=\R^n/L$ be the quotient torus, and let $\pi_{L}:\R^n\to\TT_{L}$ be the quotient map.  Note that $\mVL$ is isomorphic to $\TT_L$. Thus, one can think of the quotient map as $\pi_L(x)=x-Q_L(x)$, where $Q_L:\R^n\to L$ maps each point in $\R^n$ to $y\in L$, such that $x\in y+\mVL$. Let $m_{L}:\TT_L\to[0,1]$ denote the Haar probability distribution on $\TT_L$ (equivalently, one can think of $m_L$ as the uniform distribution on $\mVL$). Namely, for any $\mathcal{A}\subset\TT_L$ we have $m_{L}(\mathcal{A})=|\A|$.
Observe that 
\begin{align}
\VcdfL(r)=\Pr(\|U_L\|_2\leq r)=|r\mBeuc\cap \mVL| =m_{L}(\pi_{L}(r\mBeuc)),   
\label{eq:RprojgC}
\end{align}
where the last equality holds since, when we identify $\mVL$ with $\TT_L$, it holds that $\|\pi_{L}(x)\|_2\leq \|x\|_2$ for any $x\in\R^n$. In particular, for all $x\in r\mBeuc$ we have that $\pi_{L}(x)\in r\mBeuc$. Thus, $\pi_{L}(r\mBeuc)\subset ( r\mBeuc\cap \mVL)$. On the other hand, for any point $x\in\mVL$ we have that $\pi_L(x)=x$, and consequently $\pi_{L}(r\mBeuc)\supset\pi_{L}(r\mBeuc\cap \mVL)=(r\mBeuc\cap \mVL)$. Thus, $\pi_{L}(r\mBeuc)=(r\mBeuc\cap \mVL)$. The measure $m_{L}(\pi_{L}(\KK))$ of the projection of a compact set $\KK\subset\R^n$ to the torus has been extensively studied in works ranging from the classic papers of Rogers~\cite{Rogers_bound} and Schmidt~\cite{Schmidt-admissible} in the middle of the previous century up to the recent progress in~\cite{ORW21,ORW23}. Using these works, we now develop two lower bounds on $\EE[\VcdfL(r)]$ under two different distributions on $L$, as we elaborate below.

The collection $\LL_n$ of lattices of covolume one in $\R^n$ can be identified with the quotient $\SL_n(\R)/\SL_n(\Z)$, via the map
\begin{align}
g \SL_n(\Z) \mapsto g \Z^n \ \  (g \in \SL_n(\R)).
\label{eq: mapsto}    
\end{align}
This identification endows $\LL_n$ with a natural probability
measure; namely, there is a unique $\SL_n(\R)$-invariant Borel
probability measure on $\LL_n$. Following~~\cite{ORW21}, we will refer to this measure as the
{\em Haar-Siegel measure} and denote it by
$\mu_n$. Due to its $\SL_n(\R)$-invariance, the measure $\mu_n$ is referred to as the natural measure on the space of lattices in the literature.

\begin{thm}
For $L\sim\mu_n$ we have 
\begin{align}
 \mathbb{E}[\VcdfL(r)]\geq \LBfo(r)\df\frac{\left(\frac{r}{\reff}\right)^{n}}{1+\left(\frac{r}{\reff}\right)^{n}},~~\forall r>0.
\label{eq:gLjesnenbound}
\end{align}
\label{thm:Jensen}
\end{thm}
Using the characterization~\eqref{eq:RprojgC}, the proof of Theorem~\ref{thm:Jensen} is an immediate consequence of the derivation in~\cite[Proof of Proposition 3.6]{ORW23}. This derivation relies only on the first moment method (Siegel's summation formula/the Minkowski-Hlawka-Siegel Theorem) and an application of Jensen's inequality. For completeness, we repeat the derivation from~\cite[Proof of Proposition 3.6]{ORW23} below.

\begin{proof}[Proof of Theorem~\ref{thm:Jensen}]
Observe that for any compact set $\KK\subset\R^n$ it holds that
\begin{align}
m_L(\pi_{L}(\KK))&=\int_{x\in\KK}\frac{1}{|(x+L)\cap \KK|}dx\\
&=\int_{x\in\KK}\frac{1}{1+|(L\setminus\{0\})\cap (\KK-x)|}dx.
\label{eq:Rnprojectionvolume}
\end{align}    
The function $t\mapsto \frac{1}{1+t}$ is convex in the regime $t>0$, and we can therefore write
\begin{align}
\mathbb{E}[m_L(\pi_L(\KK))]
&= \mathbb{E}\left[\int_{x\in \KK}\frac{1}{1+|(L\setminus\{0\})\cap (\KK-x)|}dx\right]\nonumber\\
&= \int_{x\in \KK}\mathbb{E}\left[\frac{1}{1+|(L\setminus\{0\})\cap (\KK-x)|}\right]dx\label{eq:fubini}\\
&\geq \int_{x\in \KK}\frac{1}{1+\mathbb{E}\left[|(L\setminus\{0\})\cap (\KK-x)|\right]}dx\label{eq:jensen}\\
&=\int_{x\in \KK}\frac{1}{1+|\KK|}dx\label{eq:siegel}\\
&=\frac{|\KK|}{1+|\KK|},\label{eq:Eml_lb}
\end{align}
where~\eqref{eq:fubini} follows from Fubini's Theorem,~\eqref{eq:jensen} from Jensen's inequality and~\eqref{eq:siegel} from Siegel's summation formula. Taking $\KK=r\mBeuc$ and applying this in conjunction with~\eqref{eq:RprojgC} we obtain
\begin{align}
\EE[\VcdfL(r)]=\EE[m_L(\pi_L(r\mBeuc))]\geq \frac{|r\mBeuc|}{1+|r\mBeuc|}= \frac{\left(\frac{r}{\reff}\right)^{n}}{1+\left(\frac{r}{\reff}\right)^{n}},
\end{align}
as claimed.
\end{proof}

\begin{figure*}[t]
\centering

\begin{subfigure}{0.49\textwidth}
    \includegraphics[width=\textwidth]{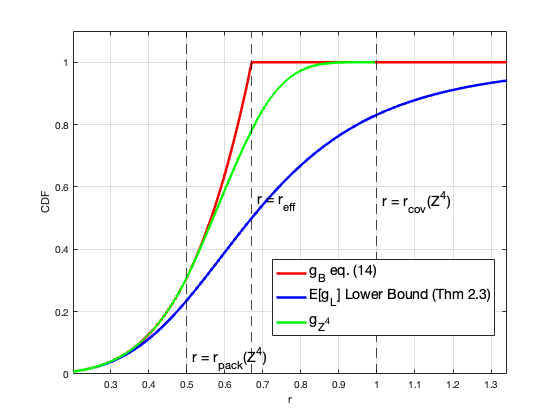}
    \caption{$n=4$}
    \label{fig:CDF4}
\end{subfigure}
\hfill
\begin{subfigure}{0.49\textwidth}
    \includegraphics[width=\textwidth]{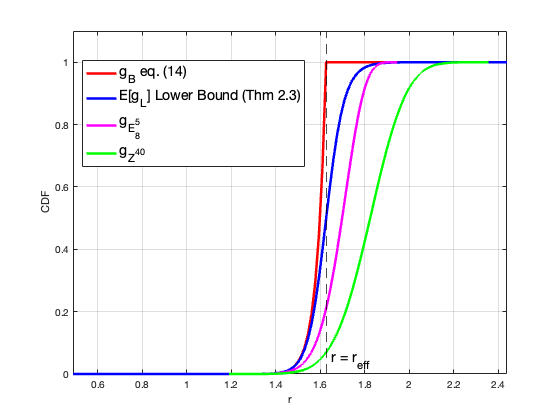}
    \caption{$n=40$}
    \label{fig:CDF40}
\end{subfigure}
\caption{Various bounds and exact expressions for the the Voronoi Spherical CDF. For dimension $n=4$ we plot $\VcdfB(r)$ from~\eqref{eq:gBall}, the lower bound on $\EE_{\mu_n}[g_L(r)]$ from~\eqref{eq:gLjesnenbound}, as well as the exact Voronoi Spherical CDF for $L=\Z^{4}$. We also indicate the points $\rpack(\Z^4)$ where $g_{\Z^4}(r)$ and $\VcdfB(r)$ diverge, $\rcov(\Z^4)$ where $g_{\Z^{4}}(r)=1$ for the first time, and $\reff$ which is common to all unit covolume lattices, and this is where $\VcdfB(r)=1$ for the first time. For dimension $n=40$. We plot $\VcdfB(r)$ from~\eqref{eq:gBall}, the lower bound on $\EE_{\mu_n}[g_L(r)]$ from~\eqref{eq:gLjesnenbound}, as well as the exact Voronoi Spherical CDF for $L=E_8^{\otimes 5}$ ($5$ copies of $E_8$) and $L=\Z^{40}$. It is seen that our bound from~\eqref{eq:gLjesnenbound} is not very effective for $n=4$, but is already quite effective for $n=40$.}
\label{fig:VorCDF}
\end{figure*}

Figure~\ref{fig:VorCDF} illustrates our bound from Theorem~\ref{thm:excpectedN2Mbound} for $n=40$, along with $\VcdfB(r)$ for the same dimension, which upper bounds $g_L(r)$ for any lattice $L\subset\R^n$. We also plot $g_L(r)$ for two particular lattices in $R^{40}$: the integer lattice $\Z^n$ and $E_8^{\otimes 5}\subset\R^{40}$ which consists of $5$ copies of the lattice $E_8$.

While $\LBfo(r)$ is quite close to $\VcdfB(r)$ for $r<\reff$, it is strictly smaller than $1$ for all $r>\reff$. This does not stem from a weakness of our bounding technique. In fact, it is a consequence of the known fact that for $L\sim\mu_n$, the probability that $L$ avoids a set $\KK\subset\R^n$ of large volume $|\KK|\gg 1$, decays only as $1/|\KK|$~\cite{Strombergtwo}. On the other hand, recent work~\cite{ORW21} showed that the covering radius of a random lattice $L\sim\mu_n$ is $\left(1+O\left(\frac{\log n}{n}\right)\right)\reff$ with probability $1-e^{-\Omega(n)}$. In light of this, we consider a distribution $\tilde{\mu}_n$ on the space of lattices $\LL_n$, obtained by conditioning $\mu_n$ on the event that $\rcov(L)$ is small.

In particular, let
\begin{align}
\eta = \eta_n \df \frac{n}{8} \log \left( \frac{4}{3} \right),
\label{eq: def eta}
\end{align}
and define the event
\begin{align}
\mathcal{E}_{\eta}=\left\{L\in\LL_n~:~\left(\frac{\rcov(L)}{\reff}\right)^n\leq 4 n^2 \eta\right\}.    
\end{align}
We define the distribution $\tilde{\mu}_n$ on $\LL_n$ as
\begin{align}
\tilde{\mu}_n\df \mu_{n|\mathcal{E}_\eta}.  
\label{eq:tildemundef}
\end{align}

Relying on classic results by Rogers~\cite{Rogers_bound} and Schmidt~\cite{Schmidt-admissible} and the more recent result~\cite{ORW21} on the covering density of a random lattice, we prove the following in Appendix~\ref{app:tildemubound}.

\begin{thm}
Assume $n\geq 13$, and define $\eta$ and $\tilde{\mu}_n$ as in~\eqref{eq: def eta} and~\eqref{eq:tildemundef}, respectively. For $L\sim\tilde{\mu}_n$ we have $\mathbb{E}[\VcdfL(r)]\geq \LBco(r)$, $\forall r>0$, where
\begin{align*}
 &\LBco(r)\nonumber\\
 &=\begin{cases}
 1-e^{-\left(\frac{r}{\reff} \right)^n}- 23 \cdot e^{-\frac{\eta}{2}} & \left(\frac{r}{\reff} \right)^n<\frac{\eta}{2}\\
 1- 24 \cdot e^{-\frac{\eta}{2}} & \frac{\eta}{2}\leq \left(\frac{r}{\reff} \right)^n< 4n^2\eta\\
 1 & \left(\frac{r}{\reff} \right)^n \geq 4n^2\eta
 \end{cases}\\
 &\geq 1- \left(e^{-\left(\frac{r}{\reff} \right)^n}+24e^{-\frac{\eta}{2}}\right)\Ind\left\{\left(\frac{r}{\reff}\right)^n<4n^2\eta\right\}
\end{align*}
\label{thm:gcovbound}
\end{thm}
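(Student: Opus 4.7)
The plan is to first establish an unconditional estimate of the form $\EE_{\mu_n}[1-\VcdfL(r)] \leq e^{-(r/\reff)^n}+(\text{small error})$ using Rogers/Schmidt-type Poisson approximations for $\mu_n$-random lattices, and then transfer this to the conditional measure $\tilde\mu_n$ using the \cite{ORW21} covering estimate $\mu_n(\mathcal{E}_\eta^c)\leq O(e^{-\eta/2})$. The piecewise form of $\LBco$ will then follow by combining this bound with monotonicity of $\VcdfL$ (middle case) and the defining property of $\mathcal{E}_\eta$ (upper case).

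For the unconditional step, I would start from $1-\VcdfL(r)=|\mVL\setminus r\mBeuc|$ and apply Fubini to obtain
\begin{align*}
\EE_{\mu_n}[1-\VcdfL(r)]=\int_{\{y\in\R^n:\|y\|>r\}}\Pr_{\mu_n}(y\in\mVL)\,dy.
\end{align*}
For fixed $y\in\R^n$, the event $y\in\mVL$ is equivalent to $(L\setminus\{0\})\cap B^\circ(y,\|y\|)=\emptyset$, where $B^\circ(y,\|y\|)$ has volume $(\|y\|/\reff)^n$; by the $\SL_n(\R)$-invariance of $\mu_n$ this probability depends only on $\|y\|$. I would then invoke the Rogers/Schmidt control on the counting function of a $\mu_n$-random lattice (building on~\cite{Rogers_bound,Schmidt-admissible}) to write $\Pr_{\mu_n}(y\in\mVL)\leq e^{-(\|y\|/\reff)^n}+\epsilon_n(\|y\|)$, with an explicit error term $\epsilon_n$ arising from a second-moment calculation on the number of lattice points. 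Integrating in spherical coordinates and using the clean identity $\int_r^\infty nV_n\rho^{n-1}e^{-(\rho/\reff)^n}d\rho=e^{-(r/\reff)^n}$ produces a bound of the desired form $\EE_{\mu_n}[1-\VcdfL(r)]\leq e^{-(r/\reff)^n}+\tilde E_n(r)$.

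To transfer from $\mu_n$ to $\tilde\mu_n=\mu_{n|\mathcal{E}_\eta}$, I use that $0\leq 1-\VcdfL\leq 1$ together with the \cite{ORW21} bound $\mu_n(\mathcal{E}_\eta^c)\leq c_0e^{-\eta/2}$ to get
\begin{align*}
\EE_{\tilde\mu_n}[1-\VcdfL(r)]=\frac{\EE_{\mu_n}[(1-\VcdfL(r))\Ind_{\mathcal{E}_\eta}]}{\mu_n(\mathcal{E}_\eta)}\leq\frac{\EE_{\mu_n}[1-\VcdfL(r)]}{1-c_0e^{-\eta/2}}.
\end{align*}
Combined with the previous step and after absorbing lower-order terms, this yields the first case $\LBco(r)=1-e^{-(r/\reff)^n}-23e^{-\eta/2}$ in the regime $(r/\reff)^n<\eta/2$. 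The second regime $\eta/2\leq(r/\reff)^n<4n^2\eta$ follows by monotonicity of $\VcdfL$: evaluating the first-case bound at $r_0$ with $(r_0/\reff)^n=\eta/2$ gives $\EE_{\tilde\mu_n}[\VcdfL(r_0)]\geq 1-e^{-\eta/2}-23e^{-\eta/2}=1-24e^{-\eta/2}$, and $\VcdfL$ is nondecreasing. The third regime $(r/\reff)^n\geq 4n^2\eta$ is immediate: on $\mathcal{E}_\eta$ the definition gives $\mVL\subseteq\rcov(L)\mBeuc\subseteq r\mBeuc$, so $\VcdfL(r)=1$ pointwise under $\tilde\mu_n$.

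The principal obstacle is obtaining the Rogers/Schmidt Poisson approximation for the \emph{translated} balls $B(y,\|y\|)$, which are not origin-symmetric, with error small enough to be absorbed into a constant multiple of $e^{-\eta/2}$. Concretely, one needs sharp second-moment control of the $\mu_n$-random lattice counting function on asymmetric bodies and careful tracking of the constants in the resulting tail integral, so that the combined error lands within the stated $23\,e^{-\eta/2}$. The restriction $n\geq 13$ is most likely the threshold at which these Poisson approximations become tight enough for the displayed constants to be achievable; the analogous constants for small $n$ would either deteriorate or require additional case work.
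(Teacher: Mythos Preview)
Your overall architecture --- an unconditional estimate for $\EE_{\mu_n}[1-\VcdfL(r)]$, transfer to $\tilde\mu_n$ via the \cite{ORW21} covering bound, then monotonicity for the middle regime and the defining property of $\mathcal{E}_\eta$ for the top regime --- matches the paper, and your handling of the second and third cases is essentially identical to the paper's.

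The divergence, and the gap, is in the unconditional step. The paper does not pass through pointwise avoidance probabilities. It uses directly that $1-\VcdfL(r)=\vre(r\mBeuc,L)$, where $\vre(J,L)\df 1-m_L(\pi_L(J))$ is the density of points not covered by $L+J$, and applies the Rogers--Schmidt estimate once to the single ball $J=r\mBeuc$ of volume $(r/\reff)^n\leq\eta$, yielding $\EE_{\mu_n}[1-\VcdfL(r)]\leq e^{-(r/\reff)^n}+7e^{-\eta}$ with no integration.

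Your route of integrating $\Pr_{\mu_n}(y\in\mVL)$ over $\|y\|>r$ hits a genuine obstacle, but not the one you name. The translated balls $B(y,\|y\|)$ pose no problem for Schmidt's theorem: they contain no antipodal pair (the parallelogram law forces $x=0$ whenever both $\|y\pm x\|\leq\|y\|$), and Schmidt's result applies to such sets regardless of origin-symmetry. The actual issue is that your integral ranges over balls of arbitrarily large volume, whereas the Schmidt remainder is only $O(e^{-\eta})$ for volumes $V\leq\eta$; for larger $V$ it contains a term of order $(3/4)^{n/2}e^{3V}$, which explodes. A constant error integrated over an infinite spherical shell diverges, so the tail $\{y:(\|y\|/\reff)^n>\eta\}$ must be handled separately. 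But that tail integral equals $\EE_{\mu_n}[1-\VcdfL(\rho_0)]$ at $\rho_0=\eta^{1/n}\reff$, and bounding it requires precisely the direct Rogers--Schmidt density estimate the paper uses --- at which point the pointwise integration is redundant.

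A smaller point on the transfer: your multiplicative form $\EE_{\tilde\mu_n}[1-\VcdfL]\leq \EE_{\mu_n}[1-\VcdfL]/(1-c_0 e^{-\eta/2})$ is only usable when the denominator is positive, and with $c_0=16$ this fails for all $13\leq n\lesssim 150$. The paper instead uses the additive transfer $\EE_{\tilde\mu_n}[\VcdfL(r)]\geq \EE_{\mu_n}[\VcdfL(r)]-\Pr(L\notin\mathcal{E}_\eta)$, valid for every $n$, which together with $7e^{-\eta}+16e^{-\eta/2}\leq 23e^{-\eta/2}$ delivers the stated constant.
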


We note that Theorem~\ref{thm:gcovbound} above is the only result in this paper that requires further tools beyond the first moment method.

\subsection{Bounds on the NSM}

We derive two upper bounds on $\EE[G_L]$: one for $L\sim\mu_n$ and one for $L\sim\tilde{\mu}_n$.

\begin{thm}
\label{thm:excpectedN2Mbound}
 Let $n$ be an integer and and $L\sim\mu_n$. We have that
 \begin{align}
 \EE[G_L]\leq \frac{1}{n V_n^{\frac{2}{n}}}\cdot\frac{1}{\sinc(2/n)},
 \end{align}
 where
 \begin{align}
 \sinc(t)\df\frac{\sin(\pi t)}{\pi t}.
 \end{align}
\end{thm}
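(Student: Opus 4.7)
The plan is to combine the integral representation of the NSM from Proposition~\ref{prop:fundviacdf}(\ref{fund:NSM}) with the Jensen-based lower bound on $\EE[\VcdfL(r)]$ from Theorem~\ref{thm:Jensen}, and then reduce the resulting one-dimensional integral to a classical Mellin-type evaluation.

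First, since $G_L = \frac{1}{n}\int_0^\infty (1-\VcdfL(\sqrt{r}))\,dr$ by Proposition~\ref{prop:fundviacdf}(\ref{fund:NSM}), Tonelli's theorem gives
\begin{equation*}
\EE[G_L] = \frac{1}{n}\int_0^\infty \bigl(1-\EE[\VcdfL(\sqrt{r})]\bigr)\,dr.
\end{equation*}
By Theorem~\ref{thm:Jensen}, $\EE[\VcdfL(s)]\geq (s/\reff)^n/(1+(s/\reff)^n)$, so $1-\EE[\VcdfL(s)]\leq 1/(1+(s/\reff)^n)$. Substituting $s=\sqrt{r}$ and then $u=r/\reff^2$ yields
\begin{equation*}
\EE[G_L] \leq \frac{\reff^{2}}{n}\int_0^\infty \frac{du}{1+u^{n/2}}.
\end{equation*}

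Second, I evaluate the remaining integral. The substitution $v=u^{n/2}$ gives
\begin{equation*}
\int_0^\infty \frac{du}{1+u^{n/2}} = \frac{2}{n}\int_0^\infty \frac{v^{2/n-1}}{1+v}\,dv = \frac{2}{n}\cdot\frac{\pi}{\sin(2\pi/n)} = \frac{1}{\sinc(2/n)},
\end{equation*}
where the middle equality is the classical Beta/Mellin identity $\int_0^\infty \frac{v^{s-1}}{1+v}\,dv=\pi/\sin(\pi s)$ valid for $0<s<1$ (which holds for $s=2/n$ provided $n\geq 3$; the small cases $n\in\{1,2\}$ can be handled separately or are trivial). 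Plugging back and using $\reff^{2}=V_n^{-2/n}$ gives exactly
\begin{equation*}
\EE[G_L] \leq \frac{1}{nV_n^{2/n}}\cdot\frac{1}{\sinc(2/n)},
\end{equation*}
as desired.

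The argument is essentially a two-line computation once the two main ingredients are in hand, so there is no serious obstacle; the main ``trick'' is recognizing that the bound $1/(1+(s/\reff)^n)$ coming from Jensen's inequality integrates, after the natural change of variables, against the kernel whose Mellin transform gives $\pi/\sin(\pi s)$. This is precisely what produces the $1/\sinc(2/n)$ factor, which equals $1+O(1/n^2)$ for large $n$ and is the source of the improved $1+O(1/n)$ asymptotic scaling advertised in the introduction (once one also expands $nV_n^{2/n}$ and compares to the ball lower bound).
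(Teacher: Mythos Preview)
Your proof is correct and follows essentially the same route as the paper: combine Proposition~\ref{prop:fundviacdf}(\ref{fund:NSM}) with the Jensen bound of Theorem~\ref{thm:Jensen} via Tonelli, then evaluate $\int_0^\infty (1+u^{n/2})^{-1}\,du$ using the classical identity that gives $1/\sinc(2/n)$. The only cosmetic difference is that you pass through the Beta form $\int_0^\infty v^{s-1}/(1+v)\,dv=\pi/\sin(\pi s)$, whereas the paper cites the equivalent formula $\int_0^\infty (1+t^\nu)^{-1}\,dt=1/\sinc(1/\nu)$ directly.
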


Note that for $36\leq n\leq 48$ this upper bound is smaller than the best known NSM as reported in~\cite[Table 1]{agrell2023best}.  

\begin{proof}
Using Proposition~\ref{prop:fundviacdf} part~\ref{fund:NSM} (and Tonelli's Theorem) 
\begin{align}
n\EE[G_L]&=\int_{0}^{\infty} \EE[1-\VcdfL(\sqrt{r})]dr=\int_{0}^{\infty} 1-\EE[\VcdfL(\sqrt{r})]dr.
\end{align}
Applying Theorem~\ref{thm:Jensen}, we therefore have
\begin{align}
 n\EE[G_L] &\leq \int_{0}^{\infty}\frac{1}{1+\left(\frac{r}{\reff^2}\right)^{\frac{n}{2}}} dr\\
&=\reff^2\int_{0}^{\infty}\frac{1}{1+t^{\frac{n}{2}}} dt. 
\end{align}

Finally, recalling that~\cite[Section 3.241]{integraltable2007} for any $\nu>0$,
\begin{align}
\int_{0}^{\infty}\frac{1}{1+t^\nu} dt=\frac{\pi/\nu}{\sin(\pi/\nu)}=\frac{1}{\sinc(1/\nu)},
\label{eq:sincintegral}
\end{align}
and that $\reff^2=\frac{1}{V_n^{\frac{2}{n}}}$, we obtain that
\begin{align}
\EE[G_L]&\leq \frac{1}{n V_n^{\frac{2}{n}}}\cdot\frac{1}{\sinc( 2/n)}
\end{align}
and the claimed result follows.
\end{proof}

It is well-known~\cite{ConwaySloane} that among all bodies in $\R^n$ of unit volume, the second moment is minimized by $\reff\BB$, and 
\begin{align}
 G_{\reff\BB}\df \frac{\EE\|U_{\reff\BB}\|_2^2}{n}=\frac{n}{n+2}\cdot \frac{1}{n V_n^{2/n}}\df G^*_{n},
\label{eq:BallNSM}
\end{align}
where $U_{\reff\BB}\sim\Unif(\reff \BB)$. A simple application of Markov's inequality shows that for large $n$, almost all lattices have near-optimal NSM.

\begin{lem}
Assume $n\geq 8$ and let $L\sim\mu_n$. Then for any $\kappa>0$
\begin{align}
\Pr\left(G_L\geq(1+\kappa) G^*_{n} \right)\leq \frac{4}{ \kappa n}.    
\end{align}  
\label{lem:Markov2}
\end{lem}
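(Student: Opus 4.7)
The plan is to apply Markov's inequality to the non-negative random variable $G_L - G^*_n$. Non-negativity is crucial and comes for free: since the Euclidean ball minimizes the second moment among all bodies of unit volume (this is the inequality underlying~\eqref{eq:BallNSM}), and since every covolume-$1$ lattice $L$ has $|\mVL| = 1$, we have $G_L \geq G^*_n$ with probability one.

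With this in hand, Markov gives
\begin{align*}
\Pr\left(G_L \geq (1+\kappa)G^*_n\right)
= \Pr\left(G_L - G^*_n \geq \kappa G^*_n\right)
\leq \frac{\EE[G_L] - G^*_n}{\kappa G^*_n}
= \frac{1}{\kappa}\left(\frac{\EE[G_L]}{G^*_n} - 1\right).
\end{align*}
Using Theorem~\ref{thm:excpectedN2Mbound} together with the explicit value $G^*_n = \frac{1}{(n+2)V_n^{2/n}}$, the factors of $nV_n^{2/n}$ cancel, leaving
\begin{align*}
\frac{\EE[G_L]}{G^*_n} \leq \frac{n+2}{n\,\sinc(2/n)}.
\end{align*}

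To close the argument, the lemma amounts to showing that, for $n \geq 8$,
\begin{align*}
\frac{n+2}{n\,\sinc(2/n)} - 1 \leq \frac{4}{n},
\end{align*}
or equivalently $\sinc(2/n) \geq \frac{n+2}{n+4}$. The main (mild) obstacle is this elementary analytic inequality. I would handle it by observing that the Taylor expansion $\sinc(2/n) = 1 - \frac{2\pi^2}{3n^2} + O(n^{-4})$ gives $\sinc(2/n) = 1 - O(n^{-2})$, while $\frac{n+2}{n+4} = 1 - \frac{2}{n+4}$, so the inequality holds comfortably for all sufficiently large $n$; one then checks the finitely many remaining cases $n = 8, 9, \ldots, n_0$ by direct substitution (e.g.\ at $n=8$, $\sinc(1/4) = \frac{2\sqrt{2}}{\pi} \approx 0.900$ while $\tfrac{10}{12} \approx 0.833$). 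Combining this numerical check with the Markov bound above yields the stated probability estimate $\frac{4}{\kappa n}$.
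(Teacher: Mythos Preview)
Your proof is correct and follows essentially the same route as the paper: Markov's inequality applied to $G_L/G^*_n - 1$, then Theorem~\ref{thm:excpectedN2Mbound}, then an elementary estimate on $\sinc(2/n)$. The only difference is in that last step: instead of Taylor-plus-numerical-check, the paper uses the uniform bound $\frac{1}{\sinc(x)} < 1 + 2x^2$ for $0 < x \leq 1/\pi$ (from $\sin t \geq t(1-t^2/6)$), which for $n \geq 8$ gives $\EE[Y] \leq (1+\tfrac{2}{n})(1+\tfrac{8}{n^2}) - 1 < \tfrac{4}{n}$ directly, with no case-by-case verification needed.
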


\begin{proof}
Define the random variable $Y=Y_L=\frac{G_L}{G^*_{n}}-1$, which is non-negative with probability $1$.
By Markov's inequality, we have    
\begin{align}
\Pr\left(G_L\geq(1+\kappa) G^*_{n} \right)=\Pr(Y\geq \kappa)\leq\frac{\EE[Y]}{\kappa}.    
\end{align}
Using $\sin(x)\geq x(1-\frac{x^2}{6})$ for $x>0$, we have that for any $0<x\leq \frac{1}{\pi}$ it holds that
\begin{align}
\frac{1}{\sinc(x)}\leq\frac{\pi x}{\pi x(1-\frac{(\pi x)^2}{6})}\leq 1+\frac{1}{5}(\pi x)^2<1+2x^2.   
\end{align}
Applying Theorem~\ref{thm:excpectedN2Mbound} we obtain (for $n\geq 8$, such that $\frac{2}{n}\leq\frac{1}{\pi}$)
\begin{align}
\EE[Y]&\leq \frac{n+2}{n}\frac{1}{\sinc(2/n)}-1\leq \left(1+\frac{2}{n}\right)\left(1+\frac{8}{n^2}\right) -1\nonumber\\
&=\frac{2}{n}+\frac{8}{n^2}+\frac{16}{n^3}<\frac{4}{n},
\end{align}
which yields the claimed result.
\end{proof}

Setting $\kappa=c/n$ in Lemma~\ref{lem:Markov2}, we obtain the following straightforward corollary that shows, e.g., that there are lattices $L\subset \R^n$ with $G_L<\left(1+\frac{5}{n}\right)G_n^*$.
\begin{cor}
Assume $n\geq 8$ and let $L\sim\mu_n$. Then for any $c>0$
\begin{align}
\Pr\left(G_L\geq\left(1+\frac{c}{n}\right) G^*_{n} \right)\leq \frac{4}{c}.    
\end{align}  
\label{cor:Markov3}
\end{cor}

The normalized second moment characterizes the expected squared $\ell_2$ distance between $X$ drawn uniformly over a very large ball, and a unit covolume lattice $L$. Similarly, we can define the normalized $p$th moment $G_L^{(p)}$ of a unit covolume lattice $L$ as the expected $p$th power of the $\ell_p$ distance between $X$ and $L$. In Appendix~\ref{app:pmoment} we provide the precise definition and extend Theorem~\ref{thm:excpectedN2Mbound} for upper bound $\EE[G_L^{(p)}]$ for $L\sim\mu_n$.

Next, we derive an upper bound for the expected NSM $\EE[G_L]$ for $L\sim\tilde{\mu}_n$.

\begin{thm}
Let $n\geq 13$ be an integer, and $L\sim\tilde{\mu}_n$. We have that
\begin{align}
 \EE[G_L]\leq \frac{1}{n V_n^{\frac{2}{n}}}\left(\Gamma\left(1+\frac{2}{n}\right)+60 ne^{-\frac{\eta}{2}} \right) 
\end{align}
where $\eta=\frac{n}{8}\log\left(\frac{4}{3} \right)$ is defined in~\eqref{eq: def eta} and $\Gamma(\cdot)$ is the Gamma function.
\label{thm:NSMzador}
\end{thm}

Note that as $n$ grows, our upper bound approaches from above the well-known Zador~\cite[Lemma 5]{zador1982asymptotic} upper bound
\begin{align}
\overline{G}_n^{\mathrm{Zador}}\df   \frac{\Gamma\left(1+\frac{2}{n}\right)}{n V_n^{\frac{2}{n}}} 
\label{eq:Gzador}
\end{align}
on the smallest MSE a unit-density quantizer can achieve. In particular, Zador's upper bound shows that there exists an infinite constellation $\CC\subset\R^n$ with unit density, whose second moment is at most $\overline{G}_n^{\mathrm{Zador}}$. To the best of the author's knowledge, until this work it was not known whether or not there exist lattices in high dimensions that attain this bound. Theorem~\ref{thm:NSMzador} shows that this bound is indeed attained by lattices, up to a multiplicative factor of $1+c_1 e^{-c_2 n}$ for universal constants $c_1,c_2>0$.

\begin{proof}
Let $r_1=\reff^2 \cdot\left(\frac{\eta}{2}\right)^{\frac{2}{n}}$ and $r_2=\reff^2 \cdot\left(4n^2\eta \right)^{\frac{2}{n}}$. 
Using Proposition~\ref{prop:fundviacdf} part~\ref{fund:NSM}, together with Theorem~\ref{thm:gcovbound}, we have (using Tonelli's Theorem)
\begin{align}
n\EE[G_L]&=\int_{0}^{\infty}\EE[1-\VcdfL(\sqrt{r})]dr\\
&\leq \int_{0}^{r_1} e^{-\left(\frac{r}{\reff^2}\right)^{\frac{n}{2}}}+ 23 \cdot e^{-\eta/2}dr +\int_{r_1}^{r_2}  24 \cdot e^{-\frac{\eta}{2}}dr\\
&\leq 24 r_2 e^{-\frac{\eta}{2}}+\int_{0}^{r_1} e^{-\left(\frac{r}{\reff^2}\right)^{\frac{n}{2}}}dr.
\end{align}
Set $c=\reff^{-n}$, and make the change of variables $r=\left(\frac{u}{c}\right)^{\frac{2}{n}}$, $dr=\frac{2}{n c^{2/n}}u^{\frac{2}{n}-1}du$. The integral above is
\begin{align}
\int_{0}^{r_1} e^{-\left(\frac{r}{\reff^2}\right)^{\frac{n}{2}}}dr&=\int_{0}^{r_1} e^{-cr^{\frac{n}{2}}}dr\\
&=\frac{2}{n c^{2/n}}\int_{0}^{c r_1^{n/2}} u^{\frac{2}{n}-1}e^{-u}du\\
&\leq \reff^2\cdot \frac{2}{n}\cdot
 \int_{0}^{\infty} u^{\frac{2}{n}-1}e^{-u}du\\
&=\reff^2\cdot \frac{2}{n}\cdot \Gamma\left(\frac{2}{n}\right).
\end{align}
Recalling that $\reff^2=V_n^{-\frac{2}{n}}$ and that $\Gamma(1+\frac{2}{n})=\frac{2}{n}\Gamma(\frac{2}{n})$, we obtained
\begin{align}
\int_{0}^{r_1} e^{-\left(\frac{r}{\reff^2}\right)^{\frac{n}{2}}}dr&= V_n^{-\frac{2}{n}}\Gamma\left(1+\frac{2}{n}\right).
\end{align}

Furthermore, since $(4n^2\eta)^{2/n}<5/2$ for all $n\geq 13$, we have
\begin{align}
24 r_2\leq 24 \left(4 n^2 \eta \right)^{\frac{2}{n}}\reff^2\leq 60 \reff^2=60V_n^{-\frac{2}{n}} ,~~\forall n\geq 13.
\end{align}
This establishes our claimed result.
\end{proof}

\subsubsection{Gap to Conway and Sloane's Conjectured Bound}
\label{subsec:CSZador}

In~\cite{conway1985lower}, Conway and Sloane conjectured that the normalized second moment of any (not necessarily lattice) quantizer in $\R^n$  is lower bounded by
\begin{align}
\underline{G}^{\mathrm{CS}}_n\df     \frac{n+3-2H_{n+2}}{4n(n+1)}(n+1)^{1/n}(n!)^{4/n} f_n(n)^{2/n},
\label{eq:CSlb}
\end{align}
where $H_m=\sum_{i=1}^m\frac{1}{i}$ is a harmonic sum, and  $f_n(\cdot)$ is the Schl\"afli function. While the definition of $f_n(\cdot)$~\cite{conway1985lower} is recursive, and does not admit a closed form expression, in~\cite{rogers1958packing} (see also~\cite{rogers1961asymptotic,Shoom24}) Rogers and H. E. Daniels derived the approximation
\begin{align}\label{eq:rogers}
f_n(n)=\frac{\sqrt{n+1}}{\sqrt{2}\,e\,n!}\Bigl(\frac{2e}{\pi n}\Bigr)^{n/2}
\Bigl(1+\frac{31}{12n}+O(n^{-2})\Bigr).
\end{align}
Furthermore, in~\cite{Shoom24} an efficient  algorithm with high numerical precision was developed for computing $f_n(x)$ in the range $x\in[n-1,n+1]$. Using this algorithm, Erik Agrell~\cite{agrell2025personal} has observed numerically that Zador's upper bound~\eqref{eq:Gzador} on the NSM is remarkably close to $\underline{G}^{\mathrm{CS}}_n$ for large $n$. See Figure~\ref{fig:NSM},reproduced from~\cite{agrell2025personal} (see also~\cite[Figure 1]{allen2022performance} for a similar figure). Since our upper bound from Theorem~\ref{thm:NSMzador} is only $(1+c_1 e^{-c_2n})$ greater than Zador's upper bound, for universal $c_1,c_2>0$, the same holds also for this bound. The next lemma, proved in Appendix~\ref{app:Zador_CS}, validates the numerical findings of~\cite{agrell2025personal}.
\begin{lem}
\begin{align}
\frac{\overline{G}_n^{\mathrm{Zador}}}{\underline{G}^{\mathrm{CS}}_n}=1+O\left(\frac{\log^2(n)}{n^2}\right).   
\end{align}    
\label{lem:ZadorVsCS}
\end{lem}
To appreciate this result, note that
\begin{align}
 \frac{\overline{G}_n^{\mathrm{Zador}}}{G_n^*}=\frac{\overline{G}_n^{\mathrm{Zador}}}{\frac{1}{(n+2)V_n^{2/n}}}=1+\Omega\left(\frac{1}{n}\right).   
\end{align}    
Thus, Conway and Sloane's conjectured lower bound is significantly tighter than the ball-bound $G_n^*$ (often referred to as Zador's lower bound). As an immediate corollary of Lemma~\ref{lem:ZadorVsCS} and Theorem~\ref{thm:NSMzador} we obtain the following.
\begin{cor}
For $L\sim\tilde{\mu}_n$ we have
\begin{align}
 \frac{\EE[G_L]}{\underline{G}^{\mathrm{CS}}_n}=1+O\left(\frac{\log^2(n)}{n^2}\right).   
\end{align}    
\end{cor}
This shows that if Conway and Sloane's conjectured lower bound is correct, than there exist lattices that achieve the \emph{optimal NSM among all quantizers} to within a $1+O\left(\frac{\log^2(n)}{n^2}\right)$ factor (in fact this is true for almost all lattices). This constitutes a significant step towards establishing Gersho's conjecture, that essentially postulates that for dimension $n$ sufficiently large the best lattice quantizer is at least as good as any other quantizer in the same dimension.

\begin{figure*}[t]
\centering

\begin{subfigure}{0.49\textwidth}
    \includegraphics[width=\textwidth]{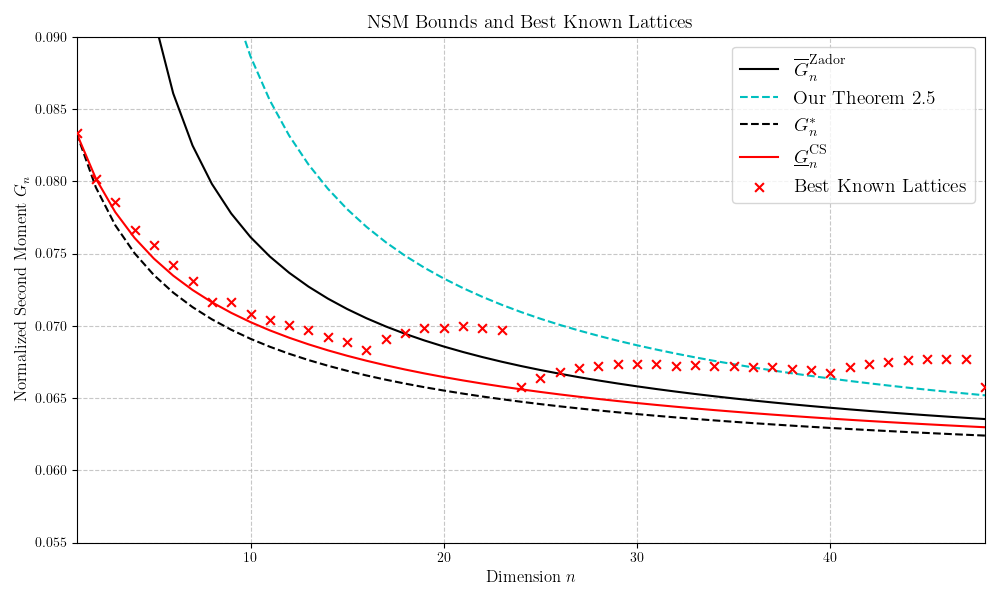}
    \caption{$n\in[1,48]$}
    \label{fig:NSMsmallN}
\end{subfigure}
\hfill
\begin{subfigure}{0.49\textwidth}
    \includegraphics[width=\textwidth]{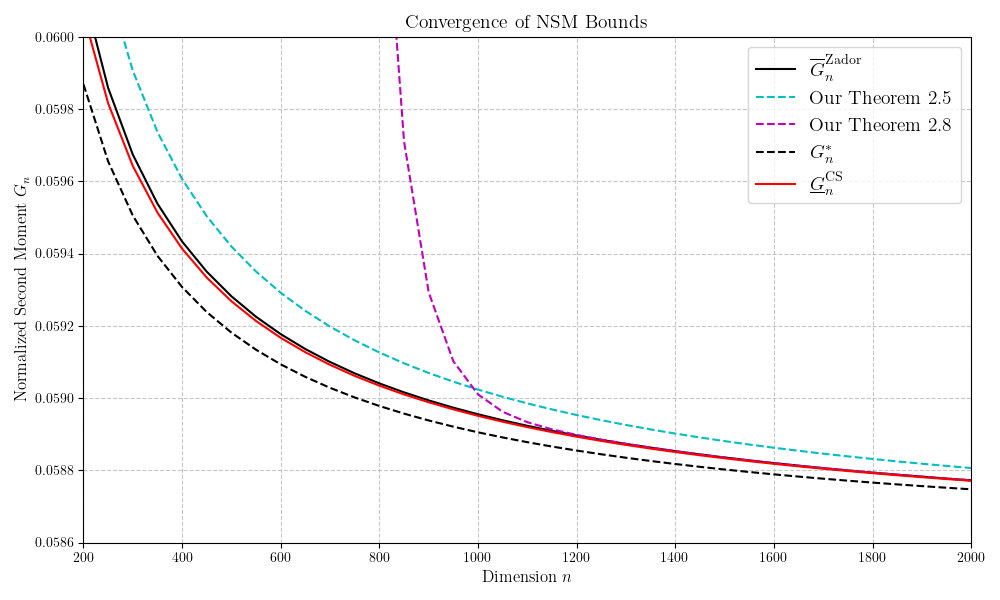}
    \caption{$n\in[200,2000]$}
    \label{fig:nsmLargeN}
\end{subfigure}
\caption{Various bounds on the optimal NSM (due to~\cite{agrell2025personal}).}\label{fig:NSM}
\end{figure*}

\subsection{Error probability of a random lattice}
The error probability of a unit covolume lattice $L\subset\R^n$ is defined as
\begin{align}
P_e(L,\sigma^2)&=\Pr(\sigma Z\notin \mVL)=1-\Pr(\sigma Z\in \mVL)\\
&=1-\mu_{\sigma^2}(\mVL).    
\end{align}
Clearly,
\begin{align}
\mu_{\sigma^2}(\mVL)\leq \mu_{\sigma^2}(\reff\mBeuc),
\end{align}
and similarly
\begin{align}
P_e(L,\sigma^2)\geq P^{\mathrm{SP}}(\sigma^2)\df1-\mu_{\sigma^2}(\reff\mBeuc).
\end{align}
We prove the following.
\begin{thm}
Let $n$ be an integer, let $W\sim\chi^2_{n+2}$ be a chi-squared random variable with $n+2$ degrees of freedom, and let
\begin{align}
X_W\df \left(\frac{\sqrt{\sigma^2 W}}{\reff}\right)^n.
\label{eq:XwDef}
\end{align}
Then, for any $\sigma^2>0$ we have that
\begin{align}
&\EE_{\mu_n}[P_e(L,\sigma^2)]\leq P^{\mathrm{SP}}(\sigma^2)+\EE\left[\frac{\min\{X_W,X^{-1}_W\}}{1+X_W} \right].
\label{eq:LatticePeUBfull2}
\end{align}
\label{thm:latticeSPUB2}
\end{thm}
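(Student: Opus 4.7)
The plan is to combine the Voronoi-spherical-CDF representation of the Gaussian error probability from Proposition~\ref{prop:fundviacdf} part~\ref{fund:GaussPe} with the Jensen-based lower bound of Theorem~\ref{thm:Jensen}, and then to algebraically split the resulting expression into the sphere-packing term $P^{\mathrm{SP}}(\sigma^2)$ plus the extra ``penalty'' expectation appearing in~\eqref{eq:LatticePeUBfull2}. Concretely, by Proposition~\ref{prop:fundviacdf} part~\ref{fund:GaussPe} we have
\begin{align*}
P_e(L,\sigma^2) \;=\; 1-\EE_W\!\left[\frac{\VcdfL(\sqrt{\sigma^2 W})}{X_W}\right],
\end{align*}
where $X_W$ is as in~\eqref{eq:XwDef}. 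Taking expectation over $L\sim\mu_n$, swapping the two expectations by Tonelli's theorem, and applying Theorem~\ref{thm:Jensen} with $r=\sqrt{\sigma^2 W}$ (so that $(r/\reff)^n=X_W$) gives
\begin{align*}
\EE_{\mu_n}[P_e(L,\sigma^2)]
\;\le\; \EE_W\!\left[1-\frac{1}{X_W}\cdot\frac{X_W}{1+X_W}\right]
\;=\; \EE_W\!\left[\frac{X_W}{1+X_W}\right].
\end{align*}

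Next, I would express $P^{\mathrm{SP}}(\sigma^2)$ in the same $X_W$ variable. Applying Proposition~\ref{prop:gaussianmeasureasEE} to $\KK=\reff\mBeuc$ (for which $g_{\KK}(r)=\min\{(r/\reff)^n,1\}$ and $\reff(\KK)=\reff$) yields
\begin{align*}
P^{\mathrm{SP}}(\sigma^2)\;=\;1-\EE_W\!\left[\frac{\min\{X_W,1\}}{X_W}\right]
\;=\;\EE_W\!\left[\max\!\left\{0,\,1-\tfrac{1}{X_W}\right\}\right].
\end{align*}

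Finally I would verify the identity
\begin{align*}
\frac{X_W}{1+X_W}-\max\!\left\{0,\,1-\tfrac{1}{X_W}\right\}
\;=\;\frac{\min\{X_W,X_W^{-1}\}}{1+X_W}
\end{align*}
by checking the two cases $X_W\ge 1$ and $X_W<1$ separately: when $X_W\ge 1$ the difference equals $\tfrac{1}{X_W}-\tfrac{1}{1+X_W}=\tfrac{1}{X_W(1+X_W)}=\tfrac{X_W^{-1}}{1+X_W}$, and when $X_W<1$ the max is $0$, giving $\tfrac{X_W}{1+X_W}$; in both regimes the numerator is exactly $\min\{X_W,X_W^{-1}\}$. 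Taking expectations then produces~\eqref{eq:LatticePeUBfull2}.

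There is essentially no hard obstacle here; the only care needed is in the last case-analysis identity, since the combination of Proposition~\ref{prop:fundviacdf} part~\ref{fund:GaussPe}, Theorem~\ref{thm:Jensen}, and Proposition~\ref{prop:gaussianmeasureasEE} does all the heavy lifting. The entire argument is a direct application of the first moment method together with the Gaussian-measure-as-expectation identity, with no need to look into higher joint moments of lattice points.
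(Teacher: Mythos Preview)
Your proposal is correct and follows essentially the same route as the paper: both combine the $\chi^2_{n+2}$ representation of the Gaussian measure (Proposition~\ref{prop:gaussianmeasureasEE}/Proposition~\ref{prop:fundviacdf} part~\ref{fund:GaussPe}) with the Jensen bound of Theorem~\ref{thm:Jensen}, and then reduce the residual to $\EE\bigl[\min\{X_W,X_W^{-1}\}/(1+X_W)\bigr]$. The only cosmetic difference is that the paper first forms the difference $\mu_{\sigma^2}(\reff\mBeuc)-\mu_{\sigma^2}(\mVL)$ and simplifies $\tfrac{1}{X_W}\bigl(\min\{X_W,1\}-\tfrac{X_W}{1+X_W}\bigr)$, whereas you bound $\EE_{\mu_n}[P_e]$ and compute $P^{\mathrm{SP}}$ separately before subtracting; the resulting case analysis is identical.
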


\begin{proof}
By Proposition~\ref{prop:gaussianmeasureasEE}, for any unit-covolume $L\subset\R^n$ we have that
\begin{align}
\Delta&(L,\sigma^2)\df   \mu_{\sigma^2}(\reff\mBeuc)-\mu_{\sigma^2}(\mVL)\\
&=\EE\left[\frac{1}{\left(\frac{\sqrt{\sigma^2 W}}{\reff} \right)^{n}}\left(\VcdfB(\sqrt{\sigma^2 W}) -\VcdfL(\sqrt{\sigma^2 W})\right)\right]\\
&=\EE\left[\frac{1}{X_W}\left(\VcdfB\left(\reff X_W^{1/n}\right) -\VcdfL\left(\reff X_W^{1/n}\right)\right)\right].
\end{align}
Consequently,
\begin{align}
&\mathbb{E}[P_e(L,\sigma^2)]=P^{\mathrm{SP}}(\sigma^2)+\left(\mathbb{E}[P_e(L,\sigma^2)]-P^{\mathrm{SP}}(\sigma^2)\right)\\
&=P^{\mathrm{SP}}(\sigma^2)+\left(\mu_{\sigma^2}(\reff\mBeuc)-\EE[\mu_{\sigma^2}(\mVL)]\right)\\
&=P^{\mathrm{SP}}(\sigma^2)+\EE_{\mu_n}[\Delta(L,\sigma^2)]\\
&=P^{\mathrm{SP}}(\sigma^2)\nonumber\\
&+\EE_{W}\left[\frac{1}{X_W}\left(\VcdfB\left(\reff X_W^{1/n}\right) -\EE_{\mu_n}\left[\VcdfL\left(\reff X_W^{1/n}\right)\right]\right)\right]
\\
&=P^{\mathrm{SP}}(\sigma^2)\nonumber\\
&+\EE_{W}\left[\frac{1}{X_W}\left(\min\{X_W,1\} -\EE_{\mu_n}\left[\VcdfL\left(\reff X_W^{1/n}\right)\right]\right)\right]\label{eq:gballexp}\\
&\leq P^{\mathrm{SP}}(\sigma^2)+\EE_{W}\left[\frac{1}{X_W}\left(\min\{X_W,1\} -\frac{X_W}{1+X_W}\right)\right],\label{eq:usegJensPe}
\end{align}
where in~\eqref{eq:gballexp} we used $\VcdfB\left(\reff X_W^{1/n}\right)=\min\{X_W,1\}$ which follows from~\eqref{eq:gBall}, and~\eqref{eq:usegJensPe} follows from Theorem~\ref{thm:Jensen}. This establishes~\eqref{eq:LatticePeUBfull2}.
\end{proof}

One can also use Theorem~\ref{thm:gcovbound} to derive an upper bound on $\EE_{\tilde{\mu}_n}[P_e(L,\sigma^2)]$. However, such bound is not useful for most values of $n$ and $\sigma^2$ due to the $e^{-\eta/2}$ term in $\LBco(r)$.

\begin{remark}
Let $\tilde{Z}\sim\chi^2_{n}$ be a chi-squared random variable with $n$ degrees of freedom. The best known upper bound to date on $\EE[P_e(L,\sigma^2)]$,due to Poltyrev~\cite{poltyrev94AWGN} and Ingber, Zamir, and Feder~\cite[Theorem 2]{izf12}, see also~\cite[Chapter 13.4]{ramibook}, is the bound
\begin{align}
\EE_{\mu_n}[P_e(L,\sigma^2)]&\leq P_e^{\mathrm{MLB}}(\sigma^2)\df\EE\left[\min\left\{\left(\frac{\sigma^2 \tilde{Z}}{\reff^2}\right)^{\frac{n}{2}},1\right\}\right]  \nonumber\\
&=P^{\mathrm{SP}}(\sigma^2)+\EE[X_{\tilde{Z}}\Ind\{X_{\tilde{Z}}\leq 1\}],
\end{align}
where
\begin{align}
X_{\tilde{Z}}\df \left(\frac{\sqrt{\sigma^2 \tilde{Z}}}{\reff}\right)^n.
\label{eq:XzDef}
\end{align} 
The bound we obtained in Theorem~\ref{thm:latticeSPUB2}, is quite similar to $P_e^{\mathrm{MLB}}(\sigma^2)$. An analytic comparison seems non-trivial. In Figure~\ref{fig:AWGN} we evaluate the two bounds numerically, along with $P^{\mathrm{SP}}(\sigma^2)$ for $\sigma^2=\frac{0.95}{2\pi e}$ and for $\sigma^2=\frac{0.98}{2\pi e}$, and growing $n$. the two bounds are seen to be very close. 
\end{remark}

\begin{figure*}[t]
\centering

\begin{subfigure}{0.49\textwidth}
    \includegraphics[width=\textwidth]{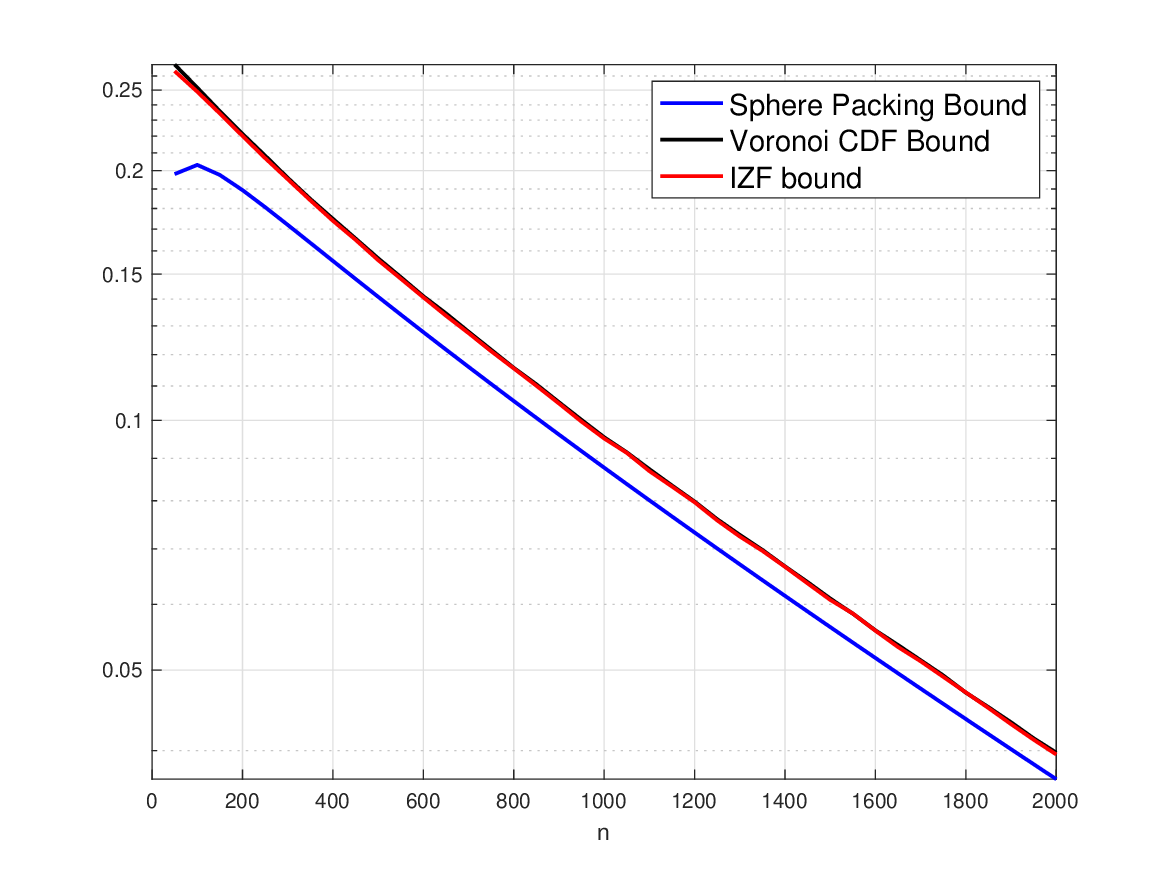}
    \caption{$\sigma^2=\frac{0.95}{2\pi e}$}
    \label{fig:sig095}
\end{subfigure}
\hfill
\begin{subfigure}{0.49\textwidth}
    \includegraphics[width=\textwidth]{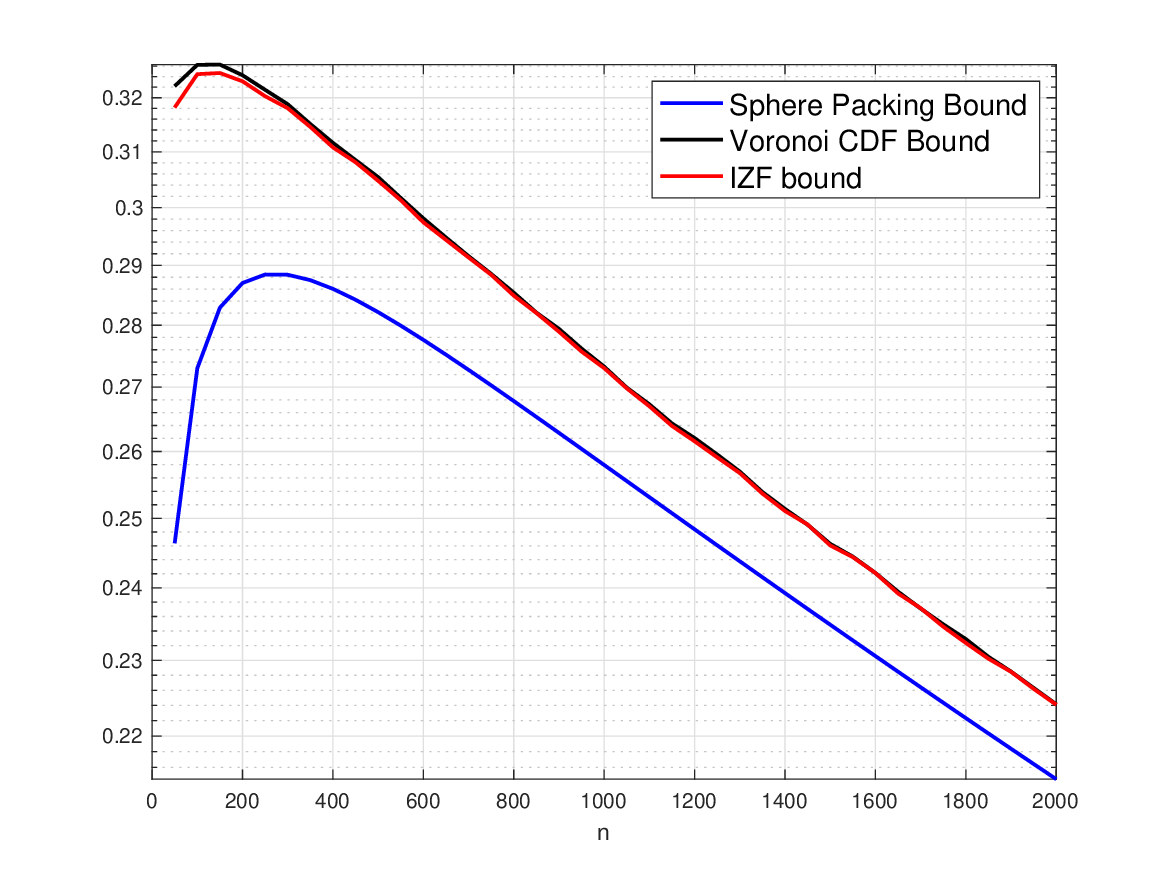}
    \caption{$\sigma^2=\frac{0.98}{2\pi e}$}
    \label{fig:sig098}
\end{subfigure}
\caption{Computation of the sphere packing lower bound, the $P_e^{\mathrm{MLB}}(\sigma^2)$ bound from~\cite[Theorem 2]{izf12}, and the new upper bound from Theorem~\ref{thm:latticeSPUB2}. The bounds are plotted for $\sigma^2=\frac{0.95}{2\pi e}$ and $\sigma^2=\frac{0.98}{2\pi e}$.}\label{fig:AWGN}
\end{figure*}

\section{Linear codes}
\label{sec:linearcodes}

Let $\CC\subset\F_2^n$ be a linear code of dimension $0\leq k\leq n$. Denote the Voronoi region of the code by
\begin{align}
\VC\df\left\{x\in\F_2^n~:~d_H(x,0)\leq d_H(x,y)~ \forall y\in\CC\setminus\{0\} \right\},
\end{align}
where $d_H(x,y)$ denotes the Hamming distance between $x,y\in\F_2^n$, and ties are broken in a systematic manner such that the cells $c+\VC$, $c\in\CC$ form a disjoint partition of $\F_2^n$. Clearly, $|\VC|=2^{n-k}$. Let $U_\CC\sim\Unif(\VC)$. The Voronoi spherical CDF of $\CC$ is defined as
\begin{align}
Q_\CC(r)\df \Pr(|U_\CC|\leq r),~~~r=0,1,\ldots,n.    
\end{align}
Here and throughout, for a vector $x\in\F_2^n$ we denote its Hamming weight by $|x|=\sum_{i=1}^n x_i$. Recall that we abuse notation and also denote the size of a set $\AAA\subset\F_2^n$ as $|\AAA|$.

As is the case for lattices, $Q_\CC(r)$ encodes many of the important properties of the code $\CC$. Before specifying how $Q_{\CC}(r)$ encodes these properties, we will need some definitions.

For integers $0\leq r\leq n$, denote the Hamming sphere of radius $r$ in $\F_2^n$ by
\begin{align}
\SSS_{n,r}\df\left\{x\in \F_2^n~:~|x|= r \right\},    
\end{align}
and the closed Hamming ball in $\F_2^n$ with radius $r$ by
\begin{align}
\BB_{n,r}\df\left\{x\in \F_2^n~:~|x|\leq r \right\}.    
\end{align}
Denote the size of $\BB_{n,r}$ as
\begin{align}
V_{n,r}=|\BB_{n,r}|=\sum_{j=0}^r {{n}\choose{j}}  .  
\end{align}
Note that
\begin{align}
Q_\CC(r)=\frac{1}{|\VC|}\left| \BB_{n,r}\cap \VC\right|=2^{k-n}\left| \BB_{n,r}\cap \VC\right|.   
\label{eq:QasIntersecVol}
\end{align}

\begin{prop}
\label{prop:LC_fundviacdf}
Let $\CC\subset \F_2^n$ be a linear code of dimension $0\leq k\leq n$. Then
\begin{enumerate}
    \item The packing radius of $\CC$ is
    \begin{align}
     \rpack(\CC)=\max\left\{r\in 0,1,\ldots,n~:~Q_\CC(r)= 2^{k-n}V_{n,r} \right\}  
    \end{align}
    \item The covering radius of $\CC$ is
    \begin{align}
     \rcov(\CC)=\min\left\{r\in 0,1,\ldots,n~:~Q_\CC(r)= 1 \right\}  
    \end{align}
    \item The Hamming distortion of $\CC$ is
    \begin{align}
     D_\CC\df \frac{\EE|U_\CC|}{n}=\frac{1}{n}\sum_{r=0}^{n} 1-Q_{\CC}(r).
    \end{align}
    \label{fund:DLN}
    \item Let $Z\sim\mathrm{Ber}^{\otimes n}(p)$, The error probability of $\CC$ for crossover probability $p$ is 
    \begin{align}
     P_e&(\CC,p)=\Pr(Z \notin \VC)\nonumber\\
     &=1- 2^{n-k}\left(\frac{p}{1-p}\cdot p^n+ \frac{1-2p}{1-p}\EE\left[\frac{Q_{\CC}(|Z|)}{{{n}\choose{|Z|}}}\right]\right), 
    \end{align}
    where $|Z|\sim\mathrm{Binomial}(n,p)$.
    \label{fund:BerPe}
\end{enumerate}
\end{prop}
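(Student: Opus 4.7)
The four items all follow from the fundamental identity $Q_\CC(r) = 2^{k-n}|\VC \cap \BB_{n,r}|$ of~\eqref{eq:QasIntersecVol}, combined with the fact that the shifts $\{c+\VC : c\in\CC\}$ form a disjoint partition of $\F_2^n$. I would treat the items one by one.

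For items (1) and (2): Since $|\VC \cap \BB_{n,r}| \leq \min\{|\VC|, |\BB_{n,r}|\}$, the equation $Q_\CC(r) = 2^{k-n}V_{n,r}$ is equivalent to $\BB_{n,r}\subset \VC$, and $Q_\CC(r)=1$ is equivalent to $\VC \subset \BB_{n,r}$. For the packing radius, I would show that $\BB_{n,r}\subset \VC$ iff $r\leq \rpack(\CC)$. The forward direction is a triangle-inequality argument: if $r\leq \rpack(\CC)$, then $d_{\min}(\CC)\geq 2r+1$, so every $x$ with $|x|\leq r$ is strictly closer to $0$ than to any other codeword. The reverse direction picks a minimum-weight codeword $y$ with $|y|\leq 2r$ and a ``midpoint'' $x$ of weight at most $r$ for which $|y+x|\leq r$ as well; since $x$ and $y+x$ lie in the disjoint cells $\VC$ and $y+\VC$, at least one of them witnesses a point of $\BB_{n,r}$ outside $\VC$. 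For the covering radius, I would invoke linearity of $\CC$: every coset of $\CC$ has a unique representative in $\VC$ (the coset leader), so $d_H(x,\CC)=|z|$ for $z$ the leader of $x+\CC$, giving $\rcov(\CC)=\max_{z\in\VC}|z|$, which is the smallest $r$ with $\VC\subset\BB_{n,r}$, i.e., with $Q_\CC(r)=1$.

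For item (3), I would use the tail-sum formula for expectations of non-negative integer-valued random variables: $\EE|U_\CC| = \sum_{r=0}^{n-1}\Pr(|U_\CC|>r) = \sum_{r=0}^{n-1}(1-Q_\CC(r))$. Since $Q_\CC(n)=1$, the $r=n$ summand vanishes and can be appended to match the stated formula.

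Item (4) is the main computation. I would start from
\[
\Pr(Z\in\VC) = \sum_{z\in\VC} p^{|z|}(1-p)^{n-|z|} = \sum_{r=0}^n N_r\, a_r, \qquad a_r\df p^r(1-p)^{n-r},
\]
where $N_r \df |\VC\cap\SSS_{n,r}|$. Since $N_r = 2^{n-k}(Q_\CC(r)-Q_\CC(r-1))$ with $Q_\CC(-1)\df 0$, the plan is to apply summation by parts. The crucial algebraic identity is $a_r - a_{r+1} = \frac{1-2p}{1-p}\,a_r$, which together with $Q_\CC(n)=1$ converts the telescoped sum into
\[
\sum_{r=0}^n (Q_\CC(r)-Q_\CC(r-1))\, a_r = \frac{p^{n+1}}{1-p} + \frac{1-2p}{1-p}\sum_{r=0}^n Q_\CC(r)\, a_r.
\]
Finally, recognizing $\sum_{r=0}^n Q_\CC(r)\, a_r = \EE[Q_\CC(|Z|)/\binom{n}{|Z|}]$ via $\Pr(|Z|=r)=\binom{n}{r}a_r$, multiplying by $2^{n-k}$, and subtracting from $1$, yields the displayed expression for $P_e(\CC,p)$. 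The main obstacle is not conceptual but the careful bookkeeping of the boundary term $r=n$ in the summation-by-parts step.
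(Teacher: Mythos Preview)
Your proposal is correct and follows essentially the same route as the paper: items (1)--(3) are handled via the identity $Q_\CC(r)=2^{k-n}|\VC\cap\BB_{n,r}|$ and the tail-sum formula exactly as in the paper, and your summation-by-parts argument for item (4), hinging on $a_r-a_{r+1}=\tfrac{1-2p}{1-p}a_r$, is precisely the computation the paper carries out (there packaged as the more general Proposition~\ref{prop:ProbAsCQFexpt_dicrtete} for arbitrary $\KK\subset\F_2^n$). One small wording fix in item (1): the reason at least one of $x,y+x$ lies outside $\VC$ is that they belong to the \emph{same} coset of $\CC$ (their difference is $y\in\CC$), and $\VC$ contains a unique coset representative---not that they lie in ``the disjoint cells $\VC$ and $y+\VC$''.
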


The proof is given in Appendix~\ref{app:discreteCDFproperties}. The fourth item is a consequence of the following more general proposition, whose proof is also brought in Appendix~\ref{app:discreteCDFproperties}.

\begin{prop}
Let $\KK\subset \F_2^n$ be a subset, and for $U_\KK\sim\Unif(\KK)$ define
\begin{align}
Q_\KK(r)=\Pr(|U_\KK|\leq r)=\frac{|\KK\cap \BB_{n,r}|}{|\KK|}.    \label{eq:genCDF_LC}
\end{align}
Then, for $Z\sim\mathrm{Ber}^{\otimes n}(p)$ we have
\begin{align}
\mu_p(\KK)&\df \Pr(Z\in\KK)\nonumber\\
&=|\KK|\left(\frac{p}{1-p}\cdot p^n+ \frac{1-2p}{1-p}\EE\left[\frac{Q_\KK(|Z|)}{{{n}\choose{|Z|}}}\right]\right),    
\end{align}
where $|Z|\sim\mathrm{Binomial}(n,p)$.
\label{prop:ProbAsCQFexpt_dicrtete}    
\end{prop}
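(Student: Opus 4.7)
The proof is essentially a discrete summation-by-parts argument. Here is the plan.

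First, I would expand the probability by summing over codewords and grouping by Hamming weight. Since $\Pr(Z=z)=p^{|z|}(1-p)^{n-|z|}$ depends only on $|z|$, we get
\begin{align*}
\mu_p(\KK)=\sum_{z\in\KK}p^{|z|}(1-p)^{n-|z|}=\sum_{r=0}^{n}|\KK\cap \SSS_{n,r}|\,p^r(1-p)^{n-r}.
\end{align*}
Using the definition~\eqref{eq:genCDF_LC} of $Q_\KK$ and the telescoping identity $|\KK\cap\SSS_{n,r}|=|\KK|\bigl(Q_\KK(r)-Q_\KK(r-1)\bigr)$ (with the convention $Q_\KK(-1)=0$), this becomes
\begin{align*}
\mu_p(\KK)=|\KK|\sum_{r=0}^{n}\bigl(Q_\KK(r)-Q_\KK(r-1)\bigr)p^r(1-p)^{n-r}.
\end{align*}

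Next, I would apply Abel's summation formula. Setting $a_r=p^r(1-p)^{n-r}$, a short calculation gives $a_r-a_{r+1}=(1-2p)\,p^r(1-p)^{n-r-1}$, so after reindexing and using $Q_\KK(n)=1$, the telescoping sum collapses to
\begin{align*}
\sum_{r=0}^{n}\bigl(Q_\KK(r)-Q_\KK(r-1)\bigr)a_r=p^n+(1-2p)\sum_{r=0}^{n-1}Q_\KK(r)\,p^r(1-p)^{n-r-1}.
\end{align*}
Factoring $(1-p)^{-1}$ out of the right sum and adding/subtracting the $r=n$ term (noting that $Q_\KK(n)p^n(1-p)^0=p^n$), this equals
\begin{align*}
p^n+\frac{1-2p}{1-p}\Bigl(\,\sum_{r=0}^{n}Q_\KK(r)\,p^r(1-p)^{n-r}-p^n\Bigr)=\frac{p}{1-p}\cdot p^n+\frac{1-2p}{1-p}\sum_{r=0}^{n}Q_\KK(r)\,p^r(1-p)^{n-r},
\end{align*}
where I combined the $p^n$ terms using $1-\tfrac{1-2p}{1-p}=\tfrac{p}{1-p}$.

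Finally, I would identify the remaining sum as an expectation over $|Z|\sim\mathrm{Binomial}(n,p)$. Indeed,
\begin{align*}
\EE\left[\frac{Q_\KK(|Z|)}{\binom{n}{|Z|}}\right]=\sum_{r=0}^{n}\frac{Q_\KK(r)}{\binom{n}{r}}\binom{n}{r}p^r(1-p)^{n-r}=\sum_{r=0}^{n}Q_\KK(r)\,p^r(1-p)^{n-r},
\end{align*}
and substituting this back yields exactly the claimed formula. The only delicate step is the bookkeeping in the Abel summation (keeping track of the boundary term at $r=n$ and the reindexing); everything else is direct substitution.
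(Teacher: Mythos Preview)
Your proposal is correct and follows essentially the same approach as the paper: both proofs group by Hamming weight, replace $|\KK\cap\SSS_{n,r}|$ by the telescoping difference of ball counts (equivalently of $Q_\KK$), apply Abel summation using $a_r-a_{r+1}=\frac{1-2p}{1-p}a_r$, and then recognize the resulting sum as $\EE\bigl[Q_\KK(|Z|)/\binom{n}{|Z|}\bigr]$. The only cosmetic difference is that the paper works with $|\KK\cap\BB_{n,r}|$ and splits the boundary term via $1=\frac{p}{1-p}+\frac{1-2p}{1-p}$, whereas you normalize by $|\KK|$ up front and add/subtract the $r=n$ summand; the algebra is identical.
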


\subsection{Estimates for the expected Voronoi spherical CDF}

For $k
\in \{1, \ldots, n\}$, let the discrete Grassmannian
$\Gr_{n,k}(\F_2)$ denote the collection of subspaces of
dimension $k$ in $\F_2^n$, or equivalently, all linear codes of dimension $k$ in $\F_2^n$. Let $\TT_{\CC}=\F_2^n/\CC$ be the quotient group corresponding to $\CC\in\Gr_{n,k}(\F_2)$, and let $\pi_{\CC}:\F_2^n\to\TT_{\CC}$ be the projection operator to the quotient group. Note that $\VC$ is isomorphic to $\TT_\CC$. 
Thus, one can think of of the quotient map as $\pi_\CC(x)=x-f_\CC(x)$, where $f_\CC(x):\F_2^n\to\CC$ maps each point in $\F_2^n$ to its nearest codeword $c\in\CC$ such that $x\in c+\VC$. For a set $\KK\subset\F_2^n$ we denote $\pi_\CC(\KK)=\{\pi_\CC(x)~:~x\in\KK\}$. Let $m_{\CC}:\TT_\CC\to[0,1]$ denote the normalized counting measure (uniform distribution) on $\TT_\CC$. Namely, for any $\mathcal{A}\subset\TT_\CC$ we have
\begin{align}
m_{\CC}(\mathcal{A})\df\frac{|\mathcal{A}|}{|\TT_{\CC}|}=2^{k-n}|\mathcal{A}|.
\end{align}
Observe that for any $\KK\subset\F_2^n$ it holds that
\begin{align}
m_{\CC}(\pi_{\CC}(\KK))&=\sum_{x\in\KK}\frac{2^{k-n}}{|(x+\CC)\cap \KK|}\\
&=\sum_{x\in\KK}\frac{2^{k-n}}{1+|(\CC\setminus\{0\})\cap (\KK-x)|}.
\label{eq:F2nprojectionvolume}
\end{align}
Furthermore, observe that
\begin{align}
Q_\CC(r)=\Pr(|U_\CC|\leq r)=2^{k-n} |\BB_{n,r}\cap \VC| =m_{\CC}(\pi_{\CC}(\BB_{n,r})),   
\label{eq:F2mCeqQC}
\end{align}
where the last equality holds since, when we identify $\VC$ with $\TT_\CC$, it holds that $|\pi_{\CC}(x)|\leq |x|$ for any $x\in\F_2^n$. In particular, for all $x\in\BB_{n,r}$ we have that $\pi_{\CC}(x)\in\BB_{n,r}$. Thus, $\pi_{\CC}(\BB_{n,r})\subset ( \BB_{n,r}\cap \VC)$. On the other hand, for any point $x\in\VC$ we have that $\pi_\CC(x)=x$, and consequently $\pi_{\CC}(\BB_{n,r})\supset\pi_{\CC}(\BB_{n,r}\cap \VC)=(\BB_{n,r}\cap \VC)$. Thus, $\pi_{\CC}(\BB_{n,r})=(\BB_{n,r}\cap \VC)$.

\begin{thm}
Let $\CC$ be a random code drawn uniformly over $\Gr_{n,k}(\F_2)$. Then, for any $\KK\in\F_2^n$ we have that
\begin{align}
\EE[m_{\CC}(\pi_{\CC}(\KK))]\geq \frac{2^{k-n}|\KK|}{1+2^{k-n}|\KK|} . 
\label{eq:F2projectionmeasurebound}
\end{align}
In particular,
\begin{align}
\EE_\CC[Q_\CC(r)]\geq \frac{2^{k-n}V_{n,r}}{1+2^{k-n}V_{n,r}}.\label{eq:mcballRC}
\end{align}
\label{thm:RCmc}
\end{thm}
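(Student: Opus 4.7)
The plan is to mimic, step by step, the Jensen/first moment argument used in the proof of Theorem~\ref{thm:Jensen} in the lattice setting, using the uniform ensemble on $\Gr_{n,k}(\F_2)$ in place of $\mu_n$, and the elementary orbit-counting identity for $\Gr_{n,k}(\F_2)$ in place of Siegel's summation formula.

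First I would start from the identity~\eqref{eq:F2nprojectionvolume}, which already gives a clean sum representation:
\begin{align*}
m_{\CC}(\pi_{\CC}(\KK))=\sum_{x\in\KK}\frac{2^{k-n}}{1+|(\CC\setminus\{0\})\cap (\KK-x)|}.
\end{align*}
Taking expectation over $\CC\sim\Unif(\Gr_{n,k}(\F_2))$ and swapping the expectation with the finite sum over $x\in\KK$ by linearity, I would then apply Jensen's inequality to the convex map $t\mapsto 1/(1+t)$ on $t\geq 0$, pointwise in $x$. This yields
\begin{align*}
\EE\left[m_{\CC}(\pi_{\CC}(\KK))\right]\geq \sum_{x\in\KK}\frac{2^{k-n}}{1+\EE\left[|(\CC\setminus\{0\})\cap (\KK-x)|\right]}.
\end{align*}

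The next step, which plays the role of Siegel's formula here, is to observe that because the $\GL_n(\F_2)$-action on $\F_2^n\setminus\{0\}$ is transitive and the uniform measure on $\Gr_{n,k}(\F_2)$ is $\GL_n(\F_2)$-invariant, the probability $\Pr(y\in\CC)$ depends only on whether $y=0$. Combining with $\EE|\CC\setminus\{0\}|=2^k-1$, this forces $\Pr(y\in\CC)=(2^k-1)/(2^n-1)$ for every nonzero $y$. Linearity of expectation then gives, for any $S\subset\F_2^n$,
\begin{align*}
\EE\left[|(\CC\setminus\{0\})\cap S|\right]=|S\setminus\{0\}|\cdot\frac{2^k-1}{2^n-1}\leq |S|\cdot 2^{k-n},
\end{align*}
using $(2^k-1)/(2^n-1)\leq 2^{k-n}$. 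Applying this with $S=\KK-x$ and plugging back gives
\begin{align*}
\EE\left[m_{\CC}(\pi_{\CC}(\KK))\right]\geq \sum_{x\in\KK}\frac{2^{k-n}}{1+2^{k-n}|\KK|}=\frac{2^{k-n}|\KK|}{1+2^{k-n}|\KK|},
\end{align*}
which is~\eqref{eq:F2projectionmeasurebound}. The specialization $\KK=\BB_{n,r}$ combined with~\eqref{eq:F2mCeqQC} then immediately produces~\eqref{eq:mcballRC}.

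There is no real obstacle: each ingredient is either an identity already recorded in the excerpt (equation~\eqref{eq:F2nprojectionvolume}), Jensen's inequality for a scalar convex function, or the trivial orbit-counting computation of $\Pr(y\in\CC)$. The closest thing to a subtle point is making sure to write the expected intersection size in terms of $|S\setminus\{0\}|$ rather than $|S|$, and then bounding this crudely by $|S|\cdot 2^{k-n}$ so that the denominator of the Jensen bound comes out exactly in the form $1+2^{k-n}|\KK|$ rather than the slightly sharper $1+(|\KK|-1)(2^k-1)/(2^n-1)$. The sharper form could of course be carried through, but the stated bound is the natural discrete analog of Theorem~\ref{thm:Jensen} and is what downstream estimates will use.
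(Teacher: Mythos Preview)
Your proposal is correct and follows essentially the same route as the paper: start from~\eqref{eq:F2nprojectionvolume}, apply Jensen to $t\mapsto 1/(1+t)$, compute $\EE[|(\CC\setminus\{0\})\cap(\KK-x)|]=\frac{2^k-1}{2^n-1}(|\KK|-1)<2^{k-n}|\KK|$, and specialize to $\KK=\BB_{n,r}$. The only cosmetic difference is that the paper writes the expectation directly as $\frac{2^k-1}{2^n-1}(|\KK|-1)$ (using $0\in\KK-x$ for $x\in\KK$), while you phrase it via $\Pr(y\in\CC)$ and bound more crudely; both arrive at the same inequality.
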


\begin{proof}
Starting from~\eqref{eq:F2nprojectionvolume}, and using Jensen's inequality with the convexity of $t\mapsto \frac{1}{1+t}$, we obtain
\begin{align}
\EE[m_{\CC}(\pi_{\CC}(\KK))]&=\EE\left[\sum_{x\in\KK}\frac{2^{k-n}}{1+|(\CC\setminus\{0\})\cap (\KK-x)|}  \right]   \\
&\geq \sum_{x\in\KK}\frac{2^{k-n}}{1+\EE[|(\CC\setminus\{0\})\cap (\KK-x)]|}.\label{eq:rondomcodebeforeexpectation}
\end{align}
Since $\CC\sim\Unif(\Gr_{n,k}(\F_2))$, for any $x\in\KK$ it holds that
\begin{align}
\EE[|(\CC\setminus\{0\})\cap (\KK-x)]|=\frac{2^k-1}{2^n-1}(|\KK|-1)<2^{k-n}|\KK|.
\end{align}
Thus,
\begin{align}
\EE[m_{\CC}(\pi_{\CC}(\KK))]
&\geq \sum_{x\in\KK}\frac{2^{k-n}}{1+2^{k-n}|\KK|}=\frac{2^{k-n}|\KK|}{1+2^{k-n}|\KK|},
\end{align}
establishing~\eqref{eq:F2projectionmeasurebound}. Combining~\eqref{eq:F2mCeqQC} with~\eqref{eq:F2projectionmeasurebound} applied with $\KK=\BB_{n,r}$, we obtain
\begin{align}
\EE[Q_\CC(r)]=\EE[\Pr(|U_\CC|\leq  r)]&\geq \frac{2^{k-n}|\BB_{n,r}|}{1+2^{k-n}|\BB_{n,r}|}\\
&=\frac{2^{k-n}V_{n,r}}{1+2^{k-n}V_{n,r}},
\end{align}
which establishes~\eqref{eq:mcballRC}.
\end{proof}

\subsection{Hamming distortion of a linear code}

As a straightforward consequence of Proposition~\ref{prop:LC_fundviacdf} and of Theorem~\ref{thm:RCmc} we obtain the following.

\begin{thm}
Let $\CC$ be a random code drawn uniformly over $\Gr_{n,k}(\F_2)$. Then,   \begin{align}
\EE[D_\CC]\leq \frac{1}{n}\sum_{r=0}^{n-1}\frac{1}{1+2^{k-n}V_{n,r}}.  
\end{align}
\label{thm:DCLN}
\end{thm}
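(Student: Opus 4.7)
The plan is to show this is essentially a direct corollary, obtained by averaging the bound of Theorem~\ref{thm:RCmc} into the tail-sum identity of Proposition~\ref{prop:LC_fundviacdf} part~\ref{fund:DLN}. There is no real obstacle; the only mildly subtle point is accounting for the upper endpoint of the sum.

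First, I would start from Proposition~\ref{prop:LC_fundviacdf} part~\ref{fund:DLN}, which gives the deterministic identity
\begin{align*}
D_\CC = \frac{1}{n}\sum_{r=0}^{n}\bigl(1-Q_\CC(r)\bigr).
\end{align*}
Since $\VC \subset \F_2^n = \BB_{n,n}$, we have $Q_\CC(n)=1$ for every linear code $\CC$, so the $r=n$ term is identically zero and the sum may be truncated to $r=0,1,\ldots,n-1$. Taking the expectation over $\CC\sim\Unif(\Gr_{n,k}(\F_2))$ and exchanging the (finite) sum with the expectation by linearity, I obtain
\begin{align*}
\EE[D_\CC] = \frac{1}{n}\sum_{r=0}^{n-1}\bigl(1-\EE[Q_\CC(r)]\bigr).
\end{align*}

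Next, I would apply Theorem~\ref{thm:RCmc}, which gives $\EE[Q_\CC(r)]\geq \frac{2^{k-n}V_{n,r}}{1+2^{k-n}V_{n,r}}$ for every $r$. Rearranging, this is equivalent to
\begin{align*}
1 - \EE[Q_\CC(r)] \leq \frac{1}{1+2^{k-n}V_{n,r}}.
\end{align*}
Substituting this termwise into the previous display immediately yields
\begin{align*}
\EE[D_\CC] \leq \frac{1}{n}\sum_{r=0}^{n-1}\frac{1}{1+2^{k-n}V_{n,r}},
\end{align*}
which is the desired bound. The entire argument amounts to combining the two already-established ingredients, so I expect the write-up to be only a few lines.
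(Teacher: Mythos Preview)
Your proposal is correct and follows essentially the same approach as the paper: combine the tail-sum identity from Proposition~\ref{prop:LC_fundviacdf} part~\ref{fund:DLN} with the lower bound on $\EE[Q_\CC(r)]$ from Theorem~\ref{thm:RCmc}, noting that the $r=n$ term vanishes. The paper's proof is indeed just a few lines, exactly as you anticipate.
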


\begin{proof}
Combining part 3 of Proposition~\ref{prop:LC_fundviacdf} and~\eqref{eq:mcballRC} of Theorem~\ref{thm:RCmc}, we obtain
\begin{align}
\EE[D_\CC]&=\frac{1}{n}\sum_{r=0}^{n-1
}\EE[1-Q_\CC(r)]\\
&\leq \frac{1}{n}\sum_{r=0}^{n-1
}1-\frac{2^{k-n}V_{n,r}}{1+2^{k-n}V_{n,r}}\\
&=\frac{1}{n}\sum_{r=0}^{n-1}\frac{1}{1+2^{k-n}V_{n,r}},
\end{align}
as claimed.
\end{proof}
    
For a set $\KK\subset\F_2^n$ of size $|\KK|$, we define $\reff=\reff(\KK)$ as the unique integer $0\leq \reff\leq n$ such that $V_{n,\reff-1}<|\KK|\leq V_{n,\reff}$, with the convention that $V_{n,-1}=-\infty$. We further define the quasi-ball $\BB_\KK$ as the union of $\BB_{n,\reff-1}$ and the first $|\KK|-V_{n,\reff-1}$ vectors of Hamming weight $\reff$ in lexicographic order (there is no significance to which $|\KK|-V_{n,\reff-1}$ vectors of Hamming weight $\reff$ we choose, we take the first ones in lexicographic order just to avoid ambiguity). For a linear code $\CC\subset\F_2^n$ of dimension $0\leq k\leq n$, we have that $\reff(\VC)$, as well as $\BB_{\VC}$, depend only on $k$ and $n$, and we therefore denote them by
\begin{align}
\reff(n,k)\df\min\{r~:~2^{n-k}\leq V_{n,r}    \},
\label{eq:reffnkdef}
\end{align}
and $\Beq$, respectively. 

Let $U_{\Beq}\sim\Unif(\Beq)$. We clearly have that for any $\CC\in\Gr_{n,k}(\F_2)$ 
\begin{align}
D_\CC\geq D^*_{n,k}=\frac{1}{n}\EE |U_{\Beq}|.    
\end{align}
See Lemma~\ref{lem:sphereDistortion} below for a more general statement.
Defining $Q_{\Beq}(r)$ as in ~\eqref{eq:genCDF_LC}, we also have that
\begin{align}
Q_{\Beq}(r)=\min\{2^{k-n}V_{n,r},1\},    
\label{eq:discreteBallCDF}
\end{align}
and therefore, using item 3 of Proposition~\ref{prop:LC_fundviacdf}, that holds for any set, not just a Voronoi region, we have
\begin{align}
D^*_{n,k}=\frac{1}{n}\sum_{r=0}^{\reff(n,k)-1} 1- 2^{k-n}V_{n,r}. 
\label{eq:DnkDef}
\end{align}
Combining this with Theorem~\ref{thm:DCLN} we obtain
\begin{align}
\Delta(n,k)&\df n(\EE[D_\CC]-D^*_{n,k})\\
&\leq\sum_{r=0}^{\reff(n,k)-1}\frac{x_r^2}{1+x_r}+\sum_{r=\reff(n,k)}^{n-1} \frac{1}{1+x_r},   
\label{eq:Deltank}
\end{align}
where
\begin{align}
x_r=x_{r,n,k}\df 2^{k-n} V_{n,r}.    
\label{eq:xrdef}
\end{align}

We prove the following result in Appendix~\ref{app:discreteCDFproperties}.

\begin{thm}
Assume $k>\alpha n$ for some $\alpha>0$. Then, there exists a universal constant $c=c_{
\alpha}>0$, independent of $n$, such that $\Delta(n,k)<c$.    
\label{thm:DcConstantGap}
\end{thm}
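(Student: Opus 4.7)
The plan is to control the two sums in the decomposition~\eqref{eq:Deltank} of $\Delta(n,k)$ separately, exploiting the rapid growth of $V_{n,r}$ in $r$ whenever $r/n$ is bounded away from $1/2$. The starting point is that the hypothesis $k>\alpha n$ forces $r^\ast\df\reff(n,k)$ to lie well below $n/2$: writing $h$ for the binary entropy, I would fix $\beta_0\in(h^{-1}(1-\alpha),\,1/2)$ and invoke the standard estimate $\binom{n}{\lfloor\beta_0 n\rfloor}\geq 2^{nh(\beta_0)}/(n+1)$, which combined with $h(\beta_0)>1-\alpha$ forces $V_{n,\lfloor\beta_0 n\rfloor}>2^{n-k}$ for all sufficiently large $n$, hence $r^\ast\leq\beta_0 n$. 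The finitely many small-$n$ cases can be absorbed into the final constant. I also pick $\beta_1\in(\beta_0,\,1/2)$ for use below. The quantitative workhorse is the elementary inequality
\begin{align*}
V_{n,r}\leq\frac{1-\beta}{1-2\beta}\binom{n}{r}\qquad\text{for }r\leq\beta n,\ \beta<1/2,
\end{align*}
which follows from comparing $V_{n,r}$ to a decreasing geometric series with ratio $\tau=\beta/(1-\beta)<1$. Substituting this into $V_{n,r+1}/V_{n,r}=1+\binom{n}{r+1}/V_{n,r}$ and using $(n-r)/(r+1)\geq(1-\beta)/\beta$ throughout $r\leq\beta n-1$ yields the clean ratio bound
\begin{align*}
\frac{V_{n,r+1}}{V_{n,r}}\geq\frac{1-\beta}{\beta},
\end{align*}
which is strictly greater than $1$ precisely because $\beta<1/2$.

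Applying this bound with $\beta=\beta_0$ gives $x_{r+1}/x_r\geq\rho_0\df(1-\beta_0)/\beta_0>1$ for every $r\leq r^\ast-2$. Combined with $x_{r^\ast-1}<1$ (which is immediate from the definition of $r^\ast$), this yields $x_r\leq\rho_0^{-(r^\ast-1-r)}$ for all $r\leq r^\ast-1$, and therefore
\begin{align*}
\sum_{r=0}^{r^\ast-1}\frac{x_r^2}{1+x_r}\leq\sum_{r=0}^{r^\ast-1}x_r^2\leq\sum_{j=0}^{\infty}\rho_0^{-2j}=\frac{\rho_0^2}{\rho_0^2-1},
\end{align*}
a constant depending only on $\alpha$.

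For the second sum I split $r\in[r^\ast,\,n-1]$ into three ranges. On $r\in[r^\ast,\,\lfloor\beta_1 n\rfloor]$ the same ratio estimate (now with $\beta=\beta_1$) gives $x_r\geq x_{r^\ast}\rho_1^{r-r^\ast}\geq\rho_1^{r-r^\ast}$ with $\rho_1=(1-\beta_1)/\beta_1>1$, so this partial sum is bounded by $\rho_1/(\rho_1-1)$. On $r\in(\lfloor\beta_1 n\rfloor,\,n/2]$, monotonicity of $\binom{n}{r}$ yields $V_{n,r}\geq\binom{n}{\lfloor\beta_1 n\rfloor}\geq 2^{nh(\beta_1)}/(n+1)$, so each term is at most $(n+1)2^{-n(h(\beta_1)-(1-\alpha))}$; since $h(\beta_1)>h(\beta_0)>1-\alpha$ these terms are exponentially small and their sum over the $O(n)$ indices is $o(1)$. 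Finally, on $r\in(n/2,\,n-1]$ we have $V_{n,r}\geq 2^{n-1}$, hence each term is at most $2\cdot 2^{-k}\leq 2\cdot 2^{-\alpha n}$, and again their sum is $o(1)$. Adding the three contributions to the bound on the first sum produces the asserted constant $c_\alpha$. The only nontrivial step is the opening ratio estimate $V_{n,r+1}/V_{n,r}\geq(1-\beta)/\beta$, which hinges on coupling the volume bound $V_{n,r}\leq\binom{n}{r}(1-\beta)/(1-2\beta)$ with the strict inequality $\beta<1/2$ guaranteed by $k>\alpha n$; once that is in hand the rest is a routine geometric-series calculation.
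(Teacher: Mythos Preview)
Your proof is correct and follows essentially the same approach as the paper: both establish a constant ratio bound $V_{n,r+1}/V_{n,r}\geq\rho>1$ valid for all $r$ below a threshold strictly less than $n/2$, then control each of the two sums in~\eqref{eq:Deltank} by a geometric series, with a separate crude estimate for the far tail of the second sum. The only noteworthy difference is in how the ratio bound is obtained: the paper proves $V_{n,r+1}/V_{n,r}\geq(n-m)/(m+1)$ for $r\leq m$ by a short direct induction on $\gamma_r=\binom{n}{r+1}/V_{n,r}$, whereas you deduce the equivalent $V_{n,r+1}/V_{n,r}\geq(1-\beta)/\beta$ from the standard volume estimate $V_{n,r}\leq\frac{1-\beta}{1-2\beta}\binom{n}{r}$; the paper's single cutoff $m$ also plays the role of both your $\beta_0$ and $\beta_1$, making its tail split two-way rather than three-way, but this is cosmetic.
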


\begin{remark}
The result above requires that $k$ is not too small. We claim that while our requirement $k=\Omega(n)$ may possibly be relaxed, a lower bound on $k$ is unavoidable. In particular, if $k=1$, the expected distortion $D_C$ is minimized by $\CC=\{0\ldots 0,1\ldots 1\}$ (which is a perfect code, that attains the lower bound $D^*_{n,1}$). On the other hand, $\EE[D_\CC]$ is higher by $\Omega(\sqrt{n})$. 
\end{remark}

\begin{remark}
A related quantity to $D_\CC$ is the covering radius $\rcov(\CC)$. In particular, $n D_\CC\leq\rcov(\CC)$. For any code, the covering radius is at least $\reff(n,k)$, and there exist codes with covering radius $\reff(n,k)+O(\log n)$~\cite{cohen1997covering}. One can use such bounds to deduce that there exists $\CC$ with $n (D_\CC-D^*_{n,k})=O(\log n)$. Our Theorem~\ref{thm:DcConstantGap} is significantly tighter as it bounds this gap by a constant.
\end{remark}

\subsubsection{Lossy compression of binary symmetric source under Hamming distortion} Let $X\sim\mathrm{Ber}^{\otimes n}(1/2)$, and let $0\leq R\leq 1$. An $(n,R,D)$ code for $X$ consists of an encoder\footnote{For $x>1$ we write $\lfloor x \rfloor$ to denote the largest integer smaller than $x$, and $[\lfloor x \rfloor]=\{1,\ldots,\lfloor x \rfloor\}$.} $f:\F_2^n\to [\lfloor 2^{nR} \rfloor]$. and a decoder $g:[\lfloor 2^{nR} \rfloor]\to\F_2^n$, such that
\begin{align}
\frac{1}{n}\EE[d_H(X,g(f(X))]\leq D.    
\end{align}
Denote
\begin{align}
D_n(R)=\min\{D~:~\exists (n,R,D)-\text{code}\}.    
\end{align}
It is well-known~\cite{CoverThomas,PWbook} that 
\begin{align}
D(R)\leq D_n(R)< D(R)+O\left(\frac{\log n}{n}\right).
\end{align}
where
\begin{align}
D(R)=1-h_2^{-1}(R),    
\end{align}
and $h_2(p)=-p\log_2(p)-(1-p)\log_2(1-p)$, whereas $h_2^{-1}(\cdot)$ is its inverse restricted to $[0,1/2)$.
In fact,~\cite[Theorem 1]{Zhang97} shows that
\begin{align}
D_n(R)=D(R)-D'(R)\frac{\log n}{2\log n}+o\left(\frac{\log n}{n}\right) \end{align}
The following claim is essentially\footnote{Our definition of $D^*_{n,k}$ corresponds to the average distortion of a quasi-ball with size $2^{n-k}$, whereas Goblick lower bounds the expected distortion of a code with $2^k$ codewords by the average distortion of the largest ball whose size is at most than $2^{n-k}$.} stated in~\cite[eq. 4.6-4.7]{goblick1963coding}, and for completeness of exposition we provide a proof in Appendix~\ref{app:discreteCDFproperties}.
\begin{lem}
Assume $R=\frac{k}{n}$ for some integer $0\leq k\leq n$, then $D_n(R)\geq D^*_{n,k}$.
\label{lem:sphereDistortion}
\end{lem}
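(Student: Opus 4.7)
The plan is to reduce the problem to a combinatorial optimization over partitions of $\F_2^n$ into labeled blocks with a representative for each block, and then argue that any such partition is dominated by one whose blocks are translates of the quasi-ball $\Beq$. Since $R=k/n$ forces $\lfloor 2^{nR}\rfloor=2^k$, I would fix an arbitrary $(n,R,D)$-code with encoder $f$ and decoder $g$, set $A_i\df f^{-1}(i)$ and $c_i\df g(i)$ for $i\in\{1,\ldots,2^k\}$, and write
\[
\EE[d_H(X,g(f(X)))]=\frac{1}{2^n}\sum_{i=1}^{2^k}\sum_{x\in A_i}d_H(x,c_i).
\]

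Next I would translate away the codewords. Since $y\mapsto y+c_i$ is a bijection of $\F_2^n$, the set $A_i'\df A_i+c_i$ has the same cardinality $|A_i|$ and satisfies $\sum_{x\in A_i}d_H(x,c_i)=\sum_{y\in A_i'}|y|$. For any integer $m$, the minimum of $\sum_{y\in B}|y|$ over $m$-subsets $B\subset\F_2^n$ is attained by taking $m$ vectors of smallest Hamming weight, since swapping any $y\in B$ with a lighter $y'\notin B$ strictly decreases the sum. Write $S(m)$ for this minimum; then $\sum_{x\in A_i}d_H(x,c_i)\geq S(|A_i|)$, and in particular $S(2^{n-k})=\sum_{y\in\Beq}|y|=2^{n-k}\,\EE|U_{\Beq}|$ by the very definition of $\Beq$.

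The crux of the proof is a convexity step. Successive differences $S(m+1)-S(m)$ equal the Hamming weight of the $(m+1)$-st vector in a weight-monotone enumeration of $\F_2^n$, which is non-decreasing in $m$; hence $S$ is convex on $\{0,1,\ldots,2^n\}$. Since the $2^k$ block sizes $|A_i|$ sum to $2^n$, the discrete Jensen inequality gives $\sum_{i=1}^{2^k}S(|A_i|)\geq 2^k\,S(2^{n-k})$. Chaining the inequalities,
\[
\frac{1}{n}\EE[d_H(X,g(f(X)))]\geq \frac{2^k\,S(2^{n-k})}{n\cdot 2^n}=\frac{S(2^{n-k})}{n\cdot 2^{n-k}}=\frac{\EE|U_{\Beq}|}{n}=D^*_{n,k},
\]
and taking the infimum over codes yields $D_n(R)\geq D^*_{n,k}$.

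The only subtle point I anticipate is articulating the convexity of $S$ and invoking Jensen cleanly; the translation reduction and the greedy selection of low-weight vectors are routine. A variant that sidesteps explicit convexity is to observe that among all $2^k$-tuples of nonnegative integers summing to $2^n$, the sum $\sum_i S(m_i)$ is minimized at $m_i\equiv 2^{n-k}$, by a direct exchange argument on any pair $m_i>m_j$ using the non-decreasing differences of $S$.
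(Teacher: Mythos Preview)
Your proof is correct but follows a genuinely different route from the paper. The paper works with the CDF of the random variable $Y=\min_{c\in\CC}d_H(X,c)$, where $\CC$ is the image of the decoder: it writes $nD\geq\sum_{y=0}^{n}\bigl(1-\Pr(Y\leq y)\bigr)$ and bounds $\Pr(Y\leq y)=2^{-n}\,|\{x:d_H(x,\CC)\leq y\}|\leq\min\{2^{k-n}V_{n,y},1\}$ by a union bound over the $2^k$ Hamming balls of radius $y$ centered at the codewords; summing over $y$ reproduces exactly $nD^*_{n,k}$ from its definition. Your approach instead partitions $\F_2^n$ into encoder cells, lower-bounds each cell's contribution by $S(|A_i|)$ via the greedy low-weight selection, and then invokes convexity of $S$ together with Jensen across the $2^k$ cells. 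The paper's argument is shorter---just the layer-cake formula plus a union bound, with no convexity needed---whereas yours requires the extra step of establishing and applying convexity of $S$. On the other hand, your decomposition makes transparent that the bound is tight only when every encoder cell is, up to translation, a quasi-ball of size exactly $2^{n-k}$, a structural fact the CDF/union-bound proof leaves implicit.
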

In light of this, and of Theorem~\ref{thm:DcConstantGap}, we have the following.
\begin{thm}
Let $0\leq k\leq n$ be an integer, and $R=\frac{k}{n}>\alpha>0$. Then
\begin{align}
D_n(R)=D^*_{n,k}+O_{\alpha}\left(\frac{1}{n}\right),    
\end{align}
and this is achievable using a linear code.
\end{thm}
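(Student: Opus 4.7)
The result is an essentially immediate corollary of Lemma~\ref{lem:sphereDistortion} and Theorem~\ref{thm:DcConstantGap}; the plan is simply to combine the matching lower and upper bounds. There is no serious technical obstacle, since the heavy lifting is already contained in Theorem~\ref{thm:DcConstantGap}; the only thing to verify carefully is that the probabilistic-method construction actually yields a valid $(n,R,D)$ compression scheme in the sense of the definition preceding the theorem.

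\textbf{Lower bound.} Since $R = k/n$, we have $2^{nR} = 2^k \in \N$, hence $\lfloor 2^{nR}\rfloor = 2^k$. Lemma~\ref{lem:sphereDistortion} then yields directly that $D_n(R) \geq D^*_{n,k}$.

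\textbf{Upper bound.} By Theorem~\ref{thm:DcConstantGap} applied with $\CC \sim \Unif(\Gr_{n,k}(\F_2))$,
\begin{equation*}
\EE[D_\CC] \;=\; D^*_{n,k} + \frac{\Delta(n,k)}{n} \;\leq\; D^*_{n,k} + \frac{c_\alpha}{n}
\end{equation*}
for a constant $c_\alpha$ depending only on $\alpha$. By the probabilistic method there exists a specific linear code $\CC^* \in \Gr_{n,k}(\F_2)$ with $D_{\CC^*} \leq D^*_{n,k} + c_\alpha/n$. I will convert $\CC^*$ into an $(n,R,D_{\CC^*})$-code in the formal sense as follows: enumerate the $2^k = 2^{nR}$ codewords of $\CC^*$ as $c_1,\ldots,c_{2^k}$, let the encoder $f:\F_2^n \to [2^k]$ send $x$ to the unique index $i$ with $x \in c_i + \VC^{*}$ (well defined because the translates $\{c_i + \VC^*\}_{i=1}^{2^k}$ partition $\F_2^n$), and let the decoder be $g(i) = c_i$. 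For $X \sim \mathrm{Ber}^{\otimes n}(1/2) = \Unif(\F_2^n)$, the residue $X - g(f(X))$ is uniform on $\VC^*$, since the partition into translates is a bijection with $\VC^*$ on each coset. Consequently,
\begin{equation*}
\frac{1}{n}\EE\bigl[d_H(X, g(f(X)))\bigr] \;=\; \frac{1}{n}\EE|U_{\VC^*}| \;=\; D_{\CC^*} \;\leq\; D^*_{n,k} + \frac{c_\alpha}{n},
\end{equation*}
so $D_n(R) \leq D^*_{n,k} + c_\alpha/n$.

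Combining the two bounds gives $D_n(R) = D^*_{n,k} + O_\alpha(1/n)$, and the matching upper bound is achieved by the linear code $\CC^*$, as required.
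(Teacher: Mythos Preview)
Your proof is correct and follows essentially the same approach as the paper: the lower bound via Lemma~\ref{lem:sphereDistortion}, the upper bound via Theorem~\ref{thm:DcConstantGap} combined with the probabilistic method, and the same Voronoi-partition encoder/decoder construction. Your write-up is slightly more explicit about why the reconstruction error is uniform on $\VC$, but the argument is the same.
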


\begin{proof}
The lower bound $D_n(R)\geq D^*_{n,k}$ follows from Lemma~\ref{lem:sphereDistortion}. For the upper bound, we construct $f$ and $g$ from a linear code $\CC$ as follows. We enumerate the $2^k$ points in $\CC$. The encoder $f(X)$ outputs the index of the codeword $c\in\CC$ for which $X\in c+\VC$, and the decoder $g$ outputs $c$ based on this index. The reconstruction error is uniformly distributed on $\VC$ and therefore the obtained distortion is $D_\CC$. By Theorem~\ref{thm:DcConstantGap}, there must exist a code $\CC$ with $D_\CC\leq \EE[D_\CC]<D^*_{n,k}+c_\alpha/n$, establishing our claim.
\end{proof}

\begin{remark}
Let $\CC$ be the image of the decoder $g(\cdot)$. If $\CC$ is not restricted to be a linear code (a subspace of $\F_2^n$), one can obtain even better upper bounds on the expected distortion. In particular, if we draw the $2^{nR}=2^k$ points of $\CC$ iid uniformly on $\F_2^n$, we will obtain~\cite{kostina2016nonasymptotic},\cite[Ex. V3]{PWbook}
\begin{align}
n\EE[D]&= \sum_{r=0}^n \left(1-2^{-n}V_{n,r}\right)^{2^k}\leq \sum_{r=0}^n \left(1-2^{k-n}V_{n,r}\right)\\
&\leq \sum_{r=0}^n \frac{1}{1+2^{k-n}V_{n,r}},    
\end{align}
where the last bound is the bound from Theorem~\ref{thm:DCLN}. However, the first equality relies on the statistical independence of all $2^k$ codewords. Since codewords in a linear code are just pairwise independent, this derivation is not valid for linear codes.
\end{remark}

\subsection{Error probability of a linear code}

Let $p\in(0,1/2)$ and let $Z\sim\mathrm{Ber}^{\otimes n}(p)$. For a set $\KK\subset\F_2^n$ let
\begin{align}
\mu_p(\KK)\df\Pr(Z\in\KK)=(1-p)^n\sum_{x\in\KK} \left(\frac{p}{1-p}\right)^{|x|}.
\end{align}
By definition, $|\Beq|=|\VC|$, and since $\left(\frac{p}{1-p}\right)^{|x|}$ is monotonically decreasing in $|x|$, we have that
\begin{align}
 \mu_p(\VC)\leq \mu_p(\Beq).   
\end{align}
Let $X\sim\Unif(\CC)$ be statistically independent of $Z$. The error probability in maximum-likelihood decoding of  $X$ from the output $Y=X+Z$ of the binary symmetric channel with crossover probability $p$, is 
\begin{align}
 P_e(\CC,p)=1-\mu_p(\VC).
\end{align}
Clearly,
\begin{align}
P_e(\CC,p)\geq P_e(\Beq,p)=1-\mu_p(\Beq)\df \PSP(p),    
\end{align}
which is referred to as the sphere packing lower bound in the literature. We prove the following.
\begin{thm}
Let $\CC$ be a random code drawn uniformly over $\Gr_{n,k}(\F_2)$. Then, for any $p\in(0,1/2)$ we have that
\begin{align}
&\mathbb{E}[P_e(\CC,p)]\leq \PSP(p)+\frac{1-2p}{1-p} \EE\left[\frac{V_{n,W}}{{{n}\choose{W}}}\cdot\frac{\min\{ x_{W},x^{-1}_{W}\}}{1+x_{W}} \right] \label{eq:PeAddbound1}
\end{align}
where where $W\sim\mathrm{Binomial}(n,p)$, and $x_r=2^{k-n}V_{n,r}$.
\label{thm:PeRCUB}
\end{thm}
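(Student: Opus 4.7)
The plan is to mimic the lattice proof of Theorem~\ref{thm:latticeSPUB2} step by step, substituting the corresponding discrete tools. Define the code-dependent gap
\begin{align*}
\Delta(\CC,p) \df \mu_p(\Beq) - \mu_p(\VC) = P_e(\CC,p) - \PSP(p),
\end{align*}
so that $\mathbb{E}_\CC[P_e(\CC,p)] = \PSP(p) + \mathbb{E}_\CC[\Delta(\CC,p)]$, and it suffices to bound the second term from above.

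The first move is to express $\Delta(\CC,p)$ via the Voronoi spherical CDF using Proposition~\ref{prop:ProbAsCQFexpt_dicrtete}. Since $|\VC|=|\Beq|=2^{n-k}$, applying the formula to both $\VC$ and $\Beq$ makes the $\frac{p}{1-p}p^n$ terms cancel, yielding
\begin{align*}
\Delta(\CC,p) = 2^{n-k}\cdot\frac{1-2p}{1-p}\,\EE_W\left[\frac{Q_{\Beq}(W) - Q_\CC(W)}{\binom{n}{W}}\right],
\end{align*}
where $W=|Z|\sim \mathrm{Binomial}(n,p)$. Using Fubini/Tonelli to take the expectation over $\CC$ inside, and invoking Theorem~\ref{thm:RCmc} (which gives $\EE_\CC[Q_\CC(r)] \geq x_r/(1+x_r)$) together with the closed form $Q_{\Beq}(r) = \min\{x_r,1\}$ from~\eqref{eq:discreteBallCDF}, I obtain
\begin{align*}
\EE_\CC[\Delta(\CC,p)] \leq 2^{n-k}\cdot\frac{1-2p}{1-p}\,\EE_W\left[\frac{1}{\binom{n}{W}}\Bigl(\min\{x_W,1\} - \frac{x_W}{1+x_W}\Bigr)\right].
\end{align*}

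The remaining step is the algebraic simplification. A short case split on whether $x_W\leq 1$ or $x_W>1$ gives
\begin{align*}
\min\{x_W,1\} - \frac{x_W}{1+x_W} = \frac{\min\{x_W^2,1\}}{1+x_W} = \frac{x_W\,\min\{x_W,x_W^{-1}\}}{1+x_W},
\end{align*}
and using $2^{n-k}x_W = V_{n,W}$ absorbs the leading $2^{n-k}$ factor to produce the stated $\frac{V_{n,W}}{\binom{n}{W}}$ weight. Combining everything yields~\eqref{eq:PeAddbound1}.

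None of the steps is genuinely hard: the argument is essentially the $\F_2$-analog of the Jensen-plus-first-moment computation used in the lattice case, with Theorem~\ref{thm:RCmc} playing the role of Theorem~\ref{thm:Jensen} and Proposition~\ref{prop:ProbAsCQFexpt_dicrtete} playing the role of Proposition~\ref{prop:gaussianmeasureasEE}. The only point requiring a bit of attention is bookkeeping: verifying that $|\VC|=|\Beq|$ makes the $\frac{p}{1-p}p^n$ term cancel cleanly, and carrying out the $\min\{x_W^2,1\}=x_W\min\{x_W,x_W^{-1}\}$ rewriting in the right direction so that the final expression matches the claimed form.
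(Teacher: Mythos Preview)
Your proposal is correct and follows essentially the same route as the paper: write $\mathbb{E}[P_e(\CC,p)]=\PSP(p)+\bigl(\mu_p(\Beq)-\EE[\mu_p(\VC)]\bigr)$, apply Proposition~\ref{prop:ProbAsCQFexpt_dicrtete} to both terms (so the $\frac{p}{1-p}p^n$ contributions cancel), bound $\EE_\CC[Q_\CC(r)]$ from below via Theorem~\ref{thm:RCmc}, and then use the identity $\min\{x,1\}-\tfrac{x}{1+x}=\tfrac{\min\{x^2,1\}}{1+x}$ together with $2^{n-k}x_W=V_{n,W}$ to reach~\eqref{eq:PeAddbound1}. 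The paper's proof is step-for-step the same, differing only in notation.
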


\begin{proof}
We have
\begin{align}
\mathbb{E}[P_e(\CC,p)]&=\PSP(p)+\left(\mathbb{E}[P_e(\CC,p)]-\PSP(p)\right)\\
&=\PSP+\left(\mu_p(\Beq)-\EE[\mu_p(\VC)]\right).\label{eq:PeDeltaEq}
\end{align}
By Proposition~\ref{prop:ProbAsCQFexpt_dicrtete} and~\eqref{eq:discreteBallCDF}, we have that
\begin{align}
&\mu_p(\Beq)\nonumber\\
&=2^{n-k}\left(\frac{p}{1-p}\cdot p^n+ \frac{1-2p}{1-p}\EE\left[\frac{\min\{2^{k-n}V_{n,|Z|},1\}}{{{n}\choose{|Z|}}}\right]\right).   
\label{eq:muBeq}
\end{align}
Similarly, by Proposition~\ref{prop:ProbAsCQFexpt_dicrtete} (or part 4 of Proposition~\ref{prop:LC_fundviacdf}) together with Theorem~\ref{thm:RCmc}, we obtain that for $\CC\sim\Unif(\Gr_{n,k}(\F_2))$
\begin{align}
&\EE[\mu_p(\VC)]\nonumber\\
&\geq 2^{n-k}\left(\frac{p}{1-p} p^n+ \frac{1-2p}{1-p}\EE\left[\frac{2^{k-n}V_{n,|Z|}}{{{n}\choose{|Z|}}\left(1+2^{k-n}V_{n,|Z|}\right)}\right]\right).       
\label{eq:muVC}
\end{align}
Consequently, we obtain
\begin{align}
&\Delta^p_{n,k}\df\mu_p(\Beq)- \EE[\mu_p(\VC)]\\
&\leq \frac{1-2p}{1-p} 2^{n-k}\EE\left[\frac{1}{{{n}\choose{|Z|}}}\left(\min\{x_{|Z|},1\}-\frac{x_{|Z|}}{1+x_{|Z|}} \right)\right]\\
&= \frac{1-2p}{1-p} 2^{n-k}\EE\left[\frac{1}{{{n}\choose{|Z|}}}\left(\frac{\min\{x^2_{|Z|},1\}}{1+x_{|Z|}} \right)\right]\\
&= \frac{1-2p}{1-p} 2^{n-k}\nonumber\\
&\cdot\EE\left[\frac{1}{{{n}\choose{|Z|}}}\left(\frac{\min\{2^{k-n}V_{n,|Z|}\cdot x_{|Z|},2^{k-n}V_{n,|Z|}\cdot x^{-1}_{|Z|}\}}{1+x_{|Z|}} \right)\right]\\
&=\frac{1-2p}{1-p} \EE\left[\frac{V_{n,|Z|}}{{{n}\choose{|Z|}}}\cdot\frac{\min\{ x_{|Z|},x^{-1}_{|Z|}\}}{1+x_{|Z|}} \right].
\label{eq:DeltaProbExact}
\end{align}
Substituting this into~\eqref{eq:PeDeltaEq} established the claim.
\end{proof}

\begin{figure*}[t]
\centering

\begin{subfigure}{0.49\textwidth}
    \includegraphics[width=\textwidth]{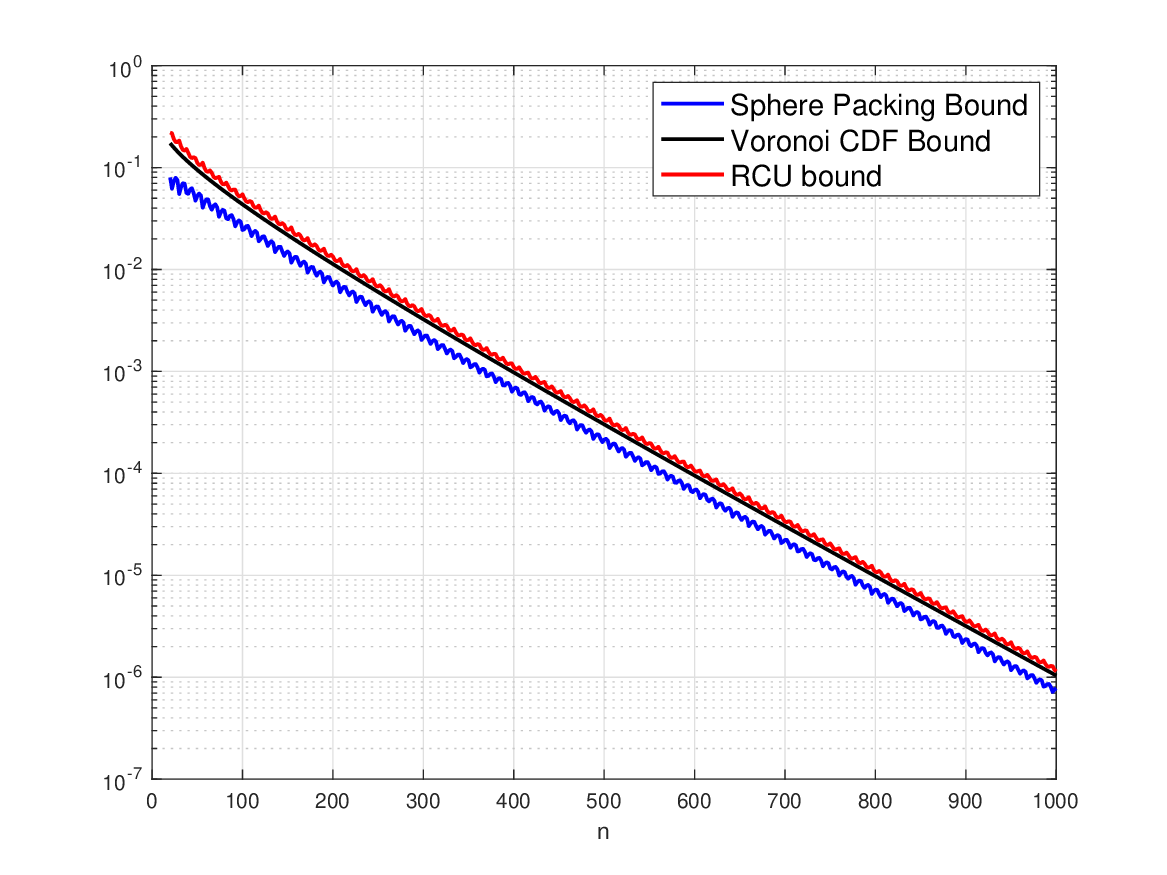}
    \caption{$p=0.07$}
    \label{fig:p007}
\end{subfigure}
\hfill
\begin{subfigure}{0.49\textwidth}
    \includegraphics[width=\textwidth]{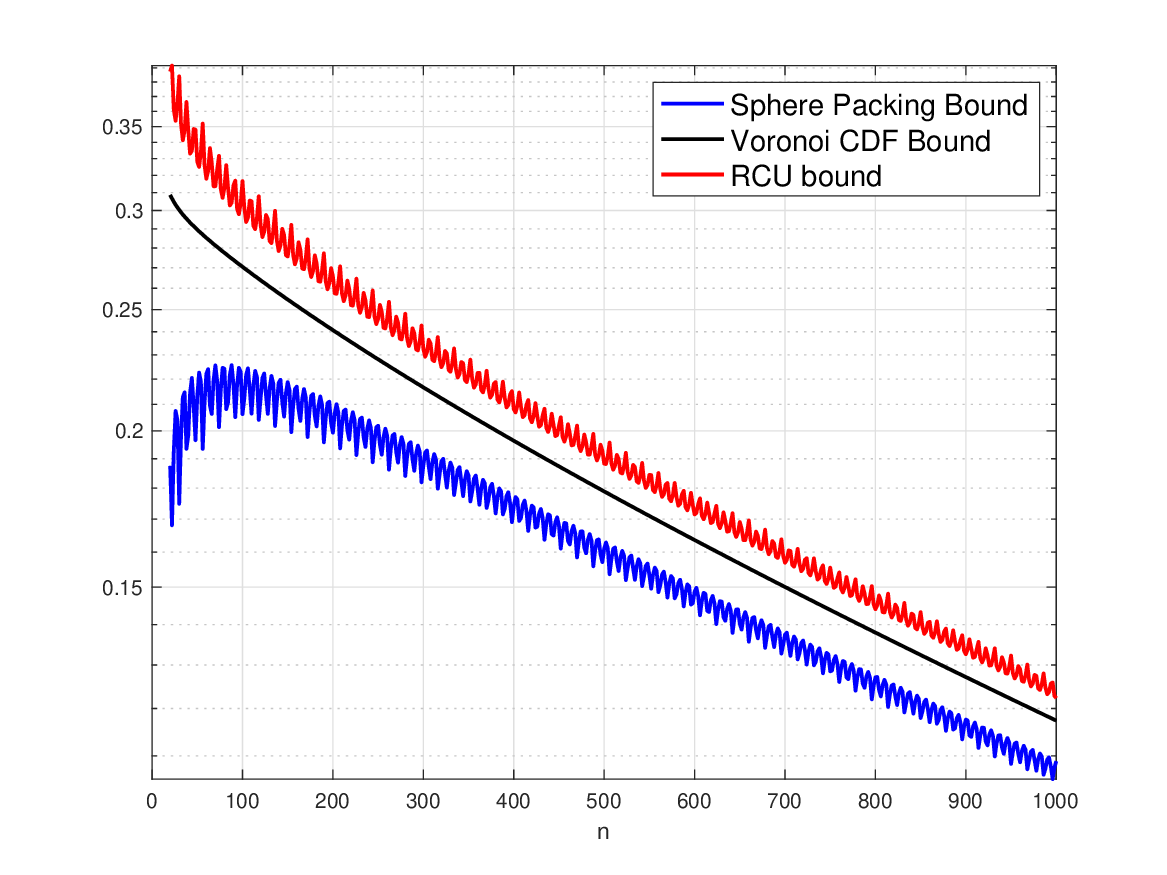}
    \caption{$p=0.1$}
    \label{fig:p01}
\end{subfigure}
\caption{Computation of the sphere packing lower bound, the RCU bound from~\cite[Theorem 33]{ppv10}, and the new upper bound from Theorem~\ref{thm:PeRCUB}. The bounds are plotted for $p=0.07$ and $p=0.1$, even values of $n$ and $k=n/2$.}\label{fig:BSC}
\end{figure*}

\begin{remark}
The benchmark upper bound to date for the binary symmetric channel is the so-called random coding upper bound (RCU bound) from~\cite{poltyrev94} and~\cite[Theorem 33]{ppv10}, which implies that for a random linear code
\begin{align}
\EE[P_e(\CC,p)]\leq \mathrm{RCU}(n,k,p)&\df\EE\left[\min\left\{1,2^{k-n}V_{n,W}\right\} \right].
\end{align}
It is not straightforward to compare the upper bound from Theorem~\ref{thm:PeRCUB} and the RCU bound. However, numerically it seems that the new bound from Theorem~\ref{thm:PeRCUB} does dominate the RCU bound. See Figure~\ref{fig:BSC} for a numeric demonstration.
\end{remark}

\appendices

\section{Proofs of claims for lattices}
\label{app:contCDFproperties}

\begin{proof}[Proof of Proposition~\ref{prop:fundviacdf}]
The first item follows since 
\begin{align}
\rpack(L)&=\sup\{r~:~r\mBeuc\subset\mVL\}\\
&=\sup\left\{r~:~\VcdfL(r)=\left(\frac{r}{\reff}\right)^n\right\},
\end{align}
where in the second equality we have used~\eqref{eq:gasIntersecVol}. The second item follows since
\begin{align}
\rcov(L)=\sup_{x\in\R^n}\min_{y\in L} \|x-y\|_2=\sup_{x\in \mVL}\|x\|_2.   
\end{align}
For the third item, we write
\begin{align}
\EE\|U_L\|^2&=\int_{r=0}^\infty \Pr(\|U_L\|^2> r)dr\\
&= \int_{r=0}^\infty 1-\Pr(\|U_L\|\leq \sqrt{r})dr.   
\end{align}
The fourth item follows directly from Proposition~\ref{prop:ProbAsCQFexpt_dicrtete}.
\end{proof}

\begin{proof}[Proof of Proposition~\ref{prop:gaussianmeasureasEE}]
Denote $C_n=C_{\sigma^2,n}=\frac{1}{(2\pi\sigma^2)^{n/2}}$ and let $U=U_{\KK}\sim\Unif(\KK)$.
It immediately follows that
\begin{align}
&\mu_{\sigma^2}(\KK)=C_n\cdot |\KK|\cdot \EE\left[e^{-\frac{\|U\|_2^2}{2\sigma^2}} \right]\\
&=C_n\cdot |\KK|\cdot \int_{0}^{\infty}\Pr\left(e^{-\frac{\|U\|_2^2}{2\sigma^2}}\geq r \right)dr\\
&=C_n\cdot |\KK|\cdot \int_{0}^{1}\Pr\left(\|U\|_2\leq\left(2\sigma^2 \log\left(\frac{1}{r}\right) \right)^{\frac{1}{2}} \right)dr\\
&=C_n\cdot V_{n}\reff^n \cdot \int_{0}^{1}\Pr\left(\|U\|_2\leq\left(2\sigma^2 \log\left(\frac{1}{r}\right) \right)^{\frac{1}{2}} \right)dr
\label{eq:muintegralformula}
\end{align}
Making the change of variables $r=e^{-\frac{w}{2}}$, $dr=-\frac{1}{2}e^{-\frac{w}{2}}dw$, we obtain
\begin{align}
 &\mu_{\sigma^2}(\KK)=C_n\cdot V_{n}\reff^n\cdot\int_{0}^{\infty} \frac{1}{2}e^{-\frac{w}{2}}\Pr\left(\|U\|_2\leq\sqrt{\sigma^2 w} \right)dw\\
 &=C_n\cdot V_{n}\reff^n \cdot 2^{\frac{n}{2}}\Gamma\left(1+\frac{n}{2}\right)\nonumber\\
&~~~~~~~\cdot\int_{0}^{\infty}\frac{w^{\frac{n+2}{2}-1}e^{-\frac{w}{2}}}{2^{\frac{n+2}{2}}\Gamma\left(\frac{n+2}{2} \right)} \frac{\Pr\left(\|U\|_2\leq\sqrt{\sigma^2 w} \right)}{w^{\frac{n}{2}}}dw\\
 &=C_n\cdot V_{n} \cdot 2^{\frac{n}{2}}\Gamma\left(1+\frac{n}{2}\right)\cdot\EE\left[\frac{\Pr\left(\|U\|_2\leq\sqrt{\sigma^2 W} \right)}{\left(\frac{W}{\reff^2}\right)^{\frac{n}{2}}} \right].
 \label{eq:chi2integralintermediate1}
\end{align}
It remains to evaluate the term multiplying the expectation. Recalling the definitions of $V_{n}$, and of $C_n$, we have
\begin{align}
C_n\cdot V_{2,n} \cdot 2^{\frac{n}{2}}\Gamma\left(1+\frac{n}{2}\right)=\frac{1}{(\sigma^2)^{\frac{n}{2}}}.  
\end{align}
Substituting this into~\eqref{eq:chi2integralintermediate1}, establishes the claim.
\end{proof}

\section{Proof of Theorem~\ref{thm:gcovbound}}
\label{app:tildemubound}
We now recall some fundamental results of Rogers and
Schmidt. For a Borel measurable subset $J \subset \R^n$, and a lattice
$L \in \LL_n$, let
$$\vre(J, L) \df 1-m_L\left(\pi_L(J) \right);$$
equivalently, 
$\vre(J, L)$ is the density of points in $\R^n$ not covered by $L
+J$. Also recall the definition of $\eta$ from~\eqref{eq: def eta}.
With these notations, the following is a corollary of \cite{Schmidt-admissible}
(see also \cite{Rogers_bound}, where a similar bound was shown for larger $\eta$, but with a large constant $\crog$ instead of the constant $7$ below):
\begin{thm}[Corollary of Theorem 4 of~\cite{Schmidt-admissible}]
\label{thm: Rogers bound}
  For all $n >13$ and
  for every Borel measurable $J \subset \R^n$
  with
  $$V \df |J| \leq \eta$$
  and bounded diameter we have
  $$\left|   \EE_{\mu_n}[\vre(J, L)] - e^{-V}
    \right | < 7 \cdot e^{-\eta}.$$  
  \end{thm}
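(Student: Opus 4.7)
The plan is to derive the estimate as a quantitative consequence of Schmidt's moment calculations for Haar-Siegel random lattices.

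The starting point is the following inclusion-exclusion identity. For a fixed lattice $L \in \LL_n$ and $J \subset \R^n$ of bounded diameter, let $N(y) = |(y+L) \cap J|$. Then $\mathbf{1}\{N(y)=0\} = \sum_{k \geq 0}(-1)^k\binom{N(y)}{k}$, where the sum is a polynomial in $N(y)$. Integrating over a fundamental domain of $L$ (of volume one) and unfolding the base-point $\lambda_1 \in L$ produces
\begin{align*}
\vre(J, L) = \sum_{k=0}^\infty \frac{(-1)^k}{k!}\Lambda_k(L, J), \qquad \Lambda_k(L, J) \df \sum_{\substack{\mu_2,\ldots,\mu_k \in L\setminus\{0\}\\ \text{distinct}}}\Big|J\cap\bigcap_{i=2}^k(J-\mu_i)\Big|,
\end{align*}
with the conventions $\Lambda_0=1$ and $\Lambda_1=V$. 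The bounded-diameter hypothesis ensures that only finitely many $\mu_i$ contribute to each $\Lambda_k$, justifying the interchange of integration and summation.

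Next, I would take expectation with respect to $\mu_n$. Schmidt's Theorem 4 provides a quantitative formula for $\EE_{\mu_n}[\Lambda_k(L, J)]$ as a sum of a main term plus corrections indexed by lower-rank rational sublattices. The main term is the ``Fubini integral'' over $(\R^n)^{k-1}$,
\begin{align*}
\frac{1}{k!}\int_{(\R^n)^{k-1}}\Big|J\cap\bigcap_{i=2}^k(J-x_i)\Big|\,dx_2\cdots dx_k = \frac{V^k}{k!},
\end{align*}
so the main terms sum across $k$ to exactly $e^{-V}$. The corrections come from tuples $(\mu_2,\ldots,\mu_k)$ lying in proper rational subspaces; Schmidt's sharpening of Rogers' estimate bounds each such correction, and the hypothesis $V \leq \eta$ is used to keep the alternating tail of the resulting error series summable to at most $7\cdot e^{-\eta}$.

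The main obstacle is extracting Schmidt's explicit constant $7$ in the final error estimate rather than a larger, unspecified constant $\crog$ as in Rogers' earlier treatment. This requires Schmidt's careful accounting of $\SL_n(\Z)$-orbits of $(k-1)$-tuples of non-zero lattice vectors indexed by rank, together with the associated stabilizer-volume computation, the combinatorial factors of which only become favorable in the regime $n > 13$. Once Schmidt's bound on $\EE_{\mu_n}[\Lambda_k(L, J)] - V^k$ is in hand, summing the alternating tail using $V \leq \eta$ is routine. I would therefore invoke Theorem 4 of \cite{Schmidt-admissible} as a black box for this final step, rather than reproducing its subspace-by-subspace estimate.
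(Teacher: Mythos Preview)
Your approach is a genuine alternative to the paper's, but it differs in an important way and you mischaracterize what Schmidt's Theorem~4 actually says.

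The paper does \emph{not} unpack inclusion--exclusion or moment formulas at all. Schmidt's Theorem~4 already gives, as a packaged statement, the admissibility probability
\[
\mu_n\{L\in\LL_n : L \text{ is } J\text{-admissible}\}=e^{-V}(1-R),
\]
together with an explicit bound on $|R|$ (the quantity defined in \cite[eq.~(3)]{Schmidt-admissible}). The paper's proof consists of two short steps: (i) checking directly from Schmidt's explicit error terms that $e^{-V}|R|<7e^{-\eta}$ once $V\le\eta$; and (ii) relating $\vre(J,L)$ to admissibility probabilities by writing it as an average of $\Ind\{L\cap(x+J)=\emptyset\}$ over translates $x$ in a large cube, where the cube is shifted far enough from the origin that no translate $x+J$ contains a pair of antipodal points $\{y,-y\}$ (a hypothesis needed for Schmidt's admissibility formulation). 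Bounded convergence then lets one swap expectation and limit.

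Your inclusion--exclusion route, expanding $\Ind\{N(y)=0\}$ into alternating moments $\Lambda_k$ and then invoking Rogers-type mean-value formulas for $\EE_{\mu_n}[\Lambda_k]$, is essentially a re-derivation of what Schmidt already proved inside his Theorem~4. It can be made to work, but it is longer, and it requires careful control of the higher-moment error terms uniformly in $k$ to recover the constant $7$. You attribute the moment formula to Schmidt's Theorem~4, but that theorem states the admissibility probability directly; the moment formulas are Rogers'. You also do not address the antipodal-point issue, which in the moment-based approach manifests as the distinction between primitive and non-primitive lattice vectors in Rogers' formula, and in the paper's approach is handled cleanly by the translation trick.
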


\begin{proof}
First, it is straightforward to verify that for $V\leq \eta$ we have that
\begin{align}
e^{-V}(3/4)^{n/2}e^{4V}\leq e^{-\eta},~~ e^{-V}V^{n-1}n^{-n+1}e^{V+n}\leq e^{-\eta},    
\end{align}
and consequently for $|R|$ defined in~\cite[eq. (3)]{Schmidt-admissible}, we have $e^{-V}|R|<7e^{-\eta}$. 

Let $z=\diam(J)\cdot\sqrt{\frac{1}{4n}}\cdot[1~\cdots ~1]^\top\in\R^n$, and define $\mathrm{CUBE}(a)=z+[0,a)^n\subset\R^n$. Note that for any $t\in \mathrm{CUBE}(a)$ there is no $x\in\R^n$ such that both $x\in (t+J)$ and $-x\in (t+J)$. We have that
\begin{align}
\vre(J, L)=\lim_{a\to\infty}\frac{1}{a^n}\int_{x\in \mathrm{CUBE}(a)}\hspace{-10mm}\Ind\{|L\cap(x+J)|=\emptyset\} dx.    
\end{align}
Consequently, by the bounded convergence theorem
\begin{align}
&\EE_{\mu_n}[\vre(J, L)]\\
&=\lim_{a\to\infty}\frac{1}{a^n}\int_{x\in \mathrm{CUBE}(a)}\hspace{-10mm}\EE_{\mu_n}[\Ind\{|L\cap(x+J)|=\emptyset\}] dx\\
&=\lim_{a\to\infty}\frac{1}{a^n}\int_{x\in \mathrm{CUBE}(a)}\hspace{-12mm}\mu_n\left\{L\in\LL_n~:~L~\text{is}~(x+J)-\text{admissible} \right\} dx\\
&=\lim_{a\to\infty}\frac{1}{a^n}\int_{x\in \mathrm{CUBE}(a)}\hspace{-10mm}e^{-V}(1-R) dx\label{eq:schmidtbound}\\
&\in e^{-V}\pm 7e^{-\eta},
\end{align}
where in~\eqref{eq:schmidtbound} we have applied~\cite[Theorem 4]{Schmidt-admissible}, and in the last step we used the fact that $e^{-V}|R|\leq 7 e^{-\eta}$ provided that $|J|\leq \eta$.
\end{proof}  

Applying Theorem~\ref{thm: Rogers bound} with $J=r\BB$, whose volume is $|J|=(r/\reff)^n$, we obtain the following.
\begin{cor}
Assume $(r/\reff)^n\leq\eta$. Then, for $L\sim\mu_n$ we have
\begin{align}
\EE[g_L(r)]\geq 1-e^{-(r/\reff)^n}- 7 \cdot e^{-\eta}.   
\end{align}
\label{cor:rog}
\end{cor}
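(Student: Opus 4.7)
The plan is to apply Theorem~\ref{thm: Rogers bound} directly with $J = r\mBeuc$ and then translate the statement from the language of $\vre(J,L)$ to the language of the Voronoi spherical CDF $\VcdfL(r)$. I first need to check that the hypotheses of Theorem~\ref{thm: Rogers bound} are satisfied with this choice of $J$: the ball $r\mBeuc$ is Borel measurable, has bounded diameter $2r$, and its volume is $|r\mBeuc| = V_n r^n = (r/\reff)^n$ by the definition $\reff = V_n^{-1/n}$. The standing assumption $(r/\reff)^n \leq \eta$ is exactly the condition $V \leq \eta$ required in the theorem, and the ambient dimension is already assumed to satisfy $n \geq 13$.

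Plugging into Theorem~\ref{thm: Rogers bound} with $V = (r/\reff)^n$ yields
\begin{equation*}
\left| \EE_{\mu_n}[\vre(r\mBeuc, L)] - e^{-(r/\reff)^n} \right| < 7 \cdot e^{-\eta}.
\end{equation*}
In particular, taking the upper bound on one side,
\begin{equation*}
\EE_{\mu_n}[\vre(r\mBeuc, L)] \leq e^{-(r/\reff)^n} + 7 e^{-\eta}.
\end{equation*}

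Next I translate back to $\VcdfL(r)$. By definition $\vre(J, L) = 1 - m_L(\pi_L(J))$, while equation~\eqref{eq:RprojgC} identifies $\VcdfL(r) = m_L(\pi_L(r\mBeuc))$. Hence $\vre(r\mBeuc, L) = 1 - \VcdfL(r)$, and taking expectations gives $\EE_{\mu_n}[\vre(r\mBeuc, L)] = 1 - \EE_{\mu_n}[\VcdfL(r)]$. Substituting into the displayed bound and rearranging yields
\begin{equation*}
\EE_{\mu_n}[\VcdfL(r)] \geq 1 - e^{-(r/\reff)^n} - 7 e^{-\eta},
\end{equation*}
which is the claim.

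There is no substantive obstacle here: the proof is a bookkeeping exercise of verifying the hypotheses of Theorem~\ref{thm: Rogers bound} for the ball and then invoking the identity $\pi_L(r\mBeuc) = r\mBeuc \cap \mVL$ derived around~\eqref{eq:RprojgC}. The real content is in Theorem~\ref{thm: Rogers bound} itself (whose proof upgrades Schmidt's admissibility estimate with a quantitatively sharp remainder), and the only care required is to note that because $|r\mBeuc|$ equals $(r/\reff)^n$ exactly, no hidden constants are introduced when converting between the two formulations.
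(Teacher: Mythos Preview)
Your proposal is correct and is precisely the paper's approach: the paper states the corollary as an immediate application of Theorem~\ref{thm: Rogers bound} with $J = r\mBeuc$ and $|J| = (r/\reff)^n$, and you have simply spelled out the two-line translation between $\vre(r\mBeuc,L)$ and $1 - \VcdfL(r)$ via~\eqref{eq:RprojgC}.
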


Also, a straightforward application of Markov's inequality gives the following Corollary of Theorem~\ref{thm: Rogers bound}.
\begin{cor}
For $J$ with volume $|J|=\eta$, we have
\begin{align}
\mu_n\left(\left\{L\in\LL_n~:~\vre(J,L)>e^{-\eta/2}\right\} \right)\leq 8 e^{-\eta/2}    
\end{align}
\label{cor:schmidtMarkov}
\end{cor}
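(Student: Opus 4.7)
The plan is to derive this corollary as a direct Markov-type consequence of Theorem~\ref{thm: Rogers bound}. First I would note that $\vre(J,L) \in [0,1]$ for every lattice $L$, since it is the density of points not covered by $L+J$; in particular it is a nonnegative random variable when $L \sim \mu_n$, so Markov's inequality is applicable.

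Next I would invoke Theorem~\ref{thm: Rogers bound} with the specific choice $V=|J|=\eta$. The hypothesis of that theorem is precisely $V \leq \eta$ together with bounded diameter, so it applies (the diameter of $J$ is assumed bounded, as in the statement of the theorem we are citing). The theorem yields
\begin{equation*}
\EE_{\mu_n}[\vre(J,L)] \;\leq\; e^{-\eta} + 7\,e^{-\eta} \;=\; 8\,e^{-\eta}.
\end{equation*}

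Finally, Markov's inequality gives
\begin{equation*}
\mu_n\!\left(\left\{L \in \LL_n \,:\, \vre(J,L) > e^{-\eta/2}\right\}\right) \;\leq\; \frac{\EE_{\mu_n}[\vre(J,L)]}{e^{-\eta/2}} \;\leq\; \frac{8\,e^{-\eta}}{e^{-\eta/2}} \;=\; 8\,e^{-\eta/2},
\end{equation*}
which is the desired bound.

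There is no real obstacle here; the argument is purely a threshold-level application of Markov's inequality, and the only quantitative content is already contained in Theorem~\ref{thm: Rogers bound}. The choice of threshold $e^{-\eta/2}$ is made so that the square root of $e^{-\eta}$ balances the expectation bound with the probability bound, yielding the same decay rate $e^{-\eta/2}$ on both sides.
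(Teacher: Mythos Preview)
Your proposal is correct and is exactly the approach the paper takes: it states that the corollary is ``a straightforward application of Markov's inequality'' to Theorem~\ref{thm: Rogers bound}, and your computation $\EE_{\mu_n}[\vre(J,L)]\leq e^{-\eta}+7e^{-\eta}=8e^{-\eta}$ followed by dividing by the threshold $e^{-\eta/2}$ is precisely what is intended.
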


Tracking down the constants in~\cite[Theorem 1.2]{ORW21}, gives the following.
\begin{thm}[Special case of Theorem 1.2 from~\cite{ORW21}] For 
\begin{align}
\mathcal{E}_\eta=\{L\in\LL_n~:~\rcov(L)\leq 4 n^2 \eta\}    
\end{align}
and $L\sim \mu_n$, we have
\begin{align}
\Pr(L\notin \mathcal{E}_\eta)\leq 16 e^{-\eta/2}.    
\end{align}
\label{thm:orwconst}
\end{thm}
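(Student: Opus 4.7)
The plan is to deduce this as a specialization of Theorem~1.2 of~\cite{ORW21} with the specific choice $\eta = \frac{n}{8}\log(4/3)$, while explicitly tracking the implicit constants. Since the underlying ORW21 theorem is already a concentration bound on $\rcov(L)$ under $\mu_n$, the task essentially reduces to verifying that this choice of $\eta$ yields exactly the stated constants $4n^2$ and $16$.

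The argument combines two ingredients. The first is already available as Corollary~\ref{cor:schmidtMarkov}: Schmidt's mean-value estimate (Theorem~\ref{thm: Rogers bound}) applied with $|J|=\eta$, followed by a Markov step, gives that $\vre(J,L)\leq e^{-\eta/2}$ outside an event of probability at most $8 e^{-\eta/2}$. The second ingredient converts this bound on the \emph{density} of uncovered points into a bound on the worst-case distance to $L$. The geometric input is that whenever $\rcov(L)>R$, a ball of radius $\delta<R$ around any witness point lies entirely at distance exceeding $R-\delta$ from $L$, so the density of uncovered points at level $R-\delta$ is at least $V_n\delta^n$ (at least when $\delta$ is below the packing radius, so that the ball injects into $\TT_L$). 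Combining this with Schmidt--Markov and optimizing $\delta$ produces a tail bound on $\rcov(L)$, to which a union over a suitable net on $\TT_L$ is then applied.

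The main obstacle is that a direct execution of this plan only gives $(\rcov(L)/\reff)^n \leq (\eta^{1/n} + C e^{-\eta/(2n)})^n$, which is exponential in $n$, whereas one needs the polynomial bound $4n^2\eta$. Closing this gap is precisely what the ORW21 argument achieves, via a carefully chosen net on the torus $\TT_L$ combined with a union bound against the sharp $e^{-|J|}$ tail in Schmidt's estimate, possibly iterating the density-to-covering step. The remaining work is therefore to substitute $\eta = \frac{n}{8}\log(4/3)$ into that refined argument and verify that every prefactor accumulated along the way stays within the advertised $16 e^{-\eta/2}$.
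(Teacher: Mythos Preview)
Your high-level plan matches the paper's: both simply invoke the proof of \cite[Theorem~1.2]{ORW21}, replace the Schmidt-type input there by Corollary~\ref{cor:schmidtMarkov}, and track constants. However, your speculation about the internal mechanism of the ORW21 argument---a net on $\TT_L$ with a union bound, ``possibly iterating'' the density-to-covering step---is not how that proof works, and the sketch you give would not by itself produce the polynomial factor $4n^2$; you correctly diagnose that your geometric step alone yields only an exponential bound, but the fix you propose is the wrong one.

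The ORW21 proof carries a free auxiliary parameter $p$ (unrelated to any norm), and the paper's entire constant-tracking consists of taking $V=\eta$ together with any $n<p<2n$: the covering density then comes out as $p^2 V<4n^2\eta$, and the failure probability as
\[
8e^{-\eta/2}+e^{2n/p}e^{-\eta/2}\leq (8+e^2)\,e^{-\eta/2}<16\,e^{-\eta/2}.
\]
The paper also records that $\eta>2\ln 2$ for $n\geq 13$, which is needed as a hypothesis for the ORW21 argument. So the gap in your proposal is one of execution rather than strategy: you have not identified the parameter that actually governs the ORW21 bound, and without it the ``remaining work'' you defer to your last sentence cannot be carried out along the lines you describe.
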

To obtain this result we apply the proof of~\cite[Theorem 1.2]{ORW21}, with $V=\eta$, and $n<p<2n$, using Corollary~\ref{cor:schmidtMarkov} instead of~\cite[Corollary 2.4]{ORW21}. Note also that for $n\geq 13$ we have that $V=\eta>2\ln 2$. This gives that the covering density of $L$ is smaller or equal to $p^2 V=p^2\eta<4n^2 \eta$ with probability at least $$1-\left(8e^{-V/2}+e^{2n/p}e^{-V/2}\right)\geq 1-16 e^{-V/2}=1-16 e^{-\eta/2}.$$

\medskip

With this, we are ready to prove Theorem~\ref{thm:gcovbound}. Recall that $\mu_n$ is the Haar-Siegel probability distribution, and $\tilde{\mu}_n=\mu_{n|\mathcal{E}_\eta}$. For all $r>0$, we have
\begin{align}
&\EE_{\mu_n}[\VcdfL(r)]\nonumber\\
&=\Pr(L\in \mathcal{E}_\eta)\EE[\VcdfL(r)|L\in\mathcal{E}_\eta]+\Pr(L\notin \mathcal{E}_\eta)\EE[\VcdfL(r)|L\notin\mathcal{E}_\eta]\\
&=\Pr(L\in \mathcal{E}_\eta)\EE_{\tilde{\mu}_n}[\VcdfL(r)]+\Pr(L\notin \mathcal{E}_\eta)\EE[\VcdfL(r)|L\notin\mathcal{E}_\eta]\\
&\leq \EE_{\tilde{\mu}_n}[\VcdfL(r)]+ \Pr(L\notin \mathcal{E}_\eta).
\end{align}
Rearranging and applying Corollary~\ref{cor:rog} and Theorem~\ref{thm:orwconst}, we obtain that for $(r/\reff)^n\leq \eta/2$
\begin{align}
\EE_{\tilde{\mu}_n}[\VcdfL(r)]&\geq   1-e^{-(r/\reff)^n}- 7 \cdot e^{-\eta}-  16 e^{-\eta/2}\\
&\geq 1-e^{-(r/\reff)^n}- 23 \cdot e^{-\eta/2}.
\end{align}
By monotonicity of $r\mapsto \VcdfL(r)$, we have that $\EE_{\tilde{\mu}_n}[\VcdfL(r)]\geq 1-24 e^{-\eta/2}$ for $(r/\reff)^n\geq\eta/2$.
Finally, by definition of $\tilde{\mu}_n$ we have that for $(r/\reff)^n> 4 n^2 \eta$ it holds that $\VcdfL(r)=1$ with probability $1$ for $L\sim\tilde{\mu}_n$. This establishes the claimed result.

\section{Proof of Lemma~\ref{lem:ZadorVsCS}}
\label{app:Zador_CS}

We will show that
\begin{align}
\log \frac{\overline{G}_n^{\mathrm{Zador}}}{\underline{G}^{\mathrm{CS}}_n}=O\left(\frac{\log^2{n}}{n^2} \right),    
\end{align}
from which the claimed result immediately follows.

We use the standard asymptotic expansions for large $z$, small $\eps$ and large $n$~\cite[6.141, 6.133]{abramowitz1965handbook}~\cite[Chapter 6.3]{graham1994concrete}
\begin{align}
\log\Gamma(z)
&=\Bigl(z-\tfrac12\Bigr)\log z - z + \tfrac12\log(2\pi)
+O(z^{-1}), \label{eq:stirling}\\
\log\Gamma(1+\varepsilon)
&=-\gamma\,\varepsilon+O(\varepsilon^2),
 \label{eq:taylorGamma}\\
H_{n+2}
&=\log n+\gamma+O(n^{-1}). \label{eq:harmonic}
\end{align}
Here $\gamma$ is the Euler--Mascheroni constant.

\medskip
We start by expanding  $\log \overline{G}_n^{\mathrm{Zador}}$.
By definition of $V_n$ in~\eqref{eq:Vndef}, we have
\begin{align}
\log \overline{G}_n^{\mathrm{Zador}}= -\log(n\pi)+\frac{2}{n}\log\Gamma\!\Bigl(1+\frac n2\Bigr)
+\log\Gamma\!\Bigl(1+\frac{2}{n}\Bigr).
\end{align}
Applying \eqref{eq:stirling} with $z=1+\frac {n}{2}=\frac{n}{2}\left(1+\frac{2}{n} \right)$ and \eqref{eq:taylorGamma} with
$\varepsilon=\frac{2}{n}$ yields after some straightforward algebra
\begin{align}
\frac{2}{n}\log\Gamma\!\Bigl(1+\frac n2\Bigr)&=\left(\log\frac{n}{2}-1 \right)+\frac{\log(\pi n)}{n}+O(n^{-2}),         \\
\log\Gamma\!\Bigl(1+\frac{2}{n}\Bigr)&=\frac{-2\gamma}{n}+O(n^{-2}),
\end{align}
so that
\begin{align}\label{eq:logu}
\log \overline{G}_n^{\mathrm{Zador}}= -\log(2\pi)-1+\frac{\log(\pi n)-2\gamma}{n}
+O(n^{-2}).
\end{align}

\medskip
We proceed to expand $\log \underline{G}^{\mathrm{CS}}_n$.
Inserting \eqref{eq:rogers} into~\eqref{eq:CSlb}, and using
$\log(1+\frac{31}{12n}+O(n^{-2}))=O(n^{-1})$ gives
\begin{align}
&\log \underline{G}^{\mathrm{CS}}_n
= \log\!\Bigl(\frac{n+3-2H_{n+2}}{4n(n+1)}\Bigr)
+\frac{\log(n+1)}{n}\nonumber\\
&~~~~~~~~~~~+\frac{4}{n}\log(n!)
+\frac{2}{n}\log f_n(n)\\
&= \log\!\Bigl(\frac{n+3-2H_{n+2}}{4n(n+1)}\Bigr)
+\log\!\Bigl(\frac{2e}{\pi n}\Bigr)
+\frac{2}{n}\log(n!)\nonumber\\
&+\frac{2\log(n+1)}{n}
-\frac{\log 2+2}{n}
+O(n^{-2}).\label{eq:logell_pre}
\end{align}
Applying \eqref{eq:harmonic}, 
we obtain
\begin{align}
&\frac{n+3-2H_{n+2}}{n+1}
= \frac{n+1-2(H_{n+2}-1)}{n+1}\\
&=1-\frac{2(H_{n+2}-1)}{n+1}\\
&=1-2\frac{\log(n)+(\gamma-1)+O(n^{-1})}{n+1}\\
&=1-\frac{2\log{n}+2(\gamma-1)}{n}+O\left(\frac{\log n}{n^2}\right)
\end{align}
and consequently
\begin{align}
&\log\frac{n+3-2H_{n+2}}{4n(n+1)}\nonumber\\
&= -\log(4n)\nonumber\\
&+\log\left(1-\frac{2\log{n}+2(\gamma-1)}{n}+O\left(\frac{\log n}{n^2}\right) \right)\\
&=-\log(4n)\nonumber\\
&-\frac{2\log{n}+2(\gamma-1)}{n}-\frac{2\log^2(n)}{n}+O\left(\frac{\log(n)}{n^2} \right)
\label{eq:logHn}
\end{align}
Recalling that $n!=\Gamma(1+n)$ and applying \eqref{eq:stirling} at integer $z=n+1$ (which is just Stirling's approximation) we obtain
\begin{align}
\frac{2}{n}\log(n!)=2\log(n)-2+\frac{\log(2\pi n)}{n}+O(n^{-2}).\label{eq:lognfact}   
\end{align}
Substituting~\eqref{eq:logHn} and~\eqref{eq:lognfact} into~\eqref{eq:logell_pre} gives (after careful algebra)
\begin{align}\label{eq:logell}
\log \underline{G}^{\mathrm{CS}}_n= -\log(2\pi)-1+\frac{\log(\pi n)-2\gamma}{n}
\nonumber\\-\frac{2\log^2 n}{n^2}+O\left(\frac{\log n}{n^2}\right).
\end{align}

Combining~\eqref{eq:logu} with~\eqref{eq:logell} gives
\begin{align}
\log\frac{\overline{G}_n^{\mathrm{Zador}}}{\underline{G}^{\mathrm{CS}}_n}
=\frac{2\log^2 n+O(\log n)}{n^2}=O\left(\frac{\log^2 n}{n^2} \right),
\end{align}
as claimed.

\section{Normalized $p$th Moment}
\label{app:pmoment}
Let $p>0$ be a real number. For a unit covolume lattice $L\subset \mathbb{R}^n$ and the $\ell_p$ norm $\|\cdot\|_p$ on $\R^n$, we define the $p$-Voronoi region as
\begin{align}
\mVLp\df\left\{x\in \R^n~:~\|x\|_p\leq \|x-y\|_p,~\forall y\in L\setminus\{0\} \right\},   
\end{align}
where ties are broken in a systematic manner, such that $\mVLp$ is a fundamental cell of $L$. Let 
$\TT_L \df \R^n/L$ be the quotient torus, which is isomorphic to $\mVLp$, and let $m_L$ be the Haar probability
measure on $\TT_L$, and $\pi_L : \R^n \to
\TT_L$ be the quotient map.  

Let $U^{(p)}_L=U_L\sim\Unif(\mVLp)$. We define the normalized $p$th moment of $L$ as
\begin{align}
\GPl\df \frac{\EE(\|U_L\|^p_p)}{n}
\end{align}
Let 
\begin{align}
\mBp\df\{x\in\R^n~:~\|x\|_p\leq  1\}    
\end{align}
be the unit closed ball with respect to $\|\cdot\|_p$, ans let $\Vp$ be its volume. The main result of this section, is the following extension of Theorem~\ref{thm:excpectedN2Mbound}.
\begin{thm}
\label{thm:excpectedNpMbound}
 Let $n$ be an integer and $p>0$, and $L\sim\mu_n$. We have that
 \begin{align}
 \EE[\GPl]\leq \frac{1}{n\Vp^{p/n}}\cdot\frac{1}{\sinc(p/n)}.
 \end{align}
 \end{thm}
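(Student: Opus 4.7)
The plan is to mimic the proof of Theorem~\ref{thm:excpectedN2Mbound}, swapping Euclidean balls for $\ell_p$-balls. First I would generalize Theorem~\ref{thm:Jensen}. Its proof applies Jensen's inequality and Siegel's formula to the identity $m_L(\pi_L(\KK)) = \int_{\KK}(1+|(L\setminus\{0\})\cap(\KK-x)|)^{-1}dx$, which holds for any compact $\KK$. Taking $\KK = r\mBp$, together with the fact that $\pi_L^{(p)}(r\mBp) = r\mBp \cap \mVLp$ (since $0\in L$ forces $\|\pi_L^{(p)}(x)\|_p \leq \|x\|_p$), yields
\begin{equation*}
\EE_{\mu_n}\bigl[\Pr(\|U_L^{(p)}\|_p\leq r)\bigr] = \EE_{\mu_n}\bigl[m_L(\pi_L^{(p)}(r\mBp))\bigr] \geq \frac{|r\mBp|}{1+|r\mBp|} = \frac{r^n \Vp}{1+r^n \Vp}.
\end{equation*}

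Next I would convert the normalized $p$th moment into an integral of the above CDF. Using the tail formula and the substitution $t=r^p$,
\begin{equation*}
n\, \GPl = \EE\|U_L^{(p)}\|_p^p = \int_0^\infty \Pr\bigl(\|U_L^{(p)}\|_p^p > t\bigr)\,dt = \int_0^\infty p\, r^{p-1}\bigl[1-\Pr(\|U_L^{(p)}\|_p\leq r)\bigr]\,dr.
\end{equation*}
Taking expectation over $L\sim\mu_n$, applying Tonelli, and inserting the lower bound above gives
\begin{equation*}
n\,\EE[\GPl] \leq \int_0^\infty \frac{p\, r^{p-1}}{1+r^n \Vp}\,dr.
\end{equation*}

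The final step is to evaluate this integral. The substitution $u = r^n \Vp$ reduces it to
\begin{equation*}
\frac{p}{n\,\Vp^{p/n}}\int_0^\infty \frac{u^{p/n - 1}}{1+u}\,du = \frac{p}{n\,\Vp^{p/n}} \cdot \frac{\pi}{\sin(\pi p/n)} = \frac{1}{\Vp^{p/n}}\cdot\frac{1}{\sinc(p/n)},
\end{equation*}
where we used the classical identity $\int_0^\infty u^{s-1}/(1+u)\,du = \pi/\sin(\pi s)$ for $0<s<1$ (cf.~\eqref{eq:sincintegral}). This requires the implicit restriction $p<n$; otherwise the right-hand side is infinite, consistent with the fact that the slow $1/|\KK|$-tail of Siegel's bound prevents control of moments of order $\geq n$. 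Dividing by $n$ yields the claim.

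No step is genuinely hard — the only mild subtlety is checking that $\pi_L^{(p)}(r\mBp) = r\mBp\cap\mVLp$ as sets in $\mVLp\cong\TT_L$, which follows immediately from the definition of the $\ell_p$-Voronoi cell since $0\in L$ implies $\|\pi_L^{(p)}(x)\|_p\leq \|x\|_p$ for every $x\in\R^n$. Everything else is a direct reuse of the Jensen/Siegel bound and a single change of variables in a beta-function integral.
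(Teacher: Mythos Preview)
Your proposal is correct and follows essentially the same route as the paper's proof: both apply the Jensen/Siegel lower bound $\EE[m_L(\pi_L(\KK))]\geq |\KK|/(1+|\KK|)$ to $\KK=r\mBp$, rewrite $\EE\|U_L\|_p^p$ via the tail formula, and reduce to the $\sinc$ integral. The only cosmetic difference is that the paper parameterizes the tail integral by $t=\|U_L\|_p^p$ (so the integrand is $(1+t^{n/p}\Vp)^{-1}$) while you parameterize by the norm itself and carry the Jacobian $pr^{p-1}$; your explicit remark that the argument requires $p<n$ is a useful addition the paper leaves implicit.
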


\begin{proof}
Fix a unit covolume lattice $L\subset\R^n$. Since $\|U_L\|_p$ is a non-negative random variable, we have that
\begin{align}
\EE\|U_L\|^p_p&=\int_{0}^{\infty}\Pr(\|U_L\|^p_p > r)dr\\
&=\int_{0}^{\infty}\Pr(\|U_L\|_p > r^{1/p})dr. 
\label{eq:nonegativeexpectation}
\end{align}
We further have that 
\begin{align}
\Pr(\|U_L\|_p> r^{1/p})=1-\Pr(\|U_L\|_p\leq r^{1/p}),    
\end{align}
and that
\begin{align}
\Pr(\|U_L\|_p\leq r^{1/p})=|r^{1/p}\mBp\cap \mVLp|=m_L(\pi_L(r^{1/p}\mBp)),    
\end{align}
where the justification of the last equality is similar to that of~\eqref{eq:RprojgC}.

We have therefore obtained that 
\begin{align}
\EE\|U_L\|^p_p=\int_{0}^{\infty}1-m_L(\pi_L(r^{1/p}\mBp)) dr.
\end{align}
Now, further averaging with respect to $L\sim\mu_n$ and using Tonelli's Theorem, we get
\begin{align}
\EE\|U_L\|^p_p=\int_{0}^{\infty}1-\EE[m_L(\pi_L(r^{1/p}\mBp))] dr.
\end{align}
Using~\eqref{eq:Eml_lb} applied with $\KK=r^{1/p}\mBp$, we have
that
\begin{align}
\EE[m_L(\pi_L(r^{1/p}\mBp))]\geq \frac{|r^{1/p}\mBp|}{1+|r^{1/p}\mBp|} =\frac{r^{n/p}\Vp}{1+r^{n/p}\Vp},   
\end{align}
and therefore
\begin{align}
\EE\|U_L\|^p_p&\leq \int_{0}^{\infty}\frac{1}{1+(r\Vp^{p/n})^{n/p}} dr\\
&=\frac{1}{\Vp^{p/n}}\int_{0}^{\infty}\frac{1}{1+t^{n/p}} dt.
\end{align}
Finally, using~\eqref{eq:sincintegral} we have that for any $\nu>0$,
\begin{align}
\int_{0}^{\infty}\frac{1}{1+t^\nu} dt=\frac{\pi/\nu}{\sin(\pi/\nu)}=\frac{1}{\sinc(1/\nu)},
\end{align}
and we obtain that
\begin{align}
\EE\|U_L\|^p_p&\leq \frac{1}{\Vp^{p/n}}\cdot\frac{1}{\sinc( p/n)},
\end{align}
establishing the claimed result.
\end{proof}

To put this result in context, it is easy to see that for any unit covolume lattice $L\subset\R^n$ it holds that
\begin{align}
\GPl\geq G^*_{n,p}\df \frac{1}{(n+p)\Vp^{p/n}}.    
\end{align}
This follows by setting $S=r_{\text{eff}}\mBp$, where $r_{\text{eff}}=\frac{1}{\Vp^{1/n}}$ is chosen so that $|S|=1=|\mVLp|$, and computing $\EE\|U_{S}\|_p^p$ for $U_S\sim\Unif(S)$. Clearly, $\EE\|U_{L}\|_p^p\geq \EE\|U_S\|_p^p$, and $\EE\|U_{S}\|_p^p$ can be computed as in~\eqref{eq:nonegativeexpectation}, noting that
\begin{align}
  \Pr(\|U_S\|^p_p\geq r)=\begin{cases}
  1-r^{n/p}\Vp & 0\leq r\leq r_{\text{eff}}^p\\
  0 & r>r_{\text{eff}}^p
  \end{cases}.
\end{align}
The following is the analogue of Lemma~\ref{lem:Markov2} for general $p>0$.

\begin{lem}
\label{cor:NpMconcentration}
Let  $p>0$ and $n\geq 4 p$ be an integer. Then, for $L\sim\mu_n$ 
 \begin{align}
 \Pr\left(\GPl>(1+\kappa)G^*_{n,p}\right)\leq \frac{2p}{\kappa n}.
 \end{align}
\end{lem}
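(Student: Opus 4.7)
The plan is to imitate the proof of Lemma~\ref{lem:Markov2} and reduce the concentration statement to a Markov inequality applied to the non-negative random variable $Y = Y_L \df \GPl/G^*_{n,p} - 1$, which is non-negative because $\GPl \geq G^*_{n,p}$ holds surely (as noted just before the lemma). Markov's inequality gives
\begin{align*}
\Pr\left(\GPl \geq (1+\kappa) G^*_{n,p}\right) = \Pr(Y \geq \kappa) \leq \frac{\EE[Y]}{\kappa},
\end{align*}
so the only task is to upper-bound $\EE[Y] = \EE[\GPl]/G^*_{n,p} - 1$ by $2p/n$.

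To that end, I would combine Theorem~\ref{thm:excpectedNpMbound} with the explicit formula $G^*_{n,p} = 1/((n+p)\Vp^{p/n})$. Dividing these yields
\begin{align*}
\frac{\EE[\GPl]}{G^*_{n,p}} \leq \frac{n+p}{n}\cdot\frac{1}{\sinc(p/n)} = \left(1+\frac{p}{n}\right)\cdot\frac{1}{\sinc(p/n)},
\end{align*}
after which the $V_{p,n}^{p/n}$ factors cancel. The hypothesis $n \geq 4p$ forces $p/n \leq 1/4 < 1/\pi$, placing us in the regime where the elementary estimate $1/\sinc(x) \leq 1 + 2x^2$ (used in Lemma~\ref{lem:Markov2}, and proved there from $\sin(x)\geq x(1-x^2/6)$) is valid with $x = p/n$.

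Plugging this in produces
\begin{align*}
\EE[Y] \leq \left(1+\frac{p}{n}\right)\left(1+\frac{2p^2}{n^2}\right) - 1 = \frac{p}{n}\left(1 + \frac{2p}{n} + \frac{2p^2}{n^2}\right),
\end{align*}
and the assumption $n \geq 4p$ makes the parenthesized factor at most $1 + 1/2 + 1/8 < 2$, giving $\EE[Y] \leq 2p/n$. Substituting into the Markov bound yields the claimed inequality. There is no real obstacle here; the only thing to be careful about is verifying that the hypothesis $n \geq 4p$ is strong enough both to justify the Taylor-type bound on $1/\sinc$ and to absorb the lower-order terms into the constant $2$. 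If a sharper constant were desired one would have to refine the last step, but as stated $n \geq 4p$ is comfortably sufficient.
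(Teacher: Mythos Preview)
Your proposal is correct and essentially identical to the paper's own proof: both apply Markov's inequality to $Y=\GPl/G^*_{n,p}-1$, combine Theorem~\ref{thm:excpectedNpMbound} with the formula for $G^*_{n,p}$, use the estimate $1/\sinc(x)\leq 1+2x^2$ (valid since $n\geq 4p$ gives $p/n\leq 1/4<1/\pi$), and then bound the resulting expression $(p/n)(1+2p/n+2p^2/n^2)$ by $2p/n$.
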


\begin{proof}
Define the random variable $Y=\frac{G^{(p)}_L}{G_{n,p}^*}-1$, which is non-negative with probability $1$.
By Markov's inequality, we have    
\begin{align}
\Pr\left(G^{(p)}_L\geq(1+\kappa) G^*_{{n,p}} \right)=\Pr(Y\geq \kappa)\leq\frac{\EE[Y]}{\kappa}.    
\end{align}
Using $\sin(x)\geq x(1-\frac{x^2}{6})$ for $0<x<1$, we have that for any $0<x\leq \frac{1}{\pi}$ it holds that
\begin{align}
\frac{1}{\sinc(x)}\leq\frac{\pi x}{\pi x(1-\frac{(\pi x)^2}{6})}\leq 1+\frac{1}{5}(\pi x)^2<1+2 x^2.   
\end{align}
Applying Theorem~\ref{thm:excpectedNpMbound} we obtain (for $n\geq 4 p$ such that $\frac{p}{n}<\frac{1}{\pi}$)
\begin{align}
\EE[Y]&\leq \frac{n+p}{n}\frac{1}{\sinc(p/n)}-1\\
&\leq \left(1+\frac{p}{n}\right)\left(1+2\left(\frac{p}{n}\right)^2\right) -1\\
&=\frac{p}{n}+2\frac{p^2}{n^2}+2\frac{p^3}{n^3}<2\frac{p}{n},
\end{align}
which yields the claimed result.
\end{proof}

\section{Proofs of claims for linear codes}
\label{app:discreteCDFproperties}

\begin{proof}[Proof of Proposition~\ref{prop:LC_fundviacdf}]
The first item follows since 
\begin{align}
\rpack(\CC)&=\max\{r~:~\BB_{n,r}\subset\VC\}\\
&=\max\{r~:~Q_{\CC}(r)=2^{k-n} V_{n,r}\},
\end{align}
where in the second equality we have used~\eqref{eq:QasIntersecVol}. The second item follows since
\begin{align}
 \rcov(\CC)=\max_{x\in\F_2^n}\min_{y\in\CC}d_H(x,y)=\max_{x\in \VC}|x|.   
\end{align}
For the third item, we write
\begin{align}
\EE|U_L|&=\sum_{r=0}^n \Pr(|U_\CC|> r)\\
&= \sum_{r=0}^n 1-\Pr(|U_\CC|\leq r)\\
&=\sum_{r=0}^n 1-Q_\CC(r).   
\end{align}
The fourth item follows directly from Proposition~\ref{prop:ProbAsCQFexpt_dicrtete}.
\end{proof}


\begin{proof}[Proof of Proposition~\ref{prop:ProbAsCQFexpt_dicrtete}]
Let $\varphi_p(r)=p^r(1-p)^{n-r}$, and not that $\varphi_p(r)-\varphi_{p}(r+1)=\varphi_p(r)\cdot\frac{1-2p}{1-p}$.
We have
\begin{align}
&\Pr(Z\in \KK)=\sum_{r=0}^n \Pr\left(Z\in (\KK\cap\SSS_{n,r})\right)\\
&=\sum_{r=0}^n \varphi_p(r) \left|\KK\cap\SSS_{n,r}\right|\\
&=\sum_{r=0}^n \varphi_p(r) \left(\left|\KK\cap\BB_{n,r}\right|-\left|\KK\cap\BB_{n,r-1}\right|\right)\\
&=\varphi_p(n)\left|\KK\cap\BB_{n,n}\right|+\sum_{r=0}^{n-1} (\varphi_p(r)-\varphi_p(r+1)) \left|\KK\cap\BB_{n,r}\right|\\
&=\left(\frac{p}{1-p}+\frac{1-2p}{1-p}\right)\varphi_p(n)\left|\KK\cap\BB_{n,n}\right|\nonumber\\
&+\frac{1-2p}{1-p}\sum_{r=0}^{n-1} \varphi_p(r)\left|\KK\cap\BB_{n,r}\right|\\
&=|\KK|\cdot\frac{p}{1-p}\cdot p^n+\frac{1-2p}{1-p}\sum_{r=0}^{n} \varphi_p(r)\left|\KK\cap\BB_{n,r}\right|\\
&=|\KK|\cdot\frac{p}{1-p}\cdot p^n\nonumber\\
&+|\KK|\cdot\frac{1-2p}{1-p}\sum_{r=0}^{n} {{n}\choose{r}} p^{r}(1-p)^{n-r}\frac{\left|\KK\cap\BB_{n,r}\right|/|\KK|}{{{n}\choose{r}}}\\
&=|\KK|\left(\frac{p}{1-p}\cdot p^n+ \frac{1-2p}{1-p}\sum_{r=0}^{n} {{n}\choose{r}} p^{r}(1-p)^{n-r}\frac{Q_\KK(r)}{{{n}\choose{r}}}\right)\\
&=|\KK|\left(\frac{p}{1-p}\cdot p^n+ \frac{1-2p}{1-p}\EE\left[\frac{Q_\KK(|Z|)}{{{n}\choose{|Z|}}}\right]\right),
\end{align}
as claimed.
\end{proof}

\begin{proof}[Proof of Theorem~\ref{thm:DcConstantGap}]
Let $m<n/2-1$ be a positive integer, and denote
\begin{align}
\beta_m=\frac{n-m}{m+1}>1.    
\end{align}
We claim that 
\begin{align}
\gamma_r=\frac{{{n}\choose{r+1}}}{V_{n,r}}\geq \beta_m-1,~~~ \forall 0\leq r\leq m.   
\label{eq:gammainduc}
\end{align}
We show this by induction. For $r=0$, we have $\gamma_0=n>\beta_m-1$. Assume $\gamma_\ell>\beta_m-1$ for all $0\leq \ell\leq r-1<m$. We have that
\begin{align}
\gamma_{r}&=\frac{{{n}\choose{r+1}}}{V_{n,r}}=\frac{{{n}\choose{r+1}}}{{{n}\choose{r}}+V_{n,r-1}}=\frac{{{n}\choose{r+1}}}{{{n}\choose{r}}\left(1+\frac{1}{\gamma_{r-1}}\right)}\nonumber\\
&=\frac{n-r}{r+1}\frac{\gamma_{r-1}}{1+\gamma_{r-1}}\geq \beta_m\cdot \frac{\gamma_{r-1}}{1+\gamma_{r-1}}\geq 1-\beta_m,   
\end{align}
where the last inequality follows since $t\mapsto\frac{t}{1+t}$ is monotonically increasing, and the induction assumption that $\gamma_{r-1}>1-\beta_m$. 

Using~\eqref{eq:gammainduc}, we have that
\begin{align}
\frac{x_{r+1}}{x_r}&=\frac{V_{n,r+1}}{V_{n,r}}\\
&=1+\frac{{{n}\choose{r+1}}}{V_{n,r}}\\
&=1+\gamma_r\\
&\geq \beta_m,~~~  \forall 0<r\leq m.
\label{eq:xrratiobound}
\end{align}

By~\eqref{eq:Deltank}, for any $\reff(n,k)<m<n/2-1$ we can write
\begin{align}
\Delta(n,k)&\df S_1+S_{2a}+S_{2b},
\end{align}
where
\begin{align}
 S_1&=\sum_{r=0}^{\reff(n,k)-1}\frac{x_r^2}{1+x_r}\leq \sum_{r=0}^{\reff(n,k)-1} x_r\\
 S_{2a}&=\sum_{r=\reff(n,k)}^m \frac{1}{1+x_r}\leq \sum_{r=\reff(n,k)}^m x_r^{-1}\\
 S_{2b}&=\sum_{r=1+m}^n \frac{1}{1+x_r}\leq n \cdot x^{-1}_{m}.
\end{align}
Note that by definition of $\reff(n,k)$, we have that $x_r< 1$ for all $r\leq \reff(n,k)-1$, and that $x_r\geq 1$ for all $r\geq  \reff(n,k)$. By~\eqref{eq:xrratiobound} we therefore have
\begin{align}
x_r&\leq \beta_m^{-(\reff(n,k)-1-r)},~~~\forall 0\leq r\leq \reff(n,k)-1,\\
x^{-1}_r&\leq \beta_m^{-(r-\reff(n,k))},~~~\forall  \reff(n,k)\leq r\leq m.
\end{align}
It therefore follows that
\begin{align}
S_1&\leq\sum_{r=0}^{\reff(n,k)-1} x_r\leq \sum_{r=0}^{\reff(n,k)-1} \beta_m^{-(\reff(n,k)-r)}<\sum_{j=0}^\infty\beta_m^{-j}\nonumber\\
&=\frac{1}{1-\beta_m^{-1}}\\
S_{2a}&\leq\sum_{r=\reff(n,k)}^m x^{-1}_r\leq \sum_{r=\reff(n,k)}^m \beta_m^{-(r-\reff(n,k))}<\sum_{j=1}^\infty\beta_m^{-j}\nonumber\\
&\leq\frac{1}{1-\beta_m^{-1}}.
\end{align}
Thus,
\begin{align}
    \Delta(n,k)\leq 2\frac{\beta_m}{\beta_m-1}+n\cdot  \beta_m^{-(m-\reff(n,k))}.
    \label{eq:DeltaInter}
\end{align}
Take $m=\left\lceil \frac{n}{2}\left(h_2^{-1}(1-\alpha)+\frac{1}{2} \right) \right\rceil$, where $h_2(p)=-p\log_2(p)-(1-p)\log_2(1-p)$, and $h_2^{-1}(\cdot)$ is its inverse restricted to $[0,1/2)$. We may assume without loss of generality that $n$ is large enough such that $\kappa\df\frac{m+1}{n}<1/2$, so that $\beta_m\geq \frac{1-\kappa}{\kappa}>1$, and $m-\reff(n,k)>\left\lceil\frac{\log n}{\log \beta_m}\right\rceil$. Substituting this into~\eqref{eq:DeltaInter}, we obtain
\begin{align}
\Delta(n,k)\leq 2\frac{1-\kappa}{1-2\kappa}+1,    
\end{align}
as claimed.
\end{proof}

\begin{proof}[Proof of Lemma~\ref{lem:sphereDistortion}]
Let $X\sim\mathrm{Ber}^{\otimes n}(1/2)$ and let $\CC=\{c_1,\ldots,c_{2^k}\}\subset\{0,1\}^n$ be the codebook corresponding to the image of the decoder $g(\cdot)$. Define the random variable
\begin{align}
Y=d_H(X,\CC)=\min_{c\in\CC}d_H(X,c).    
\end{align}
Since given $\CC$ the optimal encoder $f$ maps each $X$ to the nearest point in $\CC$, we have that the average distortion of the code is at least
\begin{align}
D=\EE[Y]=\sum_{y=0}^n\Pr(Y>y)=\sum_{y=0}^n 1-\Pr(Y\leq y). 
\label{eq:DeqGenCode}
\end{align}
Let
\begin{align}
\A_y=\{x\in\F_2^n~:~d(x,\CC)\leq y\}.   
\end{align}
We clearly have that $\A_y\subset \cup_{c\in\CC}\{c+B_{n,y}\}$, and therefore $|\A_y|\leq \min\{|\CC|\cdot V_{n,y},2^n\}$, and consequently,
\begin{align}
\Pr(Y\leq y)=\Pr(X\in \A_y)=2^{-n}|\A_y|\leq \min\{2^{k-n}V_{n,y},1\}.   \end{align}
Substituting this into~\eqref{eq:DeqGenCode}, and recalling~\eqref{eq:DnkDef}, yields the claimed result.
\end{proof}

\bibliographystyle{IEEEtran}
\bibliography{elonbib}

\end{document}